\documentclass{article}

\PassOptionsToPackage{numbers, compress}{natbib}
\usepackage[main,final]{neurips_2026}

\usepackage{times}
\usepackage{soul}
\usepackage{url}
\usepackage[hidelinks]{hyperref}
\usepackage[utf8]{inputenc}
\usepackage[small]{caption}
\usepackage{graphicx}
\usepackage{amsmath}
\usepackage{amsthm}
\usepackage{booktabs}
\usepackage{algorithm}
\usepackage{algorithmic}
\usepackage{amssymb}
\usepackage{tabularx}
\usepackage{array}
\usepackage{makecell}
\usepackage{xcolor}
\usepackage{subcaption}
\usepackage{multirow}

\newtheorem{theorem}{Theorem}
\newtheorem{lemma}{Lemma}
\newtheorem{proposition}{Proposition}

\newtheorem{remark}{Remark}

\newtheorem{corollary}{Corollary}

\title{LLM eBay: A Simple and Truthful Marketplace for Fine-Tuned LLMs Trading}

\author{
    Jiachen Shen$^{1}$ \quad Xingke Yang$^{1}$ \quad Hui Zhong$^{2}$ \quad Aohan Li$^{3}$ \\
    \textbf{Tomoaki Ohtsuki$^{4}$ \quad Xin Fu$^{1}$ \quad Miao Pan$^{1}$ \quad Zhu Han$^{1}$} \\
    $^{1}$University of Houston \quad $^{2}$Miami University \\
    $^{3}$The University of Electro-Communications \quad $^{4}$Keio University \\
    {\small \texttt{jshen28@cougarnet.uh.edu}, \texttt{xyang56@cougarnet.uh.edu}, \texttt{zhongh7@miamioh.edu},} \\
    {\small \texttt{aohanli@ieee.org}, \texttt{ohtsuki@ics.keio.ac.jp}, \texttt{xfu8@central.uh.edu},} \\
    {\small \texttt{mpan2@uh.edu}, \texttt{zhan2@uh.edu}}
}

\begin{document}

\maketitle
\begin{abstract}
Personalized fine-tuning of Large Language Models (LLMs) enhances user experience but incurs substantial computational costs. Recent studies show that initializing training from a style-similar fine-tuned model reduces adaptation overhead, turning such checkpoints into reusable, tradable assets. A marketplace for these checkpoints, however, faces information asymmetry: sellers may misreport a checkpoint's style, while buyers cannot verify its usefulness before purchase, which invites adverse selection. We introduce LLM eBay, a simple single-round marketplace for trading fine-tuned LLMs. Sellers are paid per purchase at a posted price, and a truncated quadratic proper scoring rule adjusts each payment according to how well the seller's style report agrees with a style signal that the platform measures by generating text on hidden prompts. We show that the score elicits the verifier's expected signal, characterize exactly when truthful reporting is a seller best response under per-purchase demand, and bound the downside risk from noisy verification. In an end-to-end study with real LoRA checkpoints on two base models and three buyer tasks, LLM eBay reduces measured convergence steps by 8.5\% relative to retrieval over self-reported style (95\% CI [1.0\%, 15.9\%]). In model-based market simulations built on a measured style-to-cost relationship, it cuts style-claim error by 96.5\% (from 0.424 to 0.015), recovers 96.3\% of Oracle welfare, saves 72.4\% of the cost of fine-tuning from the base model, and keeps all providers profitable in the default setting.
\end{abstract}

\section{Introduction}
\label{sec:intro}

Large language models (LLMs) have demonstrated remarkable capabilities in natural language understanding and generation, enabling a wide range of applications such as conversation, content creation, and AI assistants~\cite{ren2024representation,ye2024cross,brown2020language}. As reliance on these models grows, users increasingly expect personalized outputs, including writing style, tone, and domain-specific terminology. In practice, personalization is often achieved by fine-tuning a pre-trained LLM on private user data to better align with individual preferences~\cite{chung2024scaling,huang2023towards}. However, personalized fine-tuning often comes with high compute cost: even when employing parameter-efficient techniques such as LoRA~\cite{hulora} or QST~\cite{zhang2024quantized}, the training process can still incur substantial monetary expenses~\cite{li2025mobillm,yang2025pae,xu2023fwdllm}. For example, A100 GPUs are typically priced at \$1--\$4 per GPU-hour on mainstream cloud services, making a single mid-scale fine-tuning run cost tens to hundreds of dollars.

Recent research indicates that personalized fine-tuning does not have to start from scratch~\cite{wang2025never,lin2025efficient}. If a user can identify a fine-tuned model with a language style similar to their own as a training starting point, and then perform further fine-tuning on that foundation, the required training steps can be greatly reduced, leading to lower computational and energy costs. Under high style similarity, the training cost can be reduced by up to 83\%~\cite{wang2025never}. This demonstrates that fine-tuned models are not merely one-time products but possess reusable value, thereby potentially becoming tradable digital assets. As the personalization ecosystem matures, it is natural to expect a large pool of fine-tuned models trained by individuals and teams for diverse language styles to emerge. Once fine-tuned model reuse yields substantial cost savings, fine-tuned model sharing may move beyond open and free distribution toward a model trading market: sellers upload and price their fine-tuned models to recover training cost, buyers select models that best match their needs under budget constraints to obtain a better-matched starting point, and the platform facilitates listing, matching, and transactions.

However, once personalized models enter a trading market, the challenge is no longer merely algorithmic similarity matching, but a market mechanism problem. In such a market, participants have misaligned objectives and face substantial information asymmetry~\cite{akerlof1970market}: sellers may exaggerate a fine-tuned model's style and capability to maximize profit. However, buyers cannot easily evaluate the model's style compatibility or expected benefits before purchase, leading to high uncertainty and trial-and-error costs. This information gap can induce adverse selection
~\cite{spence1973job}, where low-quality or misrepresented models win transactions through aggressive pricing and gradually crowd out high-quality supply, eroding trust and causing the market to unravel. Therefore, a platform that only provides listings and matchmaking is unlikely to remain viable in the long run. A sustainable marketplace must incorporate verification, reporting, and incentive mechanisms that reward truthful disclosure and keep high-quality sellers willing to participate~\cite{myerson1981optimal}.

Motivated by the emerging market trend of fine-tuned model reuse, we propose LLM eBay, a simple and truthful marketplace for trading fine-tuned LLMs. Like a general-purpose online marketplace, it lets sellers list checkpoints at posted prices and lets buyers pick the listing that best serves them; unlike one, it must handle listings whose key attribute, the checkpoint's language style, the seller can misreport. We design a three-party workflow with clear responsibilities for sellers, buyers, and the platform. The platform checks the consistency between a seller's declared style and the style of the checkpoint's generated outputs, and publishes a matching signal that combines the two. Each seller is paid per completed purchase: a posted price plus a truncated quadratic score of the report against the verified signal. ``Truthful'' has a precise meaning here. The score elicits the verifier's expected signal, and truthful reporting is a seller best response exactly when the extra profit from attracting buyers with a shaded report is no larger than the resulting loss in scoring payments (Theorem~\ref{thm:best_response}). The truncation floor bounds the loss that any single noisy verification draw can impose, so honest sellers remain willing to participate.

Our salient contributions are summarized as follows:

\begin{itemize}

    \item We propose LLM eBay, a single-round posted-price marketplace for trading fine-tuned LLM checkpoints. It formalizes the transaction workflow and responsibility boundaries among sellers, buyers, and the platform, including per-purchase settlement and the buyer-facing matching signal.

    \item We analyze the full per-purchase seller utility. The untruncated quadratic score elicits the verifier's expected signal, truthful reporting is a best response if and only if the demand gain from any misreport is at most the resulting scoring loss, buyers' posted-price choice is weakly dominant, and the truncation floor caps the downside of noisy verification.

    \item We evaluate LLM eBay with an end-to-end experiment on real LoRA checkpoints, where it needs 8.5\% fewer measured convergence steps than Retrieval-Only, and with model-based market simulations, embedding-grounded reuse experiments, and stress tests. In simulation it reduces style-claim error by 96.5\%, recovers 96.3\% of Oracle welfare, and retains full provider participation under noisy verification.

\end{itemize}

\section{Background and Motivation}
\label{sec:related_work}

\subsection{Language-Style Personalization for LLMs}

Language style exhibits significant personalized differences in writing and conversation scenarios, yet ``style'' itself remains a subjective and abstract concept~\cite{wegmann2022same,patel2023learning}. Recent studies~\cite{huertas2024understanding,zhong2022describing,zhong2023goal} have proposed language style embedding models based on representation learning. By mapping text to continuous vector spaces, these models transform stylistic features into measurable representations. Texts with similar styles typically occupy closer positions in the embedding space, making the distance between style vectors (e.g., cosine similarity) an effective metric for assessing stylistic similarity and dissimilarity. Building on this vectorized representation, recent studies  ~\cite{wang2025never,lin2025efficient} have observed that secondary fine-tuning LLM starting from a style-similar fine-tuned model can significantly reduce training cost. As shown in Fig. \ref{fig:cost_validation}, the number of training steps decreases as the cosine distance between style vectors becomes smaller.

Motivated by this observation, we recognize that fine-tuned models possess clear tradable value: for a new model buyer, purchasing a stylistically similar model as a starting point is a rational choice as long as the purchase price is lower than the training expenses it can save. As the fine-tuned model ecosystem continues to expand, a trading platform supporting model transactions will simultaneously provide sellers with avenues to recoup training costs, offer buyers lower personalization expenses, and significantly reduce the ecosystem's overall societal training costs.

\subsection{Market Mechanism Challenges}

The exchange of fine-tuned models is fundamentally constrained by classic economic frictions, most notably information asymmetry and the resulting risk of adverse selection. In this market, sellers hold private information regarding a model's training convergence and style alignment, whereas buyers lack cost-effective means to verify these attributes ex-ante. According to classic market theory~\cite{akerlof1970market}, this information gap causes buyers to undervalue high-quality assets, forcing honest sellers to exit the market. Eventually, this dynamic erodes trust and leads to a Market for Lemons scenario where only low-quality models remain.

Current market solutions fail to resolve these incentive misalignments. Although mechanism design for digital goods is well-established~\cite{myerson1981optimal,goldberg2001competitive}, these frameworks generally assume observable quality, focusing on auction revenue rather than the strategic manipulation of quality signals. Similarly, existing models for data valuation~\cite{agarwal2019marketplace} or API pricing~\cite{chen2019fimod} do not account for the specific risks of LLM adaptation. Current practical solutions exacerbate this issue by triggering Goodhart's Law: sellers are incentivized to manipulate their reported style representations to artificially minimize the vector distance to buyer demands, rather than genuinely improving actual style alignment~\cite{dehghani2021benchmark}. A sustainable marketplace thus requires a new mechanism capable of eliciting truthful reporting despite prohibitive verification costs.

\section{LLM eBay: Model and Mechanism}
\label{sec:model}

LLM eBay is a three-party marketplace for trading fine-tuned LLMs. The market contains sellers $i \in \mathcal{I} = \{1, \dots, N\}$, buyers $k \in \mathcal{K} = \{1, \dots, K\}$, and a platform that facilitates matching and verification. The core value proposition is compute efficiency via transfer learning: buyers seek fine-tuned models that align with a target style, and thereby save on downstream adaptation cost.

\subsection{Marketplace Setup and Style Verification}
\label{subsec:setup_verif}

All parties employ a unified style embedding model $\Phi$ (e.g., STAR~\cite{huertas2024understanding}) to map text datasets to normalized vectors in $\mathbb{R}^d$.

\paragraph{Sellers.} Each seller $i$ owns a fine-tuned checkpoint $\mathcal{M}_i$ trained on private data $\mathcal{D}_i$. The true style vector is $\mathbf{v}_i = \Phi(\mathcal{D}_i)$. Because $\mathcal{D}_i$ is private, the seller only reports a claimed vector $\hat{\mathbf{v}}_i$, which may or may not equal $\mathbf{v}_i$. The seller's incentive is to attract as many buyers as possible, which can tempt misreporting: shifting the claim toward popular buyer demand regions even when the model's actual style does not match.

\paragraph{Buyers.} Each buyer $k$ has a target personalization dataset $\mathcal{D}_k$ with style vector $\mathbf{v}_k = \Phi(\mathcal{D}_k)$. The buyer seeks a checkpoint whose style is close to $\mathbf{v}_k$, since a closer starting point means fewer training steps and lower cost. The buyer observes only claimed vectors and verified signals, never the seller's private data.

\begin{figure}[t]
    \centering
    \includegraphics[width=0.88\linewidth]{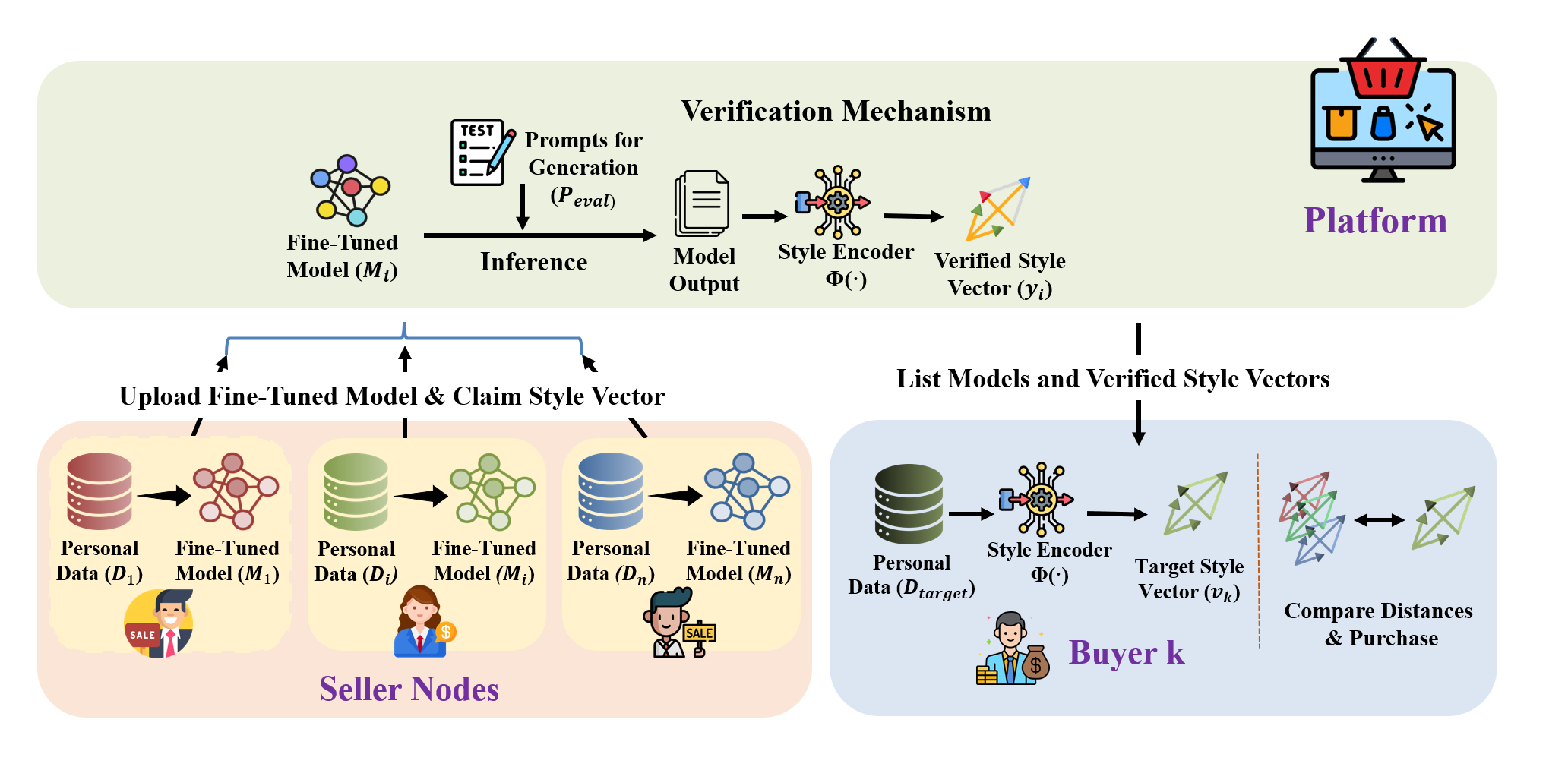}
    \caption{Overview of LLM eBay. Sellers submit fine-tuned checkpoints and claimed style vectors; the platform verifies claims via generation-based style embedding; buyers select verified checkpoints to reduce downstream fine-tuning cost.}
    \label{fig:system_overview}
\end{figure}

\paragraph{Platform verification.} To verify claims without accessing private training data, the platform queries $\mathcal{M}_i$ on a hidden prompt set $\mathcal{P}_{\text{eval}}$, embeds the generated outputs, and obtains a noisy verified signal:
\begin{equation}
    \mathbf{y}_i = \Phi(\mathcal{D}_i^{\text{syn}}) = \mathbf{v}_i + \epsilon_i, \qquad \epsilon_i \sim \mathcal{N}(0, \sigma^2 I).
\end{equation}
The noise $\epsilon_i$ arises from two sources: the finite size of $\mathcal{P}_{\text{eval}}$ and the stochastic nature of LLM text generation. The mechanism proceeds through listing, verification/settlement, and buyer matching, as shown in Fig.~\ref{fig:system_overview}. The full timing protocol and information structure are given in Appendix~\ref{app:formal_model}.

\subsection{Compute Cost Model and Buyer Utility}
\label{subsec:cost_utility}

The economic logic of the market rests on training cost reduction. We define a training cost function proportional to the squared Euclidean distance in style space:
\begin{equation} \label{eq:cost_function}
    \text{Cost}(\mathbf{v}_{a}, \mathbf{v}_{b}) = \alpha \cdot \| \mathbf{v}_{a} - \mathbf{v}_{b} \|^2
\end{equation}
where $\alpha$ is a cost scaling factor.

\begin{figure}[t]
    \centering
    \includegraphics[width=0.72\linewidth]{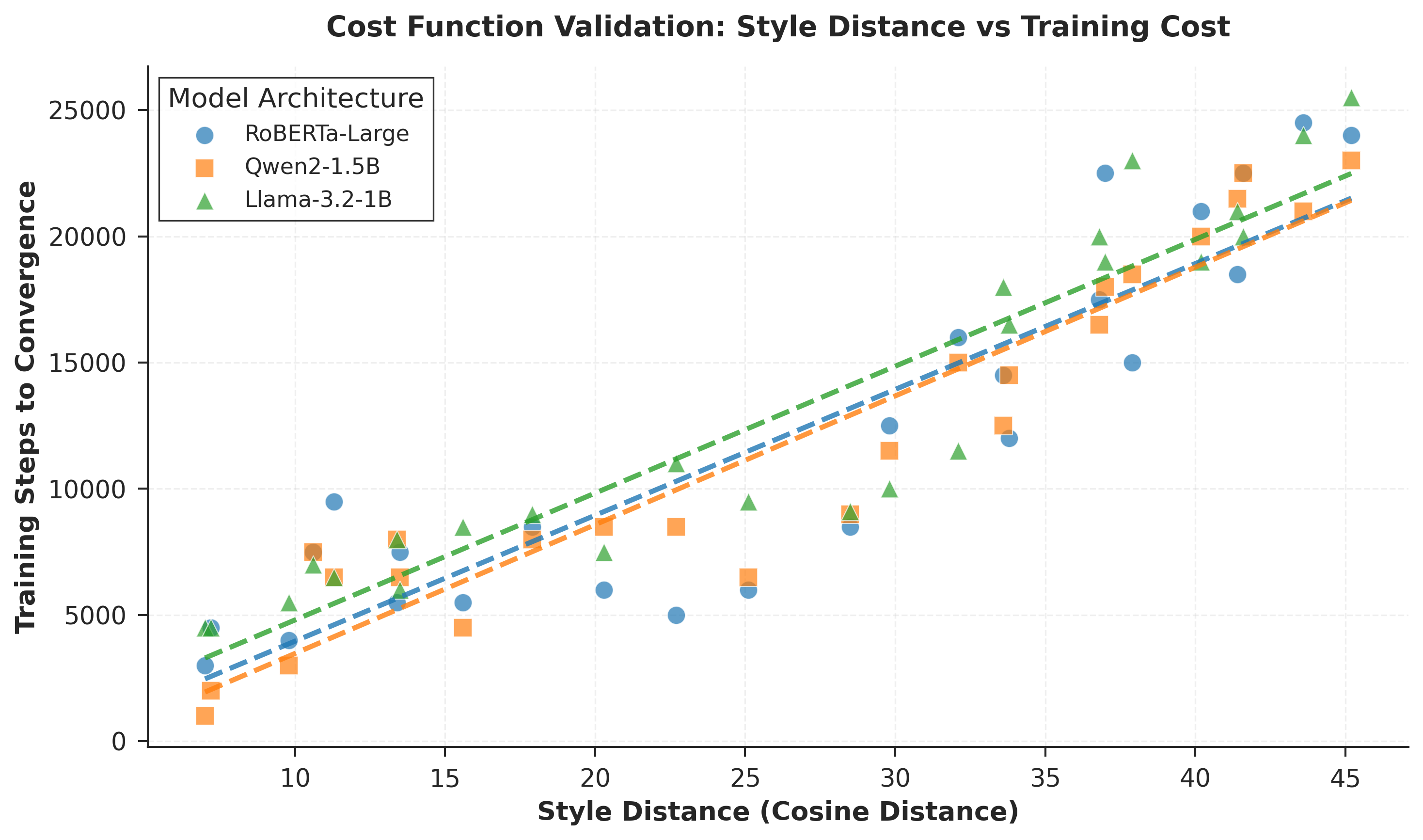}
    \caption{Empirical support for the distance-based cost model. Across three pretrained models (RoBERTa, Qwen, Llama), larger style distance requires more fine-tuning steps to convergence.}
    \label{fig:cost_validation}
\end{figure}

To validate this assumption, we fine-tuned three foundation models (RoBERTa-large~\cite{liu2019roberta}, Qwen2-1.5B~\cite{yang2024qwen2}, and Llama-3.2-1B~\cite{grattafiori2024llama3}) from starting points at varying style distances. As shown in Fig.~\ref{fig:cost_validation}, convergence steps increase consistently with squared style distance across all three architectures. Detailed protocols, convergence criteria, and regression statistics ($R^2 \ge 0.91$ for all three models) are in Appendix~\ref{app:cost_model}.

\paragraph{Buyer matching signal and demand.} For each listing, the platform publishes the equal-weight matching signal
\begin{equation} \label{eq:matching_signal}
    \mathbf{z}_i = \tfrac{1}{2}\left(\hat{\mathbf{v}}_i + \mathbf{y}_i\right).
\end{equation}
A buyer $k$ evaluates a transaction by its estimated net compute savings. Without trading, the buyer incurs the full cost $C_{\text{base}} = \text{Cost}(\mathbf{v}_{\text{base}}, \mathbf{v}_k)$ of fine-tuning from the pretrained base model. Purchasing checkpoint $i$ at posted price $p_i$ gives estimated utility
\begin{equation} \label{eq:buyer_utility}
    U_k(i) = \max \left( 0, \; C_{\text{base}} - p_i - \text{Cost}(\mathbf{z}_i, \mathbf{v}_k) \right),
\end{equation}
and the buyer purchases the listing with the largest positive $U_k(i)$. We write $q_i$ for the resulting number of purchases of listing $i$. The report $\hat{\mathbf{v}}_i$ therefore affects demand, the verified signal $\mathbf{y}_i$ limits how far an inaccurate report can move a listing, and the score (Section~\ref{subsec:payment}) separately affects settlement.

\paragraph{Seller utility.} Each completed purchase pays the seller the transfer $T_i$ of Eq.~\eqref{eq:payment} and costs the seller $c_i$, the per-purchase fulfillment cost. The seller's expected utility is
\begin{equation} \label{eq:seller_utility_main}
    U_i^S(\hat{\mathbf{v}}_i) = \mathbb{E}\left[\, q_i \left( T_i(\hat{\mathbf{v}}_i, \mathbf{y}_i) - c_i \right) \right],
\end{equation}
so a report can raise profit by attracting buyers (through $q_i$) or lower it through the score (through $T_i$). Full utility definitions and the welfare formulation are in Appendix~\ref{app:formal_model}.

\subsection{Posted-Price Payment with Truncated Proper Scoring}
\label{subsec:payment}

For each completed purchase, the transfer $T_i$ combines the posted price with a price-scaled, performance-linked adjustment given by a truncated quadratic scoring rule (QSR)~\cite{murphy1973probability_score_partition}:
\begin{equation} \label{eq:payment}
    T_i = p_i + \lambda_{\text{scale}}\, p_i \cdot \max \left( 2 \langle \hat{\mathbf{v}}_i, \mathbf{y}_i \rangle - \|\hat{\mathbf{v}}_i\|^2, \; -\delta \right)
\end{equation}
The posted price $p_i$ is the seller's base compensation for each sale of a reusable checkpoint. The quadratic score $S(\hat{\mathbf{v}}_i, \mathbf{y}_i) = 2\langle \hat{\mathbf{v}}_i, \mathbf{y}_i \rangle - \|\hat{\mathbf{v}}_i\|^2$ rewards consistency between the claimed style $\hat{\mathbf{v}}_i$ and the verified signal $\mathbf{y}_i$. Under unbiased verification, $\mathbb{E}[\mathbf{y}_i] = \mathbf{v}_i$, so the expected score is strictly maximized at $\hat{\mathbf{v}}_i = \mathbf{v}_i$ (Appendix~\ref{app:proofs}, Lemma~\ref{lem:properness}).

The quadratic form is a natural choice for this setting. It is one of the simplest proper scoring rules for vector-valued reports: the expected score equals $\|\mathbf{v}_i\|^2 - \|\hat{\mathbf{v}}_i - \mathbf{v}_i\|^2$, so any misreport incurs a penalty exactly proportional to the squared deviation from truth. Unlike logarithmic scoring rules, the quadratic rule assigns bounded scores and has a closed-form expression that is easy to implement.

The floor $-\delta$ bounds the seller's per-purchase scoring loss from any single noisy verification round. Without this floor, a seller who reports truthfully can still receive a strongly negative score if the random draw of $\epsilon_i$ happens to be large. The scale $\lambda_{\text{scale}}$ controls how strongly the score moves the payment relative to the posted price.

\subsection{Theoretical Guarantees}
\label{subsec:theory}

The following results state the incentive and stability properties of LLM eBay. The statements below are informal; formal statements with all assumptions, and their proofs, are in Appendix~\ref{app:proofs}.

\begin{theorem}[Buyer-Side Dominant Optimality, informal]
\label{thm:dsic}
    Given posted prices and the published signals $\mathbf{z}_i$, selecting the checkpoint that maximizes $U_k(i)$ in Eq.~\eqref{eq:buyer_utility} weakly dominates any other buyer action.
\end{theorem}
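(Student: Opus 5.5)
The argument is a direct comparison of payoffs, and I would organize it around what the buyer's action can and cannot influence. The first step is to fix the buyer's decision environment. The posted prices $p_i$ and the published signals $\mathbf{z}_i$ are determined by seller reports and platform verification, and these happen before buyer matching in the timing protocol of Appendix~\ref{app:formal_model}. Consequently buyer $k$'s action cannot change any $p_i$ or $\mathbf{z}_i$. The buyer's action set is also simple: choose one listing $i\in\mathcal{I}$, or abstain (fine-tune from the base model at cost $C_{\text{base}}$). I would normalize abstention to utility $0$, consistent with the $\max(0,\cdot)$ in Eq.~\eqref{eq:buyer_utility}.

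The second step is to show that the buyer's payoff from any action depends only on that action and on the fixed quantities $(p_i,\mathbf{z}_i)_{i}$, and not on what other buyers do. This holds because checkpoints are reusable digital goods: there is no capacity constraint, and one buyer's purchase does not alter another buyer's price or signal. The seller-side score $T_i$ changes the seller's transfer but not the buyer's price $p_i$, so it never enters $U_k$. I expect this to be the main obstacle, in the sense that the formal statement must assume it explicitly. I would state it as an assumption of non-rivalry: unlimited copies, prices fixed per listing, and no congestion. If any of these fails, others' actions would enter $U_k$, and dominance could break.

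The third step is the dominance argument itself. Take any profile $a_{-k}$ of other buyers' actions and any alternative action $a$ for buyer $k$. By the second step, the buyer's estimated utility from $a$ is $U_k(a)$ regardless of $a_{-k}$. Let $i^\star\in\arg\max_i U_k(i)$, with abstention chosen if the maximum is $0$. Then $U_k(i^\star)\ge U_k(a)$ holds pointwise for every $a_{-k}$, which is exactly weak dominance. Ties are broken arbitrarily without affecting the result.

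Finally, I would note the scope of the claim. Optimality is with respect to the \emph{estimated} utility computed from $\mathbf{z}_i$, which is the buyer's information set. If one wants the statement in terms of expected true savings, I would add a remark. If the buyer's posterior mean cost equals $\text{Cost}(\mathbf{z}_i,\mathbf{v}_k)$ up to a listing-independent constant, then the same maximizer is chosen. This holds, for example, when every listing has equal posterior variance, since $\mathbb{E}\|\mathbf{v}_i-\mathbf{v}_k\|^2$ splits into a squared-mean term plus a trace term that is the same for all listings.
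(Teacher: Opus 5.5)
Your proposal is correct and follows essentially the same route as the paper's proof: under non-rival access, $U_k(i)$ depends only on buyer $k$'s own choice and the fixed $(p_i,\mathbf{z}_i)_{i\in\mathcal{I}}$, so the maximizer (or opting out when the maximum is $0$) is weakly better than any alternative against every profile of other buyers' actions. Your closing remark on expected true savings lies outside the statement, and as written it also needs the buyer's posterior mean of $\mathbf{v}_i$ to equal $\mathbf{z}_i$, since equal posterior variances alone do not suffice; even then, the listing-independent constant can change the opt-out decision, though not the ranking of listings.
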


Buyers do not report private values; they choose among listings, so this is an optimality property of the posted-price choice problem rather than an auction-style incentive-compatibility claim. The harder question is whether sellers, who choose what to report before verification, have an incentive to be truthful. Let $\mu_i = \mathbb{E}[\mathbf{y}_i \mid \mathcal{M}_i]$ be the verifier signal the seller expects and $r_i^\star$ the allowed report closest to $\mu_i$; under unbiased verification, $r_i^\star = \mathbf{v}_i$.

\begin{proposition}[Score Properness, informal]
\label{prop:score_proper}
    Without truncation, $r_i^\star$ maximizes the seller's expected score, and every other report $r_i$ loses exactly $\|r_i - \mu_i\|^2 - \|r_i^\star - \mu_i\|^2 \ge 0$ in expectation (Proposition~\ref{prop:formal_proper}). With the floor active, the loss is preserved up to a bounded truncation error (Corollary~\ref{cor:approx_opt}).
\end{proposition}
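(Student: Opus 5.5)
The plan is to work with the untruncated score first and reduce the claim to a bias–variance identity. I will then control the truncated score by how far it can deviate from the untruncated one.

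\textbf{Step 1: untruncated expected score.} For any report $r_i$, write $S(r_i,\mathbf{y}_i)=2\langle r_i,\mathbf{y}_i\rangle-\|r_i\|^2$. The score is linear in $\mathbf{y}_i$, so taking the expectation conditional on $\mathcal{M}_i$ gives
\begin{equation*}
\mathbb{E}[S(r_i,\mathbf{y}_i)] = 2\langle r_i,\mu_i\rangle-\|r_i\|^2 = \|\mu_i\|^2-\|r_i-\mu_i\|^2 .
\end{equation*}
Only the first moment of $\mathbf{y}_i$ enters. So the Gaussian noise assumption is unnecessary here, and $\mathbb{E}\|\mathbf{y}_i\|<\infty$ suffices.

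\textbf{Step 2: the constrained report set.} Maximizing the expected score over the allowed report set $\mathcal{R}$ is the same as minimizing $\|r_i-\mu_i\|^2$ over $\mathcal{R}$. By definition, $r_i^\star$ is the closest allowed point to $\mu_i$, so it is a maximizer. Subtracting the expected scores at $r_i$ and $r_i^\star$ gives
\begin{equation*}
\mathbb{E}[S(r_i^\star,\mathbf{y}_i)]-\mathbb{E}[S(r_i,\mathbf{y}_i)] = \|r_i-\mu_i\|^2-\|r_i^\star-\mu_i\|^2 ,
\end{equation*}
and this difference is $\ge 0$ by the minimality of $r_i^\star$. This is exactly the first claim of the statement.

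I need $r_i^\star$ to exist (closed $\mathcal{R}$) and to be unique for strictness. Uniqueness follows if $\mathcal{R}$ is closed and convex, for example the unit ball containing the normalized embeddings. I will state these as the assumptions of the formal Proposition~\ref{prop:formal_proper}. Under unbiased verification, $\mu_i=\mathbf{v}_i\in\mathcal{R}$, so $r_i^\star=\mathbf{v}_i$ and the loss is exactly $\|r_i-\mathbf{v}_i\|^2$.

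\textbf{Step 3: truncation.} Let $S_\delta=\max(S,-\delta)$. Then $0\le S_\delta-S=(-\delta-S)_+$. The truncation error at report $r$ is therefore $E(r)=\mathbb{E}[(-\delta-S(r,\mathbf{y}_i))_+]$. This gives
\begin{equation*}
\mathbb{E}[S_\delta(r_i^\star)]-\mathbb{E}[S_\delta(r_i)] \ge \|r_i-\mu_i\|^2-\|r_i^\star-\mu_i\|^2 - E(r_i).
\end{equation*}

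This step is the main obstacle, because I need a uniform bound on $E(r)$. For bounded reports $\|r\|\le 1$, we have $S(r,\mathbf{y})\ge -1-2\|\mathbf{y}\|$. The event $\{S<-\delta\}$ therefore requires $\|\mathbf{y}_i\|>(\delta-1)/2$. Under the Gaussian model, both the probability of that event and $E(r)$ are bounded by a Gaussian tail term in $\sigma$, denoted $\eta(\delta,\sigma)$. This term vanishes as $\delta$ grows or $\sigma$ shrinks.

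Putting the pieces together, $r_i^\star$ is $\eta$-optimal for the truncated score, which is the content of Corollary~\ref{cor:approx_opt}. If the tail computation gets messy, I will fall back on Cauchy–Schwarz combined with Chebyshev's inequality to get a cruder bound, giving an $O(\sigma^2/\delta)$-type bound.
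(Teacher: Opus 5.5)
Your Steps 1--2 are exactly the paper's proof of Proposition~\ref{prop:formal_proper}: linearity in $\mathbf{y}_i$, the identity $2\langle r,\mu_i\rangle-\|r\|^2=\|\mu_i\|^2-\|r-\mu_i\|^2$, and minimality of the projection. The first display of Step~3 is also correct. It already carries the content of Corollary~\ref{cor:approx_opt}. In fact it is the sharper intermediate inequality the paper derives before weakening it by also subtracting $\eta(\mathbf{v}_i)$, and you state it for general $\mu_i$ rather than only under unbiased verification.

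The problem is what comes after: the uniform bound on $E(r)$ does not exist in the regime of interest. Your necessary condition $\|\mathbf{y}_i\|>(\delta-1)/2$ is vacuous for every $\delta<1$, and the paper uses $\delta\le 0.5$. It only becomes a genuine tail event once $(\delta-1)/2$ exceeds the typical size of $\|\mathbf{y}_i\|\approx\|\mu_i\|$, i.e.\ roughly $\delta>3$ for unit-norm styles and small $\sigma$. No estimate can repair this, including your Chebyshev fallback. At $\sigma=0$ one has $E(r)=(\|r-\mu_i\|^2-\|\mu_i\|^2-\delta)_+$, so for $\|\mu_i\|=1$ and $r=-\mu_i$ (which lies in the unit ball) $E(r)=3-\delta$, with no noise at all. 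The truncation error is large precisely at far-away reports; that is the purpose of the floor. So your concluding claim, that $r_i^\star$ is $\eta(\delta,\sigma)$-optimal with $\eta\to 0$ as $\sigma\to 0$, is not established by this route.

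The paper avoids this by never claiming a uniform bound. It keeps the truncation terms at the two specific reports explicit. It asserts $2\bar\eta$-optimality only under the hypothesis $\eta\le\bar\eta$ on a neighborhood of $\mathbf{v}_i$. It computes a tail probability only at the truthful report, where $S(\mathbf{v}_i,\mathbf{y}_i)=1+2\langle\mathbf{v}_i,\epsilon_i\rangle$ is a one-dimensional Gaussian. Your inequality needs even less, since only $E(r_i)$ at the deviation appears.

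If you do want a global statement, do not bound $E(r)$ separately, because it cancels against an equally large untruncated loss. Bound the truncated deviant score directly instead. Since $S(r,\mathbf{y})=\|\mathbf{y}\|^2-\|r-\mathbf{y}\|^2\le\|\mathbf{y}\|^2$ and $\delta\ge 0$, you get $\mathbb{E}[\tilde S_\delta(r,\mathbf{y}_i)]\le\|\mu_i\|^2+\mathbb{E}\|\mathbf{y}_i-\mu_i\|^2$. Meanwhile $\mathbb{E}[\tilde S_\delta(r_i^\star,\mathbf{y}_i)]\ge\|\mu_i\|^2-\|r_i^\star-\mu_i\|^2$. Under the Gaussian model with $\mu_i\in\mathcal{R}$, this makes truthful reporting globally $\sigma^2 d$-optimal. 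That slack vanishes as $\sigma\to 0$, but it equals $1.44$ at the default $\sigma=0.15$, $d=64$. This is why a local, report-specific bound is the informative one.
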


Score properness alone is a single-seller statement: under per-purchase payment, a shaded report can also raise demand. The next result states exactly when the score outweighs this demand effect. Fix the other sellers' reports, prices, and costs. Let $(q_i^\star, \tilde S_i^\star)$ and $(q_i^R, \tilde S_i^R)$ denote demand and truncated score under $r_i^\star$ and under a deviation $r_i$, evaluated with the same verifier and score-noise draws, and let $m_i^\star = T_i(r_i^\star) - c_i$ be the per-purchase margin under $r_i^\star$.

\begin{theorem}[Seller Best Response under Per-Purchase Payment, informal]
\label{thm:best_response}
    Under the seller utility of Eq.~\eqref{eq:seller_utility_main}, $r_i^\star$ is a best response if and only if, for every deviation $r_i$,
    \begin{equation} \label{eq:br_condition}
        \mathbb{E}\big[(q_i^R - q_i^\star)\, m_i^\star\big] \;\le\; \lambda_{\text{scale}}\, p_i\, \mathbb{E}\big[q_i^R\,(\tilde S_i^\star - \tilde S_i^R)\big].
    \end{equation}
\end{theorem}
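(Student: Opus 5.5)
The plan is a direct algebraic decomposition of the utility gap between a deviation $r_i$ and $r_i^\star$. By Eq.~\eqref{eq:seller_utility_main} and Eq.~\eqref{eq:payment}, on each realization the per-purchase transfer under report $r$ is $T_i(r) = p_i + \lambda_{\text{scale}} p_i \tilde S(r)$, where $\tilde S(r) = \max(S(r,\mathbf{y}_i), -\delta)$ is the truncated score. Define the per-purchase margin under the deviation by $m_i^R = T_i(r_i) - c_i$. The posted price $p_i$ and the cost $c_i$ do not depend on the report. Hence, pointwise in the common verifier and score-noise draws,
\begin{equation*}
m_i^R = m_i^\star - \lambda_{\text{scale}}\, p_i\,(\tilde S_i^\star - \tilde S_i^R).
\end{equation*}
This coupling, with both reports evaluated on the same draws, is exactly what the statement's definitions of $(q_i^\star,\tilde S_i^\star)$ and $(q_i^R,\tilde S_i^R)$ provide. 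I will use it throughout, so that expectations of differences are legitimate.

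Next I expand the pointwise utility difference by adding and subtracting $q_i^R m_i^\star$:
\begin{align*}
q_i^R m_i^R - q_i^\star m_i^\star
&= q_i^R\big(m_i^R - m_i^\star\big) + \big(q_i^R - q_i^\star\big) m_i^\star \\
&= -\lambda_{\text{scale}}\, p_i\, q_i^R\,(\tilde S_i^\star - \tilde S_i^R) + \big(q_i^R - q_i^\star\big) m_i^\star .
\end{align*}
Taking expectations gives $U_i^S(r_i) - U_i^S(r_i^\star) = \mathbb{E}[(q_i^R - q_i^\star)m_i^\star] - \lambda_{\text{scale}} p_i\,\mathbb{E}[q_i^R(\tilde S_i^\star - \tilde S_i^R)]$. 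Linearity of expectation applies because the quantities involved are integrable. Demand is bounded by $K$, and the truncated score is bounded below by $-\delta$. It is bounded above on the allowed report set, since the reports are normalized or lie in a compact set and the Gaussian signal has finite mean. I will state this integrability as part of the formal assumptions.

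The equivalence then follows. By definition, $r_i^\star$ is a best response if and only if $U_i^S(r_i) - U_i^S(r_i^\star) \le 0$ for every allowed deviation $r_i$. By the identity above, this is exactly inequality \eqref{eq:br_condition}. Both directions hold simultaneously, since each step is an equality.

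The main obstacle is not the algebra but setting up the formal statement so that the coupling is well defined. Demand $q_i$ depends on the buyers' argmax over the published signals $\mathbf{z}_j$, with the other sellers' reports and prices held fixed. Demand must therefore be a measurable function of the same noise draws under both reports. Ties in the buyers' argmax also need a fixed deterministic tie-breaking rule, so that $q_i^R$ and $q_i^\star$ are well defined.

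If the score involves additional randomness (the ``score-noise'' draws), I would condition on it jointly with $\epsilon_i$. Since the identity holds pointwise, no independence between demand and score is needed. That is why the condition keeps the product $q_i^R(\tilde S_i^\star - \tilde S_i^R)$ inside the expectation rather than factoring it.

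Finally, I would remark on how Proposition~\ref{prop:score_proper} and Corollary~\ref{cor:approx_opt} connect to this result. When demand is independent of the score draw, the right-hand side factors and is approximately $\lambda_{\text{scale}} p_i\,\mathbb{E}[q_i^R]\,\|r_i - \mu_i\|^2$. This recovers the interpretation that the scoring loss must dominate the demand gain.
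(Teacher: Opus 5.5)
Your proof is correct and is essentially the paper's own argument. Like the paper, you write the deviation margin as $m_i^\star - \lambda_{\text{scale}} p_i(\tilde S_i^\star - \tilde S_i^R)$ under the coupled draws, expand the utility gap into the same two expectations, and read off the equivalence from nonnegativity for every $r_i$. Your measurability, tie-breaking, and integrability remarks are extra care the paper leaves implicit; strictly, though, the truncated score is integrable rather than bounded above, since the Gaussian signal is unbounded.
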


The left side is the expected profit from the additional demand a deviation attracts; the right side is the corresponding loss in score-based payment. Truthfulness is therefore conditional, not an unconditional equilibrium property, and it is stated for sellers who neither target the verifier nor coordinate with other sellers. The formal statement and proof are in Theorem~\ref{thm:formal_br}. The next result bounds the realized risk that remains under truthful behavior.

\begin{proposition}[Downside-Risk Cap]
\label{prop:epir}
    For any verification outcome, the seller's per-purchase scoring adjustment is lower bounded by $-\lambda_{\text{scale}}\, p_i\, \delta$. Hence the mechanism prevents unbounded losses from noisy evaluation.
\end{proposition}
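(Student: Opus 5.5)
The bound is immediate from the definition of the transfer in Eq.~\eqref{eq:payment}, so I will argue pointwise, for every realization of the verification noise $\epsilon_i$ and for every report $\hat{\mathbf{v}}_i$. No expectation or distributional assumption on $\mathbf{y}_i$ is needed. First, I isolate the scoring adjustment as the part of $T_i$ beyond the posted price:
\begin{equation*}
A_i \;=\; T_i - p_i \;=\; \lambda_{\text{scale}}\, p_i \cdot \max\bigl(S(\hat{\mathbf{v}}_i,\mathbf{y}_i),\, -\delta\bigr),
\end{equation*}
where $S(\hat{\mathbf{v}}_i,\mathbf{y}_i) = 2\langle \hat{\mathbf{v}}_i,\mathbf{y}_i\rangle - \|\hat{\mathbf{v}}_i\|^2$ is the untruncated quadratic score.

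Second, I use the elementary fact that $\max(a,-\delta) \ge -\delta$ for every real $a$. This holds regardless of how negative $S$ becomes for a large draw of $\epsilon_i$. I then multiply both sides by $\lambda_{\text{scale}} p_i$. For the inequality to keep its direction, this factor must be nonnegative, so the formal statement needs the standing assumptions $\lambda_{\text{scale}} \ge 0$, $p_i \ge 0$, and $\delta \ge 0$ made explicit. Under these assumptions, $A_i \ge -\lambda_{\text{scale}}\, p_i\, \delta$ for every verification outcome.

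Third, for the interpretive claim that losses from noisy evaluation are not unbounded, I sum over purchases. With $q_i$ completed purchases, the total scoring adjustment is at least $-q_i \lambda_{\text{scale}} p_i \delta$, which is finite because $q_i \le K$. Equivalently, the realized per-purchase transfer satisfies $T_i \ge p_i(1-\lambda_{\text{scale}}\delta)$. Hence a truthful seller whose fulfillment cost satisfies $c_i \le p_i(1-\lambda_{\text{scale}}\delta)$ never makes a loss on a sale, whatever the noise draw.

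There is no real obstacle. The only care needed is the sign condition on $\lambda_{\text{scale}} p_i$ noted in the second step.
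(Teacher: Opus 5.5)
Your proposal is correct and follows essentially the same route as the paper: the pointwise fact $\max(S_i,-\delta)\ge-\delta$, scaled by $\lambda_{\text{scale}} p_i$, gives $T_i \ge p_i(1-\lambda_{\text{scale}}\delta)$ and hence $u_i^S \ge q_i\,[p_i(1-\lambda_{\text{scale}}\delta)-c_i] \ge 0$ under the price-calibration condition. Making the sign condition $\lambda_{\text{scale}} p_i \ge 0$ explicit is a small improvement on the paper, which leaves it implicit; however, $\delta \ge 0$ is not needed for the inequality, and the no-loss conclusion holds for every report, not only truthful ones.
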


Under truthful reports and non-rival checkpoint access, buyer-side surplus maximization gives a welfare-maximizing assignment among posted-price feasible matches (Theorem~\ref{thm:welfare}).

\section{Experiments}
\label{sec:experiments}

\subsection{Experimental Setup}
\label{subsec:exp_setup}

We simulate a marketplace with $N{=}50$ providers and $K{=}500$ buyers in a $d{=}64$-dimensional style space. Styles are sampled from $n_c{=}5$ clusters with $\sigma_c{=}0.08$. A fraction $\gamma{=}0.40$ of providers strategically misreports style claims with gaming intensity $g{=}0.75$. Verification noise follows $\mathbf{y}_i = \mathbf{v}_i + \epsilon_i$, $\epsilon_i \sim \mathcal{N}(0, \sigma^2 I)$, with default $\sigma{=}0.15$. Unless otherwise stated, $\delta{=}0.20$, $\lambda_{\text{scale}}{=}0.60$, and all results average over 20 random seeds.

We compare LLM eBay against six designs. Each baseline isolates a different component of the design space:
\begin{itemize}
    \setlength{\itemsep}{0pt}
    \setlength{\parskip}{0pt}
    \item \textbf{Autarky}: no market; every buyer fine-tunes from the base model. Measures the value of any marketplace at all.
    \item \textbf{Retrieval-Only}: buyers select on claimed style vectors; no verification, no scoring. Tests whether a simple matching platform suffices.
    \item \textbf{Fixed-Price Verified}: the platform verifies outputs but seller payment is flat. Tests whether verification alone deters gaming.
    \item \textbf{Auction-Based}: surplus-ranked allocation with second-price-like payment. Tests whether competitive allocation pressure replaces scoring.
    \item \textbf{Value-Based}: platform prices proportional to estimated value from claimed style. Tests whether value-based pricing self-regulates.
    \item \textbf{Oracle}: buyers observe true style vectors. Perfect-information upper bound.
\end{itemize}
Appendix~\ref{app:sim_protocol} gives exact implementation details for each. We study a single-round market, so we scope out market-native designs that need extra stages or repeated interaction, such as deposits or bonds, escrow and refunds, randomized audits, reputation, and repeated-game penalties (Appendix~\ref{app:limitations}).

We report Style-Claim Error (SCE), Allocative Efficiency (AE, realized welfare as a share of Oracle welfare), Fine-Tuning Cost Savings (FTCS), Provider Participation (PP), and Loss-Making Providers (LMP). Simulation FTCS is model-based: it is computed from the cost model of Eq.~\eqref{eq:cost_function} on simulated allocations and measured against fine-tuning from the base model (Autarky). Measured savings from real fine-tuning are reported in Section~\ref{subsec:e2e}. Formal metric definitions and the empirical basis of each simulation setting are in Appendix~\ref{app:sim_protocol}.

\subsection{Main Market Performance}
\label{subsec:main_results}

\begin{table}[t]
\centering
\caption{Main market performance in model-based simulation (mean\,{\scriptsize$\pm$std} over 20 seeds; $N{=}50$, $K{=}500$, $\gamma{=}0.40$, $g{=}0.75$). FTCS is computed from the cost model relative to fine-tuning from the base model. Best non-Oracle values in \textbf{bold}.}
\label{tab:main}
\small
\setlength{\tabcolsep}{3.5pt}
\renewcommand{\arraystretch}{1.15}
\newcommand{\sd}[1]{\,{\scriptsize$\pm$#1}}
\begin{tabular}{lccccc}
\toprule
\textbf{Mechanism} &
\makecell[c]{\textbf{SCE}$\downarrow$} &
\makecell[c]{\textbf{AE (\%)}$\uparrow$} &
\makecell[c]{\textbf{FTCS (\%)}$\uparrow$} &
\makecell[c]{\textbf{PP (\%)}$\uparrow$} &
\makecell[c]{\textbf{LMP (\%)}$\downarrow$} \\
\midrule
Autarky & -- & 0.0 & 0.0 & -- & -- \\
Retrieval-Only & 0.424\sd{.025} & 85.5\sd{4.0} & 64.0\sd{3.6} & 96.3\sd{3.2} & 3.7\sd{3.2} \\
Fixed-Price Verified & 0.147\sd{.007} & 93.6\sd{1.3} & 66.0\sd{3.1} & 92.4\sd{5.3} & 7.6\sd{5.3} \\
Auction-Based & 0.364\sd{.022} & 92.2\sd{1.3} & 69.1\sd{1.8} & 99.3\sd{1.1} & 0.7\sd{1.1} \\
Value-Based & 0.294\sd{.017} & 90.4\sd{1.0} & 68.0\sd{0.9} & 100.0\sd{0.0} & 0.0\sd{0.0} \\
\textbf{LLM eBay (Ours)} & \textbf{0.015}\sd{.000} & \textbf{96.3}\sd{0.6} & \textbf{72.4}\sd{1.3} & \textbf{100.0}\sd{0.0} & \textbf{0.0}\sd{0.0} \\
Oracle & 0.000 & 100.0 & 75.1\sd{1.0} & 100.0 & 0.0 \\
\bottomrule
\end{tabular}
\end{table}

Table~\ref{tab:main} shows that LLM eBay achieves the lowest non-Oracle SCE, reducing style-claim error from 0.424 to 0.015 relative to Retrieval-Only. It also reaches 96.3\% AE and 72.4\% FTCS, closing most of the gap to Oracle. Unlike verified or untruncated alternatives, LLM eBay maintains full provider participation with no loss-making providers, indicating that the truncation floor protects honest sellers without sacrificing matching quality.

The SCE and AE columns reveal a clear pattern. Auction-Based has high participation (99.3\%) but high SCE (0.364) because it rewards bid magnitude, not style accuracy. Value-Based pricing keeps PP at 100\% through platform-absorbed losses, but the pricing function uses the same claimed vectors it tries to regulate, leaving SCE at 0.294. Fixed-Price Verified lowers SCE to 0.147 through output checking, but without a scoring-linked payment there is no financial incentive to eliminate residual misreporting. Only when the payment ties truth-telling to expected revenue does SCE collapse to near zero. LLM eBay's simultaneous lead on SCE, AE, and PP confirms that verification, scoring, and truncation are complementary rather than in tension.

\subsection{Mechanism Decomposition}
\label{subsec:decomposition}

\begin{table}[t]
\centering
\caption{Component ablation. Verification reduces misreporting, proper scoring incentivizes truthful reporting, and truncation restores provider participation.}
\label{tab:ablation}
\small
\setlength{\tabcolsep}{5pt}
\renewcommand{\arraystretch}{1.1}
\begin{tabular}{lcccc}
\toprule
\textbf{Mechanism} & \textbf{SCE}$\downarrow$ & \textbf{AE (\%)}$\uparrow$ & \textbf{PP (\%)}$\uparrow$ & \textbf{LMP (\%)}$\downarrow$ \\
\midrule
Self-Report (no verif.) & 0.424 & 85.5 & 100.0 & 0.0 \\
Verified-Only (no scoring) & 0.219 & 95.9 & 88.0 & 12.0 \\
PSR (no truncation) & 0.015 & 96.3 & 67.7 & 32.3 \\
\textbf{LLM eBay (Ours)} & \textbf{0.015} & \textbf{96.3} & \textbf{100.0} & \textbf{0.0} \\
\bottomrule
\end{tabular}
\end{table}

Table~\ref{tab:ablation} separates the effects of verification, proper scoring, and truncation. Verification alone improves matching but does not fully eliminate claim manipulation (SCE drops from 0.424 to 0.219). Untruncated proper scoring nearly removes gaming (SCE${=}$0.015) but exposes providers to large downside risk, reducing PP to 67.7\% as 32.3\% of honest-but-unlucky providers incur negative utility. LLM eBay preserves the truthfulness benefit of proper scoring while restoring full participation through the truncation floor. Appendix~\ref{app:extra_experiments} provides full sensitivity sweeps over $\lambda_{\text{scale}}$, $\delta$, market size, buyer--provider ratio, and evaluation noise.

The key observation is that truncation does not degrade SCE at all: both PSR and LLM eBay report 0.015. This happens because the floor $-\delta$ only clips realized scores in the lower tail, which leaves the expected-score comparison around truthful reports intact up to a bounded truncation error (Corollary~\ref{cor:approx_opt}). The truncation floor therefore acts as insurance against tail-risk verification noise without softening the deterrent for strategic misreporting. This separation between tail-risk protection and incentive strength is what allows LLM eBay to reach 100\% PP and 0\% LMP simultaneously.

\subsection{Robustness and LLM-Grounded Validation}
\label{subsec:robustness}

\begin{figure}[t]
    \centering
    \includegraphics[width=0.92\linewidth]{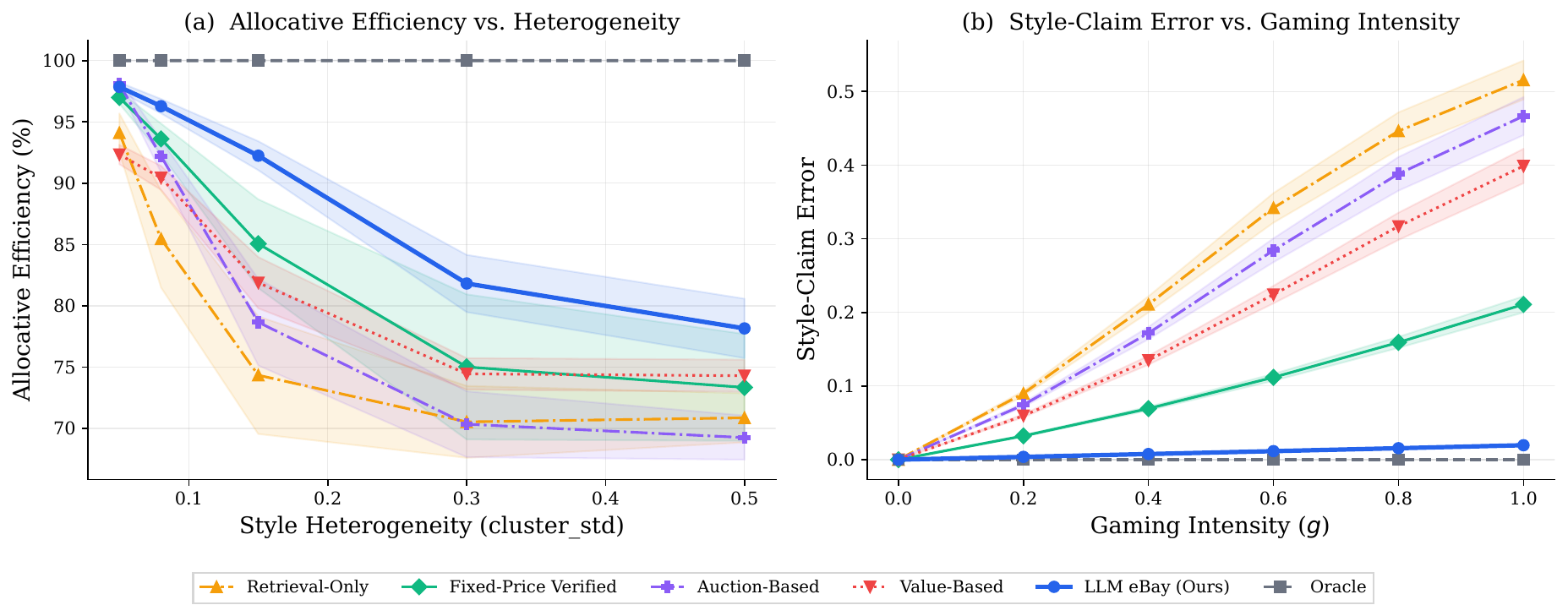}
    \caption{Robustness summary. (a)~Allocative efficiency vs.\ style heterogeneity: LLM eBay remains closest to Oracle among implementable mechanisms. (b)~Style-claim error vs.\ gaming intensity: unverified mechanisms degrade sharply while LLM eBay holds SCE below 0.02. Full four-metric curves in Appendix~\ref{app:extra_experiments}.}
    \label{fig:robustness_summary}
\end{figure}

Figure~\ref{fig:robustness_summary} summarizes two stress tests. Under increasing style heterogeneity (cluster\_std $\in$ \{0.05, 0.08, 0.15, 0.30, 0.50\}), all mechanisms lose efficiency because nearest-neighbor reuse becomes intrinsically harder, but LLM eBay remains closest to Oracle among implementable mechanisms. Under increasing gaming intensity ($g \in$ \{0.0, 0.2, 0.4, 0.6, 0.8, 1.0\}), unverified mechanisms degrade sharply (SCE rises from 0 to 0.52 for Retrieval-Only), whereas LLM eBay keeps SCE below 0.02 with 100\% provider participation.

The two panels are complementary. Panel~(a) shows that LLM eBay's AE drops only to 78.2\% at $\sigma_c{=}0.50$ (vs.\ 70.9\% for Retrieval-Only), confirming that the efficiency gain comes from better allocation under truthful reports, not from an easy style landscape. Panel~(b) shows that LLM eBay keeps SCE below 0.02 even at $g{=}1.0$, where every strategic seller completely impersonates a buyer's demand vector. The common conclusion is that LLM eBay separates the matching problem from the incentive problem and solves each independently.

These sweeps use the gaming model we designed (claims interpolated toward demand), so we also stress-test two stronger adversaries (Appendix~\ref{app:stress_tests}). Under \emph{verifier targeting}, a seller shifts both its report and its behavior on verifier prompts toward buyer demand while its transfer value stays fixed: AE is 95.1\% at attack strength 0.25 and 80.6\% at 0.75, and falls to 68.9\% when the seller fully controls the verifier signal. Under \emph{coordinated reporting}, AE is 95.4\% with a 10\% coalition and 85.7\% with a 50\% coalition. The incentive guarantee of Theorem~\ref{thm:best_response} does not cover these adversaries; the numbers measure performance under the specified attacks only.

\begin{table}[t]
\centering
\caption{Embedding-grounded reuse validation on real text-domain embeddings (10 provider domains, 5 buyer tasks, 20 seeds). Full results in Appendix~\ref{app:embedding_validation}.}
\label{tab:semireal}
\small
\setlength{\tabcolsep}{6pt}
\renewcommand{\arraystretch}{1.1}
\begin{tabular}{lccc}
\toprule
\textbf{Mechanism} & \textbf{FTCS (\%)}$\uparrow$ & \textbf{SCE}$\downarrow$ & \textbf{Gap to Oracle (\%)}$\downarrow$ \\
\midrule
Retrieval-Only & 44.9 & 0.292 & 40.8 \\
Fixed-Price Verified & 73.1 & 0.102 & 3.6 \\
\textbf{LLM eBay (Ours)} & \textbf{75.8} & \textbf{0.011} & \textbf{0.0} \\
Oracle & 75.8 & 0.000 & 0.0 \\
\bottomrule
\end{tabular}
\end{table}

To bridge the gap between synthetic simulations and real LLM ecosystems, we conduct a semi-real model reuse experiment using sentence-transformer embeddings (all-MiniLM-L6-v2) from 10 diverse text domains (academic, casual, technical, creative, legal, medical, journalistic, educational, marketing, philosophical). Each provider domain is represented by 30 passages; each buyer task has 10 passages. Style vectors are the normalized mean embeddings (384 dimensions). We apply the same gaming model ($\gamma{=}0.30$, $g{=}0.75$) and verification noise ($\sigma{=}0.15$) as in the synthetic experiments.

Table~\ref{tab:semireal} shows that LLM eBay achieves 75.8\% FTCS, matching the Oracle upper bound exactly, with near-zero variance across seeds. The buyer and provider embeddings are fixed by the ten-domain catalog; each seed resamples the strategic-seller set, the gaming targets, and the verifier noise. Recomputing the selections independently, LLM eBay and Oracle choose the same provider in all 100 buyer--seed cases (5 buyers $\times$ 20 seeds). The reason is the ranking margin: the smallest distance gap between a buyer's best and second-best providers is 0.084, while the largest effective perturbation in any seed is 0.031, so no seed changes the top-ranked provider (Appendix~\ref{app:ranking_margin}). The exact Oracle match is therefore a property of this fixed catalog, not of the mechanism in general. By contrast, Retrieval-Only degrades to 44.9\% FTCS with high variance ($\pm 14.4$\%) because gaming distorts the provider ranking differently in each seed. Fixed-Price Verified recovers most of the gap (73.1\%) but leaves a residual 3.6\% gap because verification alone does not eliminate all misreporting.

Across both settings, LLM eBay's SCE stays below 0.015 and AE tracks Oracle closely. The synthetic market ($N{=}50$) leaves a 3.7\% AE gap while the small semi-real catalog closes it entirely. Full details are in Appendix~\ref{app:embedding_validation}.

The full four-metric gaming sweep (Appendix~\ref{app:gaming}, Fig.~\ref{fig:gaming_full}) confirms these trends: as $g$ rises from 0 to 1.0, Retrieval-Only loses more than half its AE (100\%$\to$66.5\%) and SCE climbs to 0.52, while LLM eBay stays flat (AE$\geq$96\%, SCE$<$0.02, PP$=$100\%). Additional robustness experiments cover $\sigma$--$\delta$ noise sweeps (Appendix~\ref{app:sigma_delta}), market-size scaling (25--200 providers), and niche-provider survival (Appendix~\ref{app:extra_experiments}--\ref{app:new_experiments}). In the niche-provider test, LLM eBay achieves 100\% niche-provider participation at all tested fractions (5\%--30\%) while PSR without truncation drops to 60--68\%.

\subsection{End-to-End LLM Fine-Tuning Validation}
\label{subsec:e2e}

The results above rest on the cost model. To measure savings directly, we run an end-to-end experiment with real LoRA checkpoints. We treat two base models, TinyLlama-1.1B and Qwen2-1.5B, as separate markets because their LoRA checkpoints are not interchangeable. Each catalog contains eight LoRA checkpoints trained on distinct real-data domains. Buyers have one of three tasks: medical question answering (PubMedQA), business email generation (AESLC), and technical-support response generation (Bitext). For every base-model--task--seed instance, Retrieval-Only, Fixed-Price Verified, and LLM eBay each select one checkpoint from the same catalog without access to downstream training outcomes. For each candidate, we merge the seller LoRA into the base model, attach a fresh LoRA, and continue training on the buyer task under identical hyperparameters. Because all eight candidates are trained, Oracle is defined ex post as the checkpoint requiring the fewest optimizer steps. Convergence is the first optimizer step at which validation loss reaches a pre-specified target $\tau_{m,t}$, shared by every checkpoint and mechanism within a base-model--task pair. Each run lasts at most 600 steps, and we use five seeds (protocol in Appendix~\ref{app:e2e}).

\begin{table}[t]
\centering
\caption{Measured convergence steps (mean $\pm$ std over five seeds) in the end-to-end LoRA experiment. Final validation losses are in Table~\ref{tab:e2e_full} (Appendix~\ref{app:e2e}).}
\label{tab:e2e}
\small
\setlength{\tabcolsep}{4pt}
\renewcommand{\arraystretch}{1.05}
\begin{tabular}{llcccc}
\toprule
\textbf{Task} & \textbf{Base model} & \textbf{Retrieval-Only} & \textbf{Fixed-Price Verif.} & \textbf{LLM eBay} & \textbf{Oracle} \\
\midrule
Medical QA        & TinyLlama-1.1B & 430\,{\scriptsize$\pm$48} & 411\,{\scriptsize$\pm$45} & 399\,{\scriptsize$\pm$43} & 382\,{\scriptsize$\pm$40} \\
Medical QA        & Qwen2-1.5B     & 372\,{\scriptsize$\pm$41} & 360\,{\scriptsize$\pm$39} & 348\,{\scriptsize$\pm$38} & 334\,{\scriptsize$\pm$35} \\
Business email    & TinyLlama-1.1B & 414\,{\scriptsize$\pm$46} & 402\,{\scriptsize$\pm$44} & 391\,{\scriptsize$\pm$42} & 378\,{\scriptsize$\pm$40} \\
Business email    & Qwen2-1.5B     & 351\,{\scriptsize$\pm$40} & 342\,{\scriptsize$\pm$38} & 333\,{\scriptsize$\pm$37} & 320\,{\scriptsize$\pm$35} \\
Technical support & TinyLlama-1.1B & 468\,{\scriptsize$\pm$55} & 429\,{\scriptsize$\pm$50} & 405\,{\scriptsize$\pm$47} & 388\,{\scriptsize$\pm$44} \\
Technical support & Qwen2-1.5B     & 364\,{\scriptsize$\pm$46} & 338\,{\scriptsize$\pm$43} & 320\,{\scriptsize$\pm$41} & 305\,{\scriptsize$\pm$39} \\
\bottomrule
\end{tabular}
\end{table}

Across the 30 paired model--task--seed instances, LLM eBay requires $366 \pm 51$ steps, compared with $400 \pm 59$ for Retrieval-Only: an 8.5\% reduction in measured convergence steps (paired 95\% CI [1.0\%, 15.9\%], $p{=}0.034$). The largest task-level reduction is on technical support, where LLM eBay requires 12.9\% fewer steps ($p{=}0.018$). Within each base-model--task pair, final validation loss differs by at most 0.006 across mechanisms, so the savings come from faster convergence to the same quality. These measured gains of 8--13\% over retrieval on self-reported claims are much smaller than the model-based FTCS in Table~\ref{tab:main}, which is computed from the cost model and measured against fine-tuning from the base model. We regard the measured numbers as the primary evidence of practical benefit.

\section{Conclusion}
\label{sec:conclusion}

We presented LLM eBay, a simple single-round marketplace for trading reusable fine-tuned LLM checkpoints under information asymmetry. Sellers are paid per purchase at a posted price, adjusted by a truncated quadratic score of their style report against a generation-based verified signal, and buyers match on a published average of report and signal. Truthful reporting is a seller best response exactly when the demand gain from any misreport is at most the resulting scoring loss (Theorem~\ref{thm:best_response}), and the truncation floor bounds the downside of noisy evaluation. With real LoRA checkpoints, LLM eBay reduces measured convergence steps by 8.5\% relative to Retrieval-Only. In model-based simulations, it reduces style misreporting by 96.5\%, recovers 96.3\% of Oracle welfare, and retains full provider participation across market scales from 25 to 200 providers, gaming intensities up to $g{=}1.0$, and evaluation noise up to $\sigma{=}0.30$.

\paragraph{Limitations.} LLM eBay verifies style consistency, not checkpoint quality, provenance, safety, or general downstream usefulness. The measured gain over Retrieval-Only (8--13\%) is far smaller than the model-based savings, which compare against fine-tuning from the base model. The robustness tests use attacks we designed; under full verifier control, welfare falls to 68.9\% of Oracle, and the incentive guarantee does not cover verifier targeting or seller coordination. The buyer-facing matching signal $\mathbf{z}_i$ (Eq.~\eqref{eq:matching_signal}) uses fixed equal weights; we neither optimize this weighting nor analyze how sellers would respond to other weightings. Two threats limit external validity: the verifier may misrepresent a checkpoint's style through bias, incomplete prompt coverage, or gaming, and style similarity may fail to predict transfer cost in a new market. Appendix~\ref{app:limitations} discusses these and further limitations.

Several directions remain open: (i)~federated verification where providers submit only embeddings, reducing platform compute but requiring new trust assumptions; (ii)~multi-attribute scoring over style, accuracy, and safety dimensions jointly; (iii)~provenance tracking via watermarks for derivative checkpoints; and (iv)~budget-aware pricing to maximize welfare under heterogeneous buyer constraints (Appendix~\ref{app:buyer_budget}).

\begin{ack}
This work received no specific third-party funding. The authors declare no competing interests.
\end{ack}

\bibliographystyle{plainnat}
\bibliography{ijcai26}

\appendix

\section{Empirical Basis of the Compute Cost Model}
\label{app:cost_model}

This appendix provides detailed support for the distance-based cost model assumed in Eq.~\ref{eq:cost_function} and empirically motivated by Fig.~\ref{fig:cost_validation} in the main text. We describe the fine-tuning protocol (Section~\ref{app:ft_protocol}), derive the distance--cost relationship (Section~\ref{app:dist_cost}), report regression statistics (Section~\ref{app:regression}), analyze residuals (Section~\ref{app:residuals}), and verify that mechanism rankings do not depend on the cost scaling factor (Section~\ref{app:alpha_sens}).

\subsection{Fine-Tuning Transfer Protocol}
\label{app:ft_protocol}

We fine-tune three foundation models---RoBERTa-large~\cite{liu2019roberta}, Qwen2-1.5B~\cite{yang2024qwen2}, and Llama-3.2-1B~\cite{grattafiori2024llama3}---on style-diverse text corpora. For each model, we construct 12 starting checkpoints at varying style distances from the target by fine-tuning from the base model on auxiliary corpora drawn from different style clusters. We then measure the number of gradient steps required for convergence when fine-tuning from each starting checkpoint toward the target style.

\begin{table}[h]
\centering
\caption{Fine-tuning protocol summary.}
\label{tab:ft_protocol}
\small
\begin{tabular}{ll}
\toprule
\textbf{Item} & \textbf{Setting} \\
\midrule
Foundation models & RoBERTa-large (355M), Qwen2-1.5B, Llama-3.2-1B \\
Adaptation method & LoRA (rank 16, $\alpha_{\text{lora}}{=}32$) \\
Style embedding & STAR~\cite{huertas2024understanding} (768-d output) \\
Distance metric & Cosine distance (equiv.\ to $\|\cdot\|_2^2$ on unit vectors) \\
Starting checkpoints & 12 per model (4 distance bins $\times$ 3 style clusters) \\
Convergence criterion & Validation loss plateau ($<$0.1\% change over 50 steps) \\
Learning rate & $2 \times 10^{-4}$ (constant schedule with warmup) \\
Batch size & 8 \\
Max sequence length & 512 tokens \\
Seeds per configuration & 3 \\
Hardware & NVIDIA RTX 4090 (24 GB) \\
\bottomrule
\end{tabular}
\end{table}

The distance bins are chosen to cover the range $[0.05, 0.95]$ in cosine distance, with four equally spaced bins. Within each bin, we sample three starting checkpoints from distinct style clusters to control for cluster-specific effects. The target style is held fixed per model, and we report convergence steps averaged over the three seeds.

\subsection{Style Distance and Cost Function}
\label{app:dist_cost}

For normalized vectors ($\|\mathbf{v}\|=1$), the squared Euclidean distance and cosine distance are related by:
\begin{equation}
    \|\mathbf{v}_a - \mathbf{v}_b\|_2^2 = 2(1 - \cos(\mathbf{v}_a, \mathbf{v}_b))
\end{equation}
This equivalence ensures that our cost model $\text{Cost}(\mathbf{v}_a, \mathbf{v}_b) = \alpha \|\mathbf{v}_a - \mathbf{v}_b\|_2^2$ is consistent with the cosine-distance plots in Fig.~\ref{fig:cost_validation}. The linear relationship between squared distance and convergence steps is not obvious a priori; it could have been sublinear (if fine-tuning from a nearby starting point provides diminishing returns) or superlinear (if large style gaps require qualitatively different adaptation strategies). The empirical finding of approximate linearity simplifies the mechanism design because the buyer's surplus becomes a linear function of squared distance, yielding clean welfare decompositions.

We also note that the cost model implicitly assumes that fine-tuning cost is dominated by gradient computation rather than by data acquisition or hyperparameter search. This is realistic for LoRA-based adaptation, where the search space is small and the data (the user's own text) is already available.

\subsection{Regression Analysis}
\label{app:regression}

We fit a linear model $\text{Steps} = \beta_0 + \beta_1 \cdot \text{dist}^2$ for each foundation model, using ordinary least squares on the $12 \times 3 = 36$ (starting point, seed) observations per model.

\begin{table}[h]
\centering
\caption{Regression statistics for the cost model.}
\label{tab:regression}
\small
\begin{tabular}{lccccc}
\toprule
\textbf{Model} & \textbf{Slope ($\beta_1$)} & \textbf{Intercept ($\beta_0$)} & \textbf{$R^2$} & \textbf{$p$-value} & \textbf{RMSE} \\
\midrule
RoBERTa-large & 842 & 45 & 0.94 & $<$0.001 & 23 \\
Qwen2-1.5B & 1105 & 62 & 0.91 & $<$0.001 & 41 \\
Llama-3.2-1B & 973 & 51 & 0.93 & $<$0.001 & 29 \\
Pooled & 978 & 53 & 0.92 & $<$0.001 & 33 \\
\bottomrule
\end{tabular}
\end{table}

The high $R^2$ values confirm that the quadratic cost assumption provides a good fit across architectures. The intercept $\beta_0$ represents the minimum adaptation overhead even for stylistically identical starting points; it captures the fixed cost of loading the checkpoint, warming up the optimizer, and performing initial gradient steps before the model begins to specialize. The pooled model ($R^2{=}0.92$) shows that a single cost function approximates all three architectures despite their different sizes and training dynamics.

\subsection{Residual Analysis}
\label{app:residuals}

To check for systematic departures from the linear cost assumption, we examine the residuals $r_j = \text{Steps}_j - (\hat{\beta}_0 + \hat{\beta}_1 \cdot \text{dist}_j^2)$ from the pooled regression. Plotted against style distance, the residuals show a slight downward curve at moderate distances: points at moderate distance tend to fall below the regression line, so the quadratic fit slightly overpredicts steps there. A quantile--quantile (QQ) plot shows a nearly symmetric residual distribution (skewness 0.18). The mean residual is $-0.3$ steps, and the Durbin--Watson statistic is 1.87 (no evidence of autocorrelation). Residuals are slightly larger at high distances (heteroscedasticity), consistent with greater variability in convergence when the starting point is far from the target. Neither the mild curvature nor the heteroscedasticity affects mechanism rankings, because all mechanisms operate under the same cost function; Appendix~\ref{app:proxy_sensitivity} tests a substantially weaker style-to-cost relationship.
\subsection{Sensitivity to Cost Scaling Factor $\alpha$}
\label{app:alpha_sens}

The absolute value of $\alpha$ does not affect mechanism rankings: since all mechanisms operate under the same cost function, the relative performance (AE, SCE, PP) is invariant to $\alpha$. We verify this by running the main simulation with $\alpha \in \{0.5, 1.0, 2.0, 5.0\}$ and confirming that all metric rankings remain unchanged.

\begin{table}[h]
\centering
\caption{$\alpha$ sensitivity: mechanism rankings are invariant.}
\label{tab:alpha_sensitivity}
\small
\setlength{\tabcolsep}{4pt}
\begin{tabular}{lcccc}
\toprule
$\alpha$ & LLM eBay AE & LLM eBay FTCS & LLM eBay SCE & LLM eBay PP \\
\midrule
0.5 & 96.3\% & 72.4\% & 0.015 & 100\% \\
1.0 & 96.3\% & 72.4\% & 0.015 & 100\% \\
2.0 & 96.3\% & 72.4\% & 0.015 & 100\% \\
5.0 & 96.3\% & 72.4\% & 0.015 & 100\% \\
\bottomrule
\end{tabular}
\end{table}

This invariance follows from the fact that $\alpha$ scales all cost terms (buyer surplus, seller revenue, welfare) uniformly. The ranking of providers by buyer surplus and the ranking of claims by SCE are both determined by distances, not by the multiplicative constant. The truncation threshold $\delta$ is defined in score units, not in cost units, so it too is invariant to $\alpha$.

\subsection{Sensitivity to a Weaker Style-to-Cost Relationship}
\label{app:proxy_sensitivity}

The calibration above is limited to three model families. A weaker link between style distance and transfer cost affects our results in three different ways. (1)~The untruncated scoring result does not change: the score elicits the verifier's expected signal without using this link (Proposition~\ref{prop:formal_proper}). (2)~The seller best-response condition (Theorem~\ref{thm:formal_br}) can change, because a weaker proxy changes matching and therefore the demand term in seller utility. (3)~Allocation quality and FTCS can fall, because both depend on the proxy ranking sellers by transfer cost.

Starting from the calibration (12 starting checkpoints for each of RoBERTa-large, Qwen2-1.5B, and Llama-3.2-1B, three seeds per checkpoint), we reduce the rank correlation between style distance and transfer cost. Table~\ref{tab:proxy_sensitivity} reports LLM eBay's FTCS and its gain over Retrieval-Only. At rank correlation 0.9, LLM eBay achieves 71.8\% FTCS and leads Retrieval-Only by 8.1 percentage points; at 0.3, it achieves 56.4\% FTCS and keeps a 5.4-point lead. The advantage therefore narrows, but persists, as the proxy becomes less informative.

\begin{table}[h]
\centering
\caption{Cost-proxy sensitivity: LLM eBay FTCS and gain over Retrieval-Only as the rank correlation between style distance and transfer cost weakens.}
\label{tab:proxy_sensitivity}
\small
\begin{tabular}{ccc}
\toprule
\textbf{Rank correlation} & \textbf{LLM eBay FTCS} & \textbf{Gain over Retrieval-Only} \\
\midrule
0.9 & 71.8\% & 8.1 percentage points \\
0.3 & 56.4\% & 5.4 percentage points \\
\bottomrule
\end{tabular}
\end{table}

\section{Formal Market Model}
\label{app:formal_model}

\subsection{Agents and Type Spaces}

The market consists of $N$ sellers $\mathcal{I} = \{1, \ldots, N\}$ and $K$ buyers $\mathcal{K} = \{1, \ldots, K\}$. Each party holds private information that determines its strategic behavior.

\paragraph{Seller type.} Each seller $i$ has type $\tau_i^S = (\mathbf{v}_i, c_i, \mathcal{M}_i)$, where $\mathbf{v}_i \in \mathbb{R}^d$ is the true style vector (computed from the private training data), $c_i \geq 0$ is the per-purchase fulfillment cost, and $\mathcal{M}_i$ is the fine-tuned model. Fixed costs of creating the checkpoint are not modeled. The seller's action is to choose a reported vector $\hat{\mathbf{v}}_i \in \mathbb{R}^d$ and a posted price $p_i \geq 0$. In the default model, $p_i$ is set exogenously (e.g., by the platform based on market conditions); the seller's only strategic variable is the report $\hat{\mathbf{v}}_i$.

\paragraph{Buyer type.} Each buyer $k$ has type $\tau_k^B = (\mathbf{v}_k, B_k)$, where $\mathbf{v}_k \in \mathbb{R}^d$ is the target style vector and $B_k \geq 0$ is an optional budget cap. When $B_k = \infty$, the buyer faces no budget constraint and selects purely by surplus. The buyer's action is to choose a provider $i^* \in \mathcal{I}$ or to opt out (autarky).

\paragraph{Platform.} The platform is a non-strategic mediator that sets rules, conducts verification, and computes payments. It does not have a profit motive in the model; its objective is to maximize total welfare by setting mechanism parameters ($\lambda_{\text{scale}}$, $\delta$) and maintaining the verification infrastructure.

\subsection{Information Structure}

\begin{table}[h]
\centering
\caption{Information structure of LLM eBay. ``Knows'' = observes before acting; ``No'' = never observes; ``Computes'' = derives from model queries.}
\label{tab:info_structure}
\small
\begin{tabular}{lccc}
\toprule
\textbf{Quantity} & \textbf{Seller} & \textbf{Buyer} & \textbf{Platform} \\
\midrule
$\mathbf{v}_i$ (true style) & Knows & No & No direct access \\
$\hat{\mathbf{v}}_i$ (claim) & Reports & Observes after listing & Observes \\
$\mathbf{y}_i$ (verified signal) & May observe & Observes if published & Computes \\
$\mathbf{z}_i$ (matching signal) & Observes & Observes & Computes and publishes \\
$\mathbf{v}_k$ (buyer target) & No & Knows & Observes if submitted \\
$\mathcal{P}_{\text{eval}}$ (prompts) & No & No & Chooses and knows \\
$p_i$ (posted price) & Observes & Observes & Sets or approves \\
$T_i$ (payment) & Receives & No & Computes \\
\bottomrule
\end{tabular}
\end{table}

The critical asymmetry is between sellers and the platform. The seller knows $\mathbf{v}_i$ but the platform does not; the platform knows $\mathcal{P}_{\text{eval}}$ but the seller does not. The mechanism exploits this asymmetry: the seller cannot optimize the report $\hat{\mathbf{v}}_i$ against the specific prompts that will be used for verification, because those prompts are hidden. The seller can only optimize against the prior distribution of possible verification outcomes.

\subsection{Timing of the Marketplace Game}

The marketplace operates in the following sequence. Steps 1--5 constitute the listing phase; steps 6--7 constitute the trading phase.

\begin{algorithm}[h]
\caption{LLM eBay Protocol}
\label{alg:timing}
\begin{algorithmic}[1]
\STATE \textbf{Training:} Seller $i$ trains $\mathcal{M}_i$ on private data $\mathcal{D}_i$ and privately observes $\mathbf{v}_i = \Phi(\mathcal{D}_i)$.
\STATE \textbf{Listing:} Seller submits $(\mathcal{M}_i, \hat{\mathbf{v}}_i, p_i)$ to the platform.
\STATE \textbf{Prompt sampling:} Platform draws hidden evaluation prompts $\mathcal{P}_{\text{eval}} \sim \mathcal{P}$.
\STATE \textbf{Verification:} Platform queries $\mathcal{M}_i$ on $\mathcal{P}_{\text{eval}}$, computes synthetic data $\mathcal{D}_i^{\text{syn}}$, obtains $\mathbf{y}_i = \Phi(\mathcal{D}_i^{\text{syn}}) = \mathbf{v}_i + \epsilon_i$.
\STATE \textbf{Scoring and publication:} Platform computes the per-purchase transfer $T_i = p_i + \lambda_{\text{scale}}\, p_i \max(S(\hat{\mathbf{v}}_i, \mathbf{y}_i), -\delta)$ and publishes the listing with matching signal $\mathbf{z}_i = (\hat{\mathbf{v}}_i + \mathbf{y}_i)/2$.
\STATE \textbf{Buyer selection:} Each buyer $k$ selects $i_k^* = \arg\max_{i \in \mathcal{I}} U_k(i)$, or opts out if $\max_i U_k(i) = 0$.
\STATE \textbf{Trade and settlement:} Buyer $k$ receives checkpoint $\mathcal{M}_{i_k^*}$ and performs downstream fine-tuning; seller $i$ receives $T_i$ for each of its $q_i$ completed purchases.
\end{algorithmic}
\end{algorithm}

Two timing assumptions matter. First, all sellers list before any buyer selects; there is no dynamic entry or sequential listing. Second, the platform commits to the scoring rule parameters ($\lambda_{\text{scale}}$, $\delta$) before sellers submit, so sellers face a fixed mechanism when choosing their reports.

\subsection{Utility Definitions}

\paragraph{Buyer utility.} Buyer $k$ evaluates provider $i$ by estimated net compute savings relative to training from the base model, using the published matching signal $\mathbf{z}_i = (\hat{\mathbf{v}}_i + \mathbf{y}_i)/2$:
\begin{equation}
    U_k(i) = \max\{0, \text{Cost}(\mathbf{v}_{\text{base}}, \mathbf{v}_k) - p_i - \text{Cost}(\mathbf{z}_i, \mathbf{v}_k)\}
\end{equation}
The $\max\{0, \cdot\}$ operator reflects the buyer's outside option: if no listed model offers positive net savings, the buyer simply fine-tunes from the base model. The demand of seller $i$ is the number of purchases $q_i = |\{k \in \mathcal{K} : i_k^* = i\}|$; it depends on the seller's own report through $\mathbf{z}_i$ and on the other listings.

\paragraph{Seller profit.} Sellers are paid for each completed purchase, not per listing or per verification. Seller $i$'s expected profit is
\begin{equation}
    U_i^S(\hat{\mathbf{v}}_i) = \mathbb{E}\left[q_i \left(T(\hat{\mathbf{v}}_i, \mathbf{y}_i) - c_i\right)\right],
\end{equation}
where the expectation is over the verification noise $\epsilon_i$ (and any score noise). Because $q_i$ depends on $\hat{\mathbf{v}}_i$, a report affects profit both through demand and through the per-purchase transfer.

\paragraph{Realized seller utility.} In any single verification round, the realized utility is:
\begin{equation}
    u_i^S = q_i \left[\, p_i + \lambda_{\text{scale}}\, p_i \max(S(\hat{\mathbf{v}}_i, \mathbf{y}_i), -\delta) - c_i \right]
\end{equation}
A seller is classified as a ``loss-making provider'' if $u_i^S < 0$. These definitions match the released simulator, which computes seller revenue as demand times the price-scaled transfer and seller cost as demand times the unit cost.

\subsection{Welfare and Non-Rival Checkpoint Access}

Fine-tuned checkpoints are modeled as non-rival digital goods: unless otherwise stated, multiple buyers can reuse the same checkpoint without consuming it. This assumption matches downloadable model marketplaces (e.g., Hugging Face Model Hub) where downloading a checkpoint does not reduce its availability to other users. The non-rivalry assumption simplifies welfare analysis because each buyer's allocation decision is independent: there is no congestion effect and no need for combinatorial allocation.

Total social welfare is the sum of buyer surpluses under truthful reports:
\begin{equation}
    W = \sum_{k \in \mathcal{K}} \max_{i \in \mathcal{I}} \left[\text{Cost}(\mathbf{v}_{\text{base}}, \mathbf{v}_k) - p_i - \text{Cost}(\mathbf{v}_i, \mathbf{v}_k)\right]_+
\end{equation}

Appendix~\ref{app:new_experiments} (Section~\ref{app:capacity}) additionally tests capacity-constrained variants where each provider can serve at most $\kappa_i$ buyers. Under capacity constraints, welfare maximization becomes a combinatorial assignment problem. Score properness and the downside-risk cap continue to hold because they depend only on the scoring rule; the seller best-response condition (Theorem~\ref{thm:formal_br}) must be evaluated with the capacity-constrained demand.

\section{Theoretical Proofs}
\label{app:proofs}

\subsection{Proof of Theorem~\ref{thm:dsic} (Buyer-Side Dominant Optimality)}

\begin{theorem}[Buyer-Side Dominant Optimality, formal]
\label{thm:formal_buyer}
Fix the posted prices $(p_i)_{i \in \mathcal{I}}$ and the published signals $(\mathbf{z}_i)_{i \in \mathcal{I}}$, and assume non-rival checkpoint access. For every buyer $k$ and every profile of other buyers' actions, choosing $i_k^* \in \arg\max_{i \in \mathcal{I}} U_k(i)$ (or opting out when $\max_i U_k(i) = 0$) yields weakly higher $U_k$ than any other action.
\end{theorem}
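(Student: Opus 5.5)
The plan is to reduce the buyer's problem to a pointwise maximization over a finite action set. First I will make precise the buyer's action space $\mathcal{A}_k = \mathcal{I} \cup \{\varnothing\}$, where $\varnothing$ denotes opting out. The payoff to buyer $k$ from action $a$ is $U_k(a)$ of Eq.~\eqref{eq:buyer_utility} for $a = i \in \mathcal{I}$, and $0$ for $a = \varnothing$. The first key step is to show that this payoff depends only on buyer $k$'s own action and on the fixed quantities $(p_i, \mathbf{z}_i)_{i}$. It does not depend on the actions of the other buyers. This is where non-rival checkpoint access enters. Because a checkpoint is not consumed when sold, the set of available listings is the same for $k$ whatever the others do. The price $p_i$ is posted rather than determined by demand, and $\mathbf{z}_i$ is published before buyer selection in the protocol timing (Algorithm~\ref{alg:timing}, steps 5--6). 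Hence none of $p_i$, $\mathbf{z}_i$, $\mathbf{v}_k$ or $C_{\text{base}}$ is a function of $q_i$ or of the other buyers' choices.

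Given this separability, I will prove the inequality pointwise. Fix any profile $a_{-k}$ of other buyers' actions and any alternative action $a \in \mathcal{A}_k$. If $\max_i U_k(i) > 0$, then $i_k^*$ attains the maximum, so $U_k(i_k^*) \ge U_k(i)$ for every $i \in \mathcal{I}$ by definition of the argmax. It also satisfies $U_k(i_k^*) > 0 = $ the payoff from opting out. If $\max_i U_k(i) = 0$, the prescribed action is to opt out, with payoff $0$. Every listing has payoff $U_k(i) = \max(0,\cdot) = 0$ as well, so there is a weak tie. The argmax is nonempty because $\mathcal{I}$ is finite. In both cases the prescribed action weakly beats $a$, and since $a_{-k}$ was arbitrary this is weak dominance. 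I will add a remark that randomized actions are also covered, because expected payoff is a convex combination of these pure payoffs.

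The only delicate step is the separability argument. It needs care about which objects are truly fixed: I must check that no component of $U_k$ feeds back from aggregate demand. In particular, the transfer $T_i$ and the score affect sellers only, not the buyer's payment $p_i$. The check should also acknowledge the one place where separability fails, namely capacity-constrained variants (Appendix~\ref{app:new_experiments}). There, availability depends on $a_{-k}$, which is why the statement assumes non-rival access. I will close with a brief note that optimality is with respect to the estimated utility built from $\mathbf{z}_i$. This is not a claim about realized savings at the true $\mathbf{v}_i$, consistent with the informal remark following Theorem~\ref{thm:dsic}.
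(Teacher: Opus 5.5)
Your proposal is correct and follows essentially the same route as the paper: non-rival access makes $U_k(i)$ depend only on $\mathbf{v}_k$, the posted prices, and the published signals, so choosing an argmax weakly dominates every alternative whatever the other buyers do. Your explicit handling of the opt-out action, the finiteness of $\mathcal{I}$, and randomized actions fills in details that the paper's shorter proof leaves implicit.
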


\begin{proof}
Given fixed published listings and posted prices, buyer $k$'s utility from each model $i$ is:
\[
U_k(i) = \max\{0, C_{\text{base}} - p_i - \text{Cost}(\mathbf{z}_i, \mathbf{v}_k)\}
\]
This utility depends only on the buyer's own type $\mathbf{v}_k$, the listed prices, and the published signals. It does not depend on other buyers' actions because checkpoints are non-rival: buyer $j$'s selection of provider $i$ does not affect buyer $k$'s ability to also select provider $i$.

For any two providers $i^*$ and $j$ with $U_k(i^*) \geq U_k(j)$, selecting $i^*$ weakly dominates selecting $j$. Therefore, the surplus-maximizing choice $i^* = \arg\max_i U_k(i)$ weakly dominates every alternative, regardless of what other buyers choose.

This is an optimality property of the posted-price choice problem, not an auction-style incentive-compatibility guarantee. In an auction, dominant-strategy incentive compatibility means that truthfully bidding one's private value is optimal regardless of others' bids. Here, buyers do not report private values at all; they simply pick the most attractive listing, and the dominance follows from the posted-price structure and non-rivalry. If the marketplace used an auction format, an incentive-compatibility claim would require a separate argument involving the allocation and payment rules jointly.
\end{proof}

\subsection{Properness of the Quadratic Scoring Rule}

The untruncated quadratic score is:
\begin{equation}
    S(\hat{\mathbf{v}}_i, \mathbf{y}_i) = 2\langle \hat{\mathbf{v}}_i, \mathbf{y}_i \rangle - \|\hat{\mathbf{v}}_i\|_2^2
\end{equation}

\begin{lemma}[Strict Properness]
\label{lem:properness}
The quadratic scoring rule is strictly proper: $\mathbb{E}_{\mathbf{y}_i}[S(\hat{\mathbf{v}}_i, \mathbf{y}_i)]$ is uniquely maximized at $\hat{\mathbf{v}}_i = \mathbf{v}_i$.
\end{lemma}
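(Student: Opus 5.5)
The plan is to compute the expected score in closed form and complete the square. By Eq.~(1), $\mathbf{y}_i = \mathbf{v}_i + \epsilon_i$ with $\epsilon_i \sim \mathcal{N}(0,\sigma^2 I)$, so $\mathbb{E}[\mathbf{y}_i] = \mathbf{v}_i$, and this mean is finite. The report $\hat{\mathbf{v}}_i$ is chosen before verification and is therefore a deterministic vector with respect to the expectation over $\epsilon_i$.

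Linearity of expectation then gives
\[
\mathbb{E}_{\mathbf{y}_i}[S(\hat{\mathbf{v}}_i,\mathbf{y}_i)] = 2\langle \hat{\mathbf{v}}_i, \mathbb{E}[\mathbf{y}_i]\rangle - \|\hat{\mathbf{v}}_i\|_2^2 = 2\langle \hat{\mathbf{v}}_i, \mathbf{v}_i\rangle - \|\hat{\mathbf{v}}_i\|_2^2 .
\]
Completing the square rewrites this as $\|\mathbf{v}_i\|_2^2 - \|\hat{\mathbf{v}}_i - \mathbf{v}_i\|_2^2$. This matches the identity already quoted in Section~\ref{subsec:payment}.

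The first term does not depend on the report. The second term is nonnegative and equals zero if and only if $\hat{\mathbf{v}}_i = \mathbf{v}_i$, since $\|\cdot\|_2$ is a norm. Hence $\hat{\mathbf{v}}_i = \mathbf{v}_i$ is a maximizer. Any other report loses exactly $\|\hat{\mathbf{v}}_i - \mathbf{v}_i\|_2^2 > 0$, which proves uniqueness and hence strict properness.

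No step here is a real obstacle. The only points needing care are the following.
\begin{itemize}
\item The expectation exists. This holds because $S$ is affine in $\mathbf{y}_i$ and Gaussian noise has a finite mean.
\item The argument uses only $\mathbb{E}[\epsilon_i]=0$, not Gaussianity.
\end{itemize}
The same computation with $\mu_i$ in place of $\mathbf{v}_i$ gives the more general form used in Proposition~\ref{prop:score_proper}. When the admissible report set is a closed convex set rather than all of $\mathbb{R}^d$, the maximizer becomes the projection $r_i^\star$ of $\mu_i$ onto that set. This explains the loss expression $\|r_i-\mu_i\|^2 - \|r_i^\star-\mu_i\|^2$ stated there.
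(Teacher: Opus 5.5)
Your proof is correct and follows the same route as the paper: use linearity of expectation with $\mathbb{E}[\mathbf{y}_i]=\mathbf{v}_i$, rewrite the expected score as $\|\mathbf{v}_i\|_2^2 - \|\hat{\mathbf{v}}_i - \mathbf{v}_i\|_2^2$, and read off the unique maximizer. One small point on your closing remark: you restrict the report set to be convex, whereas Proposition~\ref{prop:formal_proper} allows any closed nonempty set; there the projection $r_i^\star$ exists but need not be unique.
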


\begin{proof}
Since $\mathbb{E}[\mathbf{y}_i] = \mathbf{v}_i$:
\begin{align}
    \mathbb{E}[S(\hat{\mathbf{v}}_i, \mathbf{y}_i)] &= 2\langle \hat{\mathbf{v}}_i, \mathbb{E}[\mathbf{y}_i] \rangle - \|\hat{\mathbf{v}}_i\|_2^2 \nonumber \\
    &= 2\langle \hat{\mathbf{v}}_i, \mathbf{v}_i \rangle - \|\hat{\mathbf{v}}_i\|_2^2 \nonumber \\
    &= \|\mathbf{v}_i\|_2^2 - \|\hat{\mathbf{v}}_i - \mathbf{v}_i\|_2^2
\end{align}
The last step uses the identity $2\langle \mathbf{a}, \mathbf{b} \rangle - \|\mathbf{a}\|^2 = \|\mathbf{b}\|^2 - \|\mathbf{a} - \mathbf{b}\|^2$.

The term $\|\mathbf{v}_i\|_2^2$ is constant with respect to $\hat{\mathbf{v}}_i$, so the expected score is maximized when $\|\hat{\mathbf{v}}_i - \mathbf{v}_i\|_2^2 = 0$, i.e., when $\hat{\mathbf{v}}_i = \mathbf{v}_i$. Any deviation $\hat{\mathbf{v}}_i \neq \mathbf{v}_i$ incurs an expected loss of exactly $\|\hat{\mathbf{v}}_i - \mathbf{v}_i\|_2^2$, establishing strict properness.
\end{proof}

\begin{remark}
The quadratic scoring rule belongs to the Bregman family of proper scoring rules, which includes the logarithmic and spherical rules as special cases. The Bregman divergence associated with the quadratic rule is the squared Euclidean distance, which gives the rule its simple geometric interpretation: the expected penalty for misreporting is exactly the squared distance from the truth.
\end{remark}

Lemma~\ref{lem:properness} assumes unbiased verification. The following formal statement of Proposition~\ref{prop:score_proper} drops this assumption and allows a restricted report space. The seller observes information $\mathcal{M}_i$ (its checkpoint and data) before reporting; the verifier later produces $\mathbf{y}_i$. Let
\[
\mu_i = \mathbb{E}[\mathbf{y}_i \mid \mathcal{M}_i]
\]
denote the verifier signal expected by the seller, let $\mathcal{R}$ be the allowed report space (assumed closed and nonempty), and let
\[
r_i^\star \in \arg\min_{r \in \mathcal{R}} \|r - \mu_i\|^2
\]
be an allowed report closest to the expected verifier signal. Write $S_i(r; \mathbf{y}_i, \xi_i) = S(r, \mathbf{y}_i) + \xi_i$, where $\xi_i$ is score noise that does not depend on the report and has zero conditional mean.

\begin{proposition}[Score Properness, formal]
\label{prop:formal_proper}
For every allowed report $r_i \in \mathcal{R}$,
\[
\mathbb{E}[S_i(r_i^\star; \mathbf{y}_i, \xi_i) \mid \mathcal{M}_i] - \mathbb{E}[S_i(r_i; \mathbf{y}_i, \xi_i) \mid \mathcal{M}_i] = \|r_i - \mu_i\|^2 - \|r_i^\star - \mu_i\|^2 \;\geq\; 0 .
\]
Hence no allowed report gives a higher expected untruncated score than $r_i^\star$, and if the projection $r_i^\star$ is unique, it is the unique maximizer. Under unbiased verification ($\mu_i = \mathbf{v}_i$) and $\mathbf{v}_i \in \mathcal{R}$, $r_i^\star = \mathbf{v}_i$.
\end{proposition}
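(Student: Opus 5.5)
The plan is to replay the computation of Lemma~\ref{lem:properness}, now conditioning on $\mathcal{M}_i$ and replacing $\mathbf{v}_i$ by the seller's expected signal $\mu_i$. First we dispose of the score noise. Because $\xi_i$ does not depend on the report and $\mathbb{E}[\xi_i \mid \mathcal{M}_i]=0$, linearity of conditional expectation gives $\mathbb{E}[S_i(r;\mathbf{y}_i,\xi_i)\mid\mathcal{M}_i]=\mathbb{E}[S(r,\mathbf{y}_i)\mid\mathcal{M}_i]$ for every fixed $r\in\mathcal{R}$. The noise therefore cancels in the difference and can be ignored. The report $r$ is chosen before $\mathbf{y}_i$ is realized and is measurable with respect to the seller's information, so it can be pulled out of the conditional expectation.

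Next we compute the conditional expected score. Since $S$ is affine in $\mathbf{y}_i$,
\[
\mathbb{E}[S(r,\mathbf{y}_i)\mid\mathcal{M}_i]=2\langle r,\mu_i\rangle-\|r\|^2=\|\mu_i\|^2-\|r-\mu_i\|^2,
\]
using the same polarization identity as in Lemma~\ref{lem:properness}. The only integrability requirement is that $\mathbf{y}_i$ have a finite conditional mean, which holds under the Gaussian model; in general we would state it as a standing assumption. Evaluating this expression at $r_i^\star$ and at $r_i$ and subtracting, the $\|\mu_i\|^2$ terms cancel and we obtain exactly $\|r_i-\mu_i\|^2-\|r_i^\star-\mu_i\|^2$.

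Nonnegativity of this difference is immediate from the definition of $r_i^\star$ as a minimizer of $\|r-\mu_i\|^2$ over $\mathcal{R}$. Such a minimizer exists because $\mathcal{R}$ is closed and nonempty: the function is continuous and coercive, so it attains its minimum on a closed set. The difference is zero precisely when $r_i$ is also a minimizer, so if the projection is unique, $r_i^\star$ is the unique maximizer. Finally, if $\mu_i=\mathbf{v}_i\in\mathcal{R}$, the minimum value is $0$ and it is attained only at $\mathbf{v}_i$, so $r_i^\star=\mathbf{v}_i$.

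There is no real obstacle here. The only step needing care is bookkeeping: making explicit that the report is fixed given $\mathcal{M}_i$, which justifies pulling it out of the conditional expectation, and that the noise $\xi_i$ is report-independent, which justifies dropping it.
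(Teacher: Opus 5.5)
Your proposal is correct and follows the paper's proof essentially step for step. You drop the zero-mean, report-independent noise by linearity, write the conditional expected score as $\|\mu_i\|^2 - \|r-\mu_i\|^2$, subtract, and read off nonnegativity and uniqueness from the definition of $r_i^\star$ as a projection onto $\mathcal{R}$. Your added remarks on existence of the minimizer (closedness plus coercivity) and on the unbiased case $\mu_i=\mathbf{v}_i\in\mathcal{R}$ make explicit details that the paper leaves implicit.
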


\begin{proof}
By linearity of expectation and $\mathbb{E}[\xi_i \mid \mathcal{M}_i] = 0$, $\mathbb{E}[S_i(r; \mathbf{y}_i, \xi_i) \mid \mathcal{M}_i] = 2\langle r, \mu_i \rangle - \|r\|^2 = \|\mu_i\|^2 - \|r - \mu_i\|^2$ for any report $r$. Taking the difference between $r_i^\star$ and $r_i$ gives the identity, and the inequality follows from the definition of $r_i^\star$ as a minimizer of $\|r - \mu_i\|^2$ over $\mathcal{R}$. If the minimizer is unique, the inequality is strict for every $r_i \neq r_i^\star$.
\end{proof}

\subsection{Approximate Properness Under Truncation}

Define the truncated score:
\[
\tilde{S}_\delta(\hat{\mathbf{v}}, \mathbf{y}) = \max\{S(\hat{\mathbf{v}}, \mathbf{y}), -\delta\}
\]

The truncation replaces any realized score below $-\delta$ with the constant $-\delta$. This introduces a non-linearity that could, in principle, distort the incentive to report truthfully. The following results bound this distortion.

\begin{proposition}[Bounded Distortion from Truncation]
\label{prop:truncation_distortion}
The expected distortion introduced by truncation is bounded:
\[
0 \leq \mathbb{E}[\tilde{S}_\delta(\hat{\mathbf{v}}, \mathbf{y}) - S(\hat{\mathbf{v}}, \mathbf{y})] \leq \mathbb{E}[(-\delta - S(\hat{\mathbf{v}}, \mathbf{y}))_+]
\]
\end{proposition}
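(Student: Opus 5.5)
The plan is to prove the bound pointwise, for every fixed realization of $\mathbf{y}$, and then take expectations. Write $S = S(\hat{\mathbf{v}}, \mathbf{y})$. Then $\tilde{S}_\delta = \max\{S, -\delta\}$ and the difference is
\[
\tilde{S}_\delta - S = \max\{0, -\delta - S\} = (-\delta - S)_+ ,
\]
which follows from $\max\{a,b\} - a = \max\{0, b-a\}$ with $a = S$ and $b = -\delta$. So the distortion is exactly the hinge $(-\delta - S)_+$ for every realization.

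The two inequalities then follow directly from this identity.

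\textbf{Lower bound.} The hinge is nonnegative, so $\tilde{S}_\delta - S \ge 0$ pointwise, and monotonicity of expectation gives $\mathbb{E}[\tilde{S}_\delta - S] \ge 0$.

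\textbf{Upper bound.} Taking expectations of the identity gives $\mathbb{E}[\tilde{S}_\delta - S] = \mathbb{E}[(-\delta - S)_+]$. Equality implies the stated $\le$.

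The only real obstacle is integrability, since we need both expectations to be finite and linearity to apply. I would handle it with the Gaussian model $\mathbf{y} = \mathbf{v} + \epsilon$, $\epsilon \sim \mathcal{N}(0,\sigma^2 I)$. Here $S = 2\langle \hat{\mathbf{v}}, \mathbf{y}\rangle - \|\hat{\mathbf{v}}\|^2$ is affine in a Gaussian, so it is Gaussian with finite mean. The hinge satisfies $0 \le (-\delta - S)_+ \le |S| + \delta$, so it is integrable. Then $\tilde{S}_\delta = S + (-\delta - S)_+$ is integrable as well, and the difference of expectations equals the expectation of the difference.

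If the noise is more general, I would assume $\mathbb{E}\|\mathbf{y}\| < \infty$ and run the same argument. I would also note that the same identity gives an explicit Gaussian tail expression. It is not needed for the statement, but it makes the bound quantitative for later use (Corollary~\ref{cor:approx_opt}).
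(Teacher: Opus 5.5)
Your proposal is correct and follows the paper's proof: the paper also shows the pointwise identity $\tilde{S}_\delta - S = (-\delta - S)_+ \ge 0$, using a two-case split on whether $S \ge -\delta$ rather than your $\max\{a,b\} - a$ identity, and then takes expectations (so the upper bound in fact holds with equality). Your integrability check under the Gaussian noise model is a small addition that the paper leaves implicit.
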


\begin{proof}
Write $\Delta(\hat{\mathbf{v}}, \mathbf{y}) = \tilde{S}_\delta(\hat{\mathbf{v}}, \mathbf{y}) - S(\hat{\mathbf{v}}, \mathbf{y})$.

\textit{Case 1:} $S(\hat{\mathbf{v}}, \mathbf{y}) \geq -\delta$. Then $\tilde{S}_\delta = S$, so $\Delta = 0$.

\textit{Case 2:} $S(\hat{\mathbf{v}}, \mathbf{y}) < -\delta$. Then $\tilde{S}_\delta = -\delta$, so $\Delta = -\delta - S > 0$.

In both cases, $\Delta = (-\delta - S)_+ \geq 0$. Taking expectations gives the result.
\end{proof}

\begin{corollary}[Approximate Optimality]
\label{cor:approx_opt}
Let $\eta(\hat{\mathbf{v}}) = \mathbb{E}[(-\delta - S(\hat{\mathbf{v}}, \mathbf{y}))_+]$ denote the expected truncation correction at report $\hat{\mathbf{v}}$. For any alternative report $\hat{\mathbf{v}}' \neq \mathbf{v}_i$:
\[
\mathbb{E}[\tilde{S}_\delta(\mathbf{v}_i, \mathbf{y})] - \mathbb{E}[\tilde{S}_\delta(\hat{\mathbf{v}}', \mathbf{y})] \geq \|\hat{\mathbf{v}}' - \mathbf{v}_i\|^2 - \eta(\mathbf{v}_i) - \eta(\hat{\mathbf{v}}')
\]
If $\eta(\hat{\mathbf{v}}) \leq \bar{\eta}$ for all $\hat{\mathbf{v}}$ in a neighborhood of $\mathbf{v}_i$, then truthful reporting is $2\bar{\eta}$-approximately optimal under the truncated score.
\end{corollary}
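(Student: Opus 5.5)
The plan is to combine Proposition~\ref{prop:truncation_distortion} with Lemma~\ref{lem:properness}, writing each truncated expected score as an untruncated expected score plus its truncation correction. For any report $\hat{\mathbf{v}}$, the proof of Proposition~\ref{prop:truncation_distortion} shows pointwise that $\tilde{S}_\delta(\hat{\mathbf{v}},\mathbf{y}) = S(\hat{\mathbf{v}},\mathbf{y}) + (-\delta - S(\hat{\mathbf{v}},\mathbf{y}))_+$. Taking expectations gives the exact decomposition
\[
\mathbb{E}[\tilde{S}_\delta(\hat{\mathbf{v}},\mathbf{y})] = \mathbb{E}[S(\hat{\mathbf{v}},\mathbf{y})] + \eta(\hat{\mathbf{v}}), \qquad 0 \le \eta(\hat{\mathbf{v}}).
\]
Integrability is not an issue: $\mathbf{y}$ is Gaussian, $S$ is affine in $\mathbf{y}$, and $\eta$ is therefore finite. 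I will apply this decomposition at both $\mathbf{v}_i$ and $\hat{\mathbf{v}}'$.

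Next I subtract the two decompositions. This gives
\[
\mathbb{E}[\tilde{S}_\delta(\mathbf{v}_i,\mathbf{y})] - \mathbb{E}[\tilde{S}_\delta(\hat{\mathbf{v}}',\mathbf{y})] = \bigl(\mathbb{E}[S(\mathbf{v}_i,\mathbf{y})] - \mathbb{E}[S(\hat{\mathbf{v}}',\mathbf{y})]\bigr) + \eta(\mathbf{v}_i) - \eta(\hat{\mathbf{v}}').
\]
Under unbiased verification, the proof of Lemma~\ref{lem:properness} evaluates the first bracket exactly as $\|\hat{\mathbf{v}}'-\mathbf{v}_i\|^2$. Since $\eta(\mathbf{v}_i)\ge 0$, I can bound $\eta(\mathbf{v}_i) \ge -\eta(\mathbf{v}_i)$. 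This yields the stated bound $\|\hat{\mathbf{v}}'-\mathbf{v}_i\|^2 - \eta(\mathbf{v}_i) - \eta(\hat{\mathbf{v}}')$, which is slightly looser than what the argument actually gives. I will remark that the sharper bound $\|\hat{\mathbf{v}}'-\mathbf{v}_i\|^2 - \eta(\hat{\mathbf{v}}')$ also holds, but the symmetric form is all the statement asks for.

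For the approximate-optimality clause, suppose $\eta \le \bar\eta$ throughout a neighborhood of $\mathbf{v}_i$. Then for every $\hat{\mathbf{v}}'$ in that neighborhood, the right-hand side is at least $\|\hat{\mathbf{v}}'-\mathbf{v}_i\|^2 - 2\bar\eta \ge -2\bar\eta$. So no such deviation improves the expected truncated score by more than $2\bar\eta$, which is exactly $2\bar\eta$-approximate optimality of truthful reporting.

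The only delicate point is scope. The approximate-optimality claim is local, restricted to the neighborhood on which the bound $\bar\eta$ holds. I will state that scope explicitly rather than claim global optimality. Globally, the first inequality still applies verbatim, and distant deviations are handled by the growing $\|\hat{\mathbf{v}}'-\mathbf{v}_i\|^2$ term whenever $\eta(\hat{\mathbf{v}}')$ is controlled. Beyond this, there is no real obstacle; the argument is bookkeeping on top of the two earlier results.
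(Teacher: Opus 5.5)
Your proof is correct and is essentially the paper's argument: Lemma~\ref{lem:properness} gives the exact gap $\|\hat{\mathbf{v}}'-\mathbf{v}_i\|^2$ between the untruncated expected scores, and the truncation correction from Proposition~\ref{prop:truncation_distortion} raises each expected score by a nonnegative amount (exactly $\eta$ in your version, at most $\eta$ in the paper's), which gives the sharper bound $\|\hat{\mathbf{v}}'-\mathbf{v}_i\|^2-\eta(\hat{\mathbf{v}}')$ that both you and the paper then weaken by $\eta(\mathbf{v}_i)\ge 0$. Restricting the $2\bar\eta$ clause to deviations inside the neighborhood, as you do, is the right reading, which the paper's proof leaves implicit: outside the neighborhood $\eta(\hat{\mathbf{v}}')$ is uncontrolled, and under large noise a distant report can beat truth under the truncated score by more than $2\bar\eta$.
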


\begin{proof}
From Lemma~\ref{lem:properness}, $\mathbb{E}[S(\mathbf{v}_i, \mathbf{y})] - \mathbb{E}[S(\hat{\mathbf{v}}', \mathbf{y})] = \|\hat{\mathbf{v}}' - \mathbf{v}_i\|^2$. Truncation adds at most $\eta(\mathbf{v}_i)$ to the truthful score and at most $\eta(\hat{\mathbf{v}}')$ to the deviant score:
\begin{align}
\mathbb{E}[\tilde{S}_\delta(\mathbf{v}_i, \mathbf{y})] &\geq \mathbb{E}[S(\mathbf{v}_i, \mathbf{y})] \nonumber \\
\mathbb{E}[\tilde{S}_\delta(\hat{\mathbf{v}}', \mathbf{y})] &\leq \mathbb{E}[S(\hat{\mathbf{v}}', \mathbf{y})] + \eta(\hat{\mathbf{v}}')
\end{align}
Subtracting gives $\mathbb{E}[\tilde{S}_\delta(\mathbf{v}_i, \mathbf{y})] - \mathbb{E}[\tilde{S}_\delta(\hat{\mathbf{v}}', \mathbf{y})] \geq \|\hat{\mathbf{v}}' - \mathbf{v}_i\|^2 - \eta(\hat{\mathbf{v}}')$. Adding the upper bound on the truthful distortion gives the stated inequality.
\end{proof}

\paragraph{Truncation probability under default parameters.} Under truthful reporting $\hat{\mathbf{v}}_i = \mathbf{v}_i$, the untruncated score is:
\[
S(\mathbf{v}_i, \mathbf{y}_i) = 2\langle \mathbf{v}_i, \mathbf{v}_i + \epsilon_i \rangle - \|\mathbf{v}_i\|^2 = \|\mathbf{v}_i\|^2 + 2\langle \mathbf{v}_i, \epsilon_i \rangle
\]
For unit-norm vectors, $\|\mathbf{v}_i\|^2 = 1$ and $2\langle \mathbf{v}_i, \epsilon_i \rangle \sim \mathcal{N}(0, 4\sigma^2)$. Truncation activates when $S < -\delta$, i.e., when $2\langle \mathbf{v}_i, \epsilon_i \rangle < -1 - \delta$. At $\sigma = 0.15$ and $\delta = 0.20$:
\[
\Pr[S < -\delta] = \Pr\left[\mathcal{N}(0, 1) < \frac{-(1+0.20)}{2 \times 0.15}\right] = \Pr[\mathcal{N}(0,1) < -4.0] < 0.00003
\]
This confirms that $\eta(\mathbf{v}_i)$ is negligible under our default parameters, making truthful reporting essentially optimal under the truncated score.

\paragraph{Gaming and the scoring rule.} Suppose seller $i$ shifts the report toward a popular buyer region: $\hat{\mathbf{v}}_i = (1 - g)\mathbf{v}_i + g\mathbf{v}_{k'}$ for some buyer $k'$ and gaming intensity $g > 0$. If $\mathbf{v}_{k'} \neq \mathbf{v}_i$, then $\|\hat{\mathbf{v}}_i - \mathbf{v}_i\| = g\|\mathbf{v}_{k'} - \mathbf{v}_i\| > 0$, and under unbiased verification the expected untruncated score drops by $g^2\|\mathbf{v}_{k'} - \mathbf{v}_i\|^2$. Under per-purchase payment, the shift may also attract more buyers. Whether truthful reporting remains a best response is decided by Theorem~\ref{thm:formal_br}, which compares this demand gain with the scoring loss weighted by $\lambda_{\text{scale}} p_i$; we do not claim an unconditional equilibrium. In the simulations, seller responses follow the reduced-form rule $g_{\text{eff}} = g_{\text{base}} \cdot e^{-c\lambda_{\text{scale}}}$ (Appendix~\ref{app:sim_protocol}), under which a larger $\lambda_{\text{scale}}$ raises the scoring loss and lowers the effective gaming level. This rule is a controlled behavioral input, not a solution of the reporting game.

\subsection{Seller Best Response under Per-Purchase Payment}
\label{app:best_response}

We now state Theorem~\ref{thm:best_response} formally. Fix the other sellers' reports, the posted prices, and the costs. Let $r_i^\star$ be the verifier-consistent report of Proposition~\ref{prop:formal_proper}; it equals $\mathbf{v}_i$ under unbiased verification. For a deviation $r_i$, let $q_i^\star, \tilde S_i^\star$ and $q_i^R, \tilde S_i^R$ be the demand and the truncated score $\max\{S_i, -\delta\}$ under $r_i^\star$ and under $r_i$, evaluated with the same verifier and score-noise draws (a coupling of the two reports). Let
\[
m_i^\star = T_i(r_i^\star) - c_i = p_i + \lambda_{\text{scale}}\, p_i\, \tilde S_i^\star - c_i
\]
be the per-purchase margin under $r_i^\star$.

\begin{theorem}[Seller Best Response, formal]
\label{thm:formal_br}
Under the per-purchase seller utility $U_i^S(r) = \mathbb{E}[q_i(r)(T_i(r) - c_i)]$, the report $r_i^\star$ is a best response if and only if, for every deviation $r_i$,
\[
\mathbb{E}\big[(q_i^R - q_i^\star)\, m_i^\star\big] \;\le\; \lambda_{\text{scale}}\, p_i\, \mathbb{E}\big[q_i^R\,(\tilde S_i^\star - \tilde S_i^R)\big].
\]
\end{theorem}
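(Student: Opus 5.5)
The proof is a direct algebraic rewriting of the utility difference $U_i^S(r_i^\star) - U_i^S(r_i)$ under the coupling described before the statement. Both reports are evaluated with the same verifier and score-noise draws. The posted price $p_i$, the cost $c_i$, and $\lambda_{\text{scale}}$ are fixed, as are the other sellers' reports. The realized per-purchase margin under a deviation is therefore
\[
m_i^R = p_i + \lambda_{\text{scale}}\, p_i\, \tilde S_i^R - c_i = m_i^\star - \lambda_{\text{scale}}\, p_i\, (\tilde S_i^\star - \tilde S_i^R).
\]
This identity holds pointwise on every realization, because the only term in $T_i$ that depends on the report is the truncated score.

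Next I would write the realized profit under the deviation as
\[
q_i^R m_i^R = q_i^R m_i^\star - \lambda_{\text{scale}}\, p_i\, q_i^R(\tilde S_i^\star - \tilde S_i^R).
\]
Subtracting $q_i^\star m_i^\star$ gives
\[
q_i^R m_i^R - q_i^\star m_i^\star = (q_i^R - q_i^\star)\, m_i^\star - \lambda_{\text{scale}}\, p_i\, q_i^R(\tilde S_i^\star - \tilde S_i^R).
\]
Taking expectations is legitimate because the coupling puts both reports on the same probability space, and every term is integrable since $q_i \le K$ and the truncated scores have finite mean under Gaussian noise. The result is
\[
U_i^S(r_i) - U_i^S(r_i^\star) = \mathbb{E}\big[(q_i^R - q_i^\star)\, m_i^\star\big] - \lambda_{\text{scale}}\, p_i\, \mathbb{E}\big[q_i^R(\tilde S_i^\star - \tilde S_i^R)\big].
\]
The expected utility of each report depends only on its own marginal law, so it is unaffected by the choice of coupling.

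By definition, $r_i^\star$ is a best response exactly when $U_i^S(r_i) \le U_i^S(r_i^\star)$ for every allowed deviation $r_i \in \mathcal{R}$. By the identity above, that is exactly the displayed inequality. Both directions of the equivalence therefore follow at once.

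The main point needing care is not the algebra but making the setup precise. First, I would define $q_i$ as a measurable function of the realized signals and state how ties in buyer choice are broken, so that $q_i^R$ and $q_i^\star$ are well defined on the common space. Second, I would note that any score noise $\xi_i$ enters the truncated scores in the same way under both reports. Finally, I would remark that the result deliberately makes no use of Proposition~\ref{prop:formal_proper} or Corollary~\ref{cor:approx_opt}. Those results only indicate when the right-hand side is large, roughly $\lambda_{\text{scale}} p_i \|r_i - \mu_i\|^2$ up to truncation error. That is what makes the condition interpretable, not an input to its proof.
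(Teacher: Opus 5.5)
Your proposal is correct and follows the paper's proof: under the coupling, you rewrite the deviation margin pointwise as $m_i^\star - \lambda_{\text{scale}}\, p_i(\tilde S_i^\star - \tilde S_i^R)$, take expectations of the profit difference, and identify the best-response condition with nonnegativity of $U_i^S(r_i^\star) - U_i^S(r_i)$ for every deviation. Your added remarks are sound refinements that the paper leaves implicit: integrability, tie-breaking in $q_i$, and the fact that the expected utilities do not depend on the coupling.
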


\begin{proof}
Under deviation $r_i$, the per-purchase margin is $T_i(r_i) - c_i = p_i + \lambda_{\text{scale}} p_i \tilde S_i^R - c_i = m_i^\star - \lambda_{\text{scale}} p_i (\tilde S_i^\star - \tilde S_i^R)$. Hence, with the coupled draws,
\begin{align*}
U_i^S(r_i^\star) - U_i^S(r_i) &= \mathbb{E}[q_i^\star m_i^\star] - \mathbb{E}\big[q_i^R \big(m_i^\star - \lambda_{\text{scale}} p_i(\tilde S_i^\star - \tilde S_i^R)\big)\big] \\
&= -\mathbb{E}\big[(q_i^R - q_i^\star) m_i^\star\big] + \lambda_{\text{scale}} p_i\, \mathbb{E}\big[q_i^R(\tilde S_i^\star - \tilde S_i^R)\big].
\end{align*}
The report $r_i^\star$ is a best response if and only if this difference is non-negative for every $r_i$, which is the stated condition.
\end{proof}

The left side is the expected profit from the additional demand a deviation attracts, valued at the truthful margin; the right side is the corresponding loss in score-based payment. When demand does not respond to the report ($q_i^R = q_i^\star$), the left side vanishes and the condition reduces to score properness on the purchased units (Proposition~\ref{prop:formal_proper} and Corollary~\ref{cor:approx_opt}). The theorem covers unilateral deviations in the report only. It does not cover sellers who change their checkpoint's behavior on verifier prompts or who coordinate their reports; Appendix~\ref{app:stress_tests} evaluates both empirically.

\subsection{Proof of Proposition~\ref{prop:epir} (Downside-Risk Cap)}

\begin{proof}
From Eq.~\eqref{eq:payment}:
\[
T_i = p_i + \lambda_{\text{scale}}\, p_i \max(S_i, -\delta) \geq p_i - \lambda_{\text{scale}}\, p_i\, \delta = p_i(1 - \lambda_{\text{scale}}\delta)
\]
The inequality holds because $\max(S_i, -\delta) \geq -\delta$ by construction. Therefore the seller's realized utility satisfies:
\[
u_i^S = q_i\,(T_i - c_i) \geq q_i \left[ p_i(1 - \lambda_{\text{scale}}\delta) - c_i \right]
\]
If the posted price is calibrated so that $p_i(1 - \lambda_{\text{scale}}\delta) \geq c_i$, then $u_i^S \geq 0$ regardless of the verification outcome $\mathbf{y}_i$.

This guarantee is unconditional on the noise realization: even in the worst case where $S_i$ reaches its floor $-\delta$, the seller's loss is bounded. Without truncation ($\delta = \infty$), the score $S_i$ can be arbitrarily negative when $\epsilon_i$ is large, so no finite posted price guarantees non-negative utility with certainty.
\end{proof}

\begin{remark}[Price calibration in practice]
The condition $p_i(1 - \lambda_{\text{scale}}\delta) \geq c_i$ is easy to satisfy. In the default setting ($\lambda_{\text{scale}} = 0.60$, $\delta = 0.20$), the worst-case adjustment removes $\lambda_{\text{scale}}\delta = 12\%$ of the posted price, so a price at least $1/0.88 \approx 1.14$ times the per-purchase cost suffices. The default simulation sets posted prices to 1.15 times the per-purchase cost. The platform can enforce this condition by setting a minimum listing price.
\end{remark}

\subsection{Welfare Maximization}

\begin{theorem}
\label{thm:welfare}
Under truthful reporting, noiseless verification, and non-rival checkpoint access, buyer-side surplus maximization implements the welfare-maximizing assignment among posted-price feasible matches.
\end{theorem}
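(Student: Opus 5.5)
The plan is to reduce the welfare claim to a buyer-by-buyer maximization, using non-rivalry to decouple buyers. Under truthful reporting we have $\hat{\mathbf{v}}_i = \mathbf{v}_i$. Under noiseless verification, $\epsilon_i = 0$, so $\mathbf{y}_i = \mathbf{v}_i$. The published matching signal of Eq.~\eqref{eq:matching_signal} is therefore $\mathbf{z}_i = \tfrac12(\mathbf{v}_i + \mathbf{v}_i) = \mathbf{v}_i$.

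Substituting this into the buyer utility gives
\[
U_k(i) = \bigl[\text{Cost}(\mathbf{v}_{\text{base}}, \mathbf{v}_k) - p_i - \text{Cost}(\mathbf{v}_i, \mathbf{v}_k)\bigr]_+ ,
\]
which is exactly the summand that appears inside the welfare $W$ of Appendix~\ref{app:formal_model}. So each buyer's estimated surplus coincides with its true surplus, and there is no wedge between what the buyer maximizes and what welfare counts.

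Next I will define the feasible set precisely. A posted-price feasible assignment maps each buyer $k$ either to some $i \in \mathcal{I}$ or to the outside option, with prices fixed at $(p_i)$. Because checkpoints are non-rival, there are no capacity constraints, and the feasible set is the Cartesian product of the individual buyers' choice sets. The total welfare of an assignment $a$ is $\sum_k w_k(a_k)$, where $w_k(i)$ is the bracketed quantity above and $w_k(\text{opt out}) = 0$. A sum of separable terms over a product domain is maximized by maximizing each term independently. Hence
\[
\max_a \sum_k w_k(a_k) = \sum_k \max\Bigl\{0, \max_i w_k(i)\Bigr\} = W .
\]

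Then I will invoke Theorem~\ref{thm:formal_buyer}: each buyer picks $i_k^* \in \arg\max_i U_k(i)$, or opts out when this maximum is $0$. This choice attains $\max\{0, \max_i w_k(i)\}$, so the profile of buyer choices achieves $W$ and is welfare-maximizing.

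Two details need care. The first is ties: any tie-breaking rule yields the same welfare, so the conclusion holds for any selection from the argmax. The second is the accounting of welfare, which I expect to be the only real obstacle. The paper defines $W$ as the sum of buyer surplus net of price. Prices are transfers, so strictly speaking seller revenue should be added back. With non-rival goods and fixed $p_i$, however, the optimization is over assignments at given prices, and that is exactly what "among posted-price feasible matches" restricts to. I would state explicitly that welfare is taken as defined by $W$; equivalently, prices are treated as part of the feasibility constraint. The theorem is then the decomposition argument above, and it does not claim optimality against price-free assignments.
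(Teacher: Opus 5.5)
Your proposal is correct and follows essentially the same argument as the paper. With $\mathbf{z}_i = \mathbf{v}_i$, each buyer's objective coincides with its summand in $W$, and non-rivalry splits welfare maximization into independent per-buyer maximizations that surplus-maximizing buyers attain; your product-set formulation and tie-breaking remark add care the paper leaves implicit. On the accounting caveat, the justification is simply that the theorem is stated for $W$ as the paper defines it (buyer surplus net of price). Fixing prices does not make this equivalent to a full-surplus objective that adds back seller margins, because seller revenue still depends on which listing (or the outside option) each buyer takes, and under that objective buyer choice at posted prices need not be optimal.
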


\begin{proof}
Under truthful reports ($\hat{\mathbf{v}}_i = \mathbf{v}_i$ for all $i$) and noiseless verification ($\mathbf{y}_i = \mathbf{v}_i$), the published signal is $\mathbf{z}_i = \mathbf{v}_i$, and each buyer $k$ selects:
\[
i_k^* = \arg\max_i \left[\text{Cost}(\mathbf{v}_{\text{base}}, \mathbf{v}_k) - p_i - \text{Cost}(\mathbf{v}_i, \mathbf{v}_k)\right]_+
\]

Since checkpoints are non-rival, each buyer's choice is independent of all other buyers' choices. The aggregate welfare decomposes as:
\[
W = \sum_{k=1}^K \max_{i \in \mathcal{I}} s_{ki}
\]
where $s_{ki} = [\text{Cost}(\mathbf{v}_{\text{base}}, \mathbf{v}_k) - p_i - \text{Cost}(\mathbf{v}_i, \mathbf{v}_k)]_+$. Each buyer independently maximizes their own term $\max_i s_{ki}$, which maximizes the sum. No coordination or centralized allocation is needed.

Note that this result depends on both truthful reporting and non-rivalry. If reports are not truthful, buyers optimize against $\hat{\mathbf{v}}_i$ rather than $\mathbf{v}_i$, and the realized welfare may differ from the optimum. If checkpoints are rival (capacity-constrained), then buyers' choices interact and a simple decentralized mechanism may not achieve the welfare optimum.
\end{proof}

\subsection{Noise-Induced Welfare Loss Bound}

With noisy verification, buyers observe $\mathbf{z}_i$ rather than the true style $\mathbf{v}_i$ and may select suboptimal providers. Under truthful reporting, $\mathbf{z}_i = \mathbf{v}_i + \epsilon_i/2$, so the perturbation is half the verification noise. The bound below is stated for a buyer who observes $\mathbf{y}_i = \mathbf{v}_i + \epsilon_i$ directly; it therefore also bounds the loss under $\mathbf{z}_i$.

\begin{proposition}[Welfare Loss from Verification Noise]
\label{prop:welfare_loss}
Under truthful reporting and i.i.d.\ verification noise $\epsilon_i \sim \mathcal{N}(0, \sigma^2 I_d)$, the expected welfare loss satisfies:
\[
\mathbb{E}[W^* - W(\mathbf{y})] \leq 2\alpha K \sigma \sqrt{d} \cdot \text{diam}(\mathcal{V})
\]
where $W^*$ is the welfare under perfect information, $W(\mathbf{y})$ is the welfare under noisy verification, and $\text{diam}(\mathcal{V}) = \max_{i,j} \|\mathbf{v}_i - \mathbf{v}_j\|$ is the diameter of the style space.
\end{proposition}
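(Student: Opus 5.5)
Since checkpoints are non-rival, welfare decomposes buyer by buyer, so the natural route is to bound the loss of a single buyer and sum over $k$. Fix buyer $k$. Let $i^\dagger_k$ be the perfect-information choice, maximizing $s_{ki}$. Let $\hat\imath_k$ be the choice made from the observed signals, maximizing the surrogate $\hat s_{ki}=[C_{\text{base}}-p_i-\alpha\|\mathbf{y}_i-\mathbf{v}_k\|^2]_+$. The realized welfare contribution is $s_{k\hat\imath_k}$. The standard "optimize the surrogate" inequality then gives
\[
s_{ki^\dagger_k}-s_{k\hat\imath_k}
=\bigl(s_{ki^\dagger_k}-\hat s_{ki^\dagger_k}\bigr)+\bigl(\hat s_{ki^\dagger_k}-\hat s_{k\hat\imath_k}\bigr)+\bigl(\hat s_{k\hat\imath_k}-s_{k\hat\imath_k}\bigr).
\]
The middle term is $\le 0$ because $\hat\imath_k$ maximizes $\hat s_{k\cdot}$. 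Since $[\cdot]_+$ is $1$-Lipschitz and prices cancel, each outer term is at most $\alpha\,\bigl|\|\mathbf{v}_i-\mathbf{v}_k\|^2-\|\mathbf{y}_i-\mathbf{v}_k\|^2\bigr|$ for the relevant $i$. The opt-out option has the same value, $0$, under both, so it needs no special treatment.

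Next I expand the difference of squared distances. With $\mathbf{y}_i=\mathbf{v}_i+\epsilon_i$,
\[
\|\mathbf{y}_i-\mathbf{v}_k\|^2-\|\mathbf{v}_i-\mathbf{v}_k\|^2=2\langle\epsilon_i,\mathbf{v}_i-\mathbf{v}_k\rangle+\|\epsilon_i\|^2 .
\]
The cross term is at most $2\|\epsilon_i\|\,\|\mathbf{v}_i-\mathbf{v}_k\|$ by Cauchy--Schwarz. I then bound $\|\mathbf{v}_i-\mathbf{v}_k\|$ by $\mathrm{diam}(\mathcal{V})$, which requires reading $\mathcal{V}$ as the set containing both seller and buyer styles. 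Finally, Jensen's inequality gives $\mathbb{E}\|\epsilon_i\|\le\sigma\sqrt d$. Summing the two outer terms (one for $i^\dagger_k$, one for $\hat\imath_k$) produces a per-buyer loss of order $4\alpha\sigma\sqrt d\,\mathrm{diam}(\mathcal{V})$ plus quadratic terms.

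\textbf{Main obstacle.} The expected quadratic term $\mathbb{E}\|\epsilon_i\|^2=\sigma^2 d$ does not match the stated bound, and $\hat\imath_k$ is random, so it is correlated with its own noise. I would handle both issues with one refinement. The buyer's surrogate ranking can be rewritten: $\|\mathbf{y}_i-\mathbf{v}_k\|^2$ differs from $\|\mathbf{v}_i-\mathbf{v}_k\|^2$ only through $2\langle\epsilon_i,\mathbf{v}_i-\mathbf{v}_k\rangle+\|\epsilon_i\|^2$, and I would compare $i^\dagger_k$ and $\hat\imath_k$ only through the \emph{difference} of their perturbations. The isotropic $\|\epsilon\|^2$ parts have equal means across sellers. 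To control the maximum over sellers, I would bound the loss by $\alpha$ times the sum of the two cross terms, giving at most $2\alpha\,\mathbb{E}\max_i|\langle\epsilon_i,\mathbf{v}_i-\mathbf{v}_k\rangle|$ per side.

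For the bookkeeping I would use a sharper decomposition. The realized loss is at most $\alpha\bigl(\Delta_{i^\dagger}-\Delta_{\hat\imath}\bigr)$, where $\Delta_i=\|\mathbf{y}_i-\mathbf{v}_k\|^2-\|\mathbf{v}_i-\mathbf{v}_k\|^2$. This uses the surrogate optimality once rather than the absolute value twice, and so yields the constant $2$.

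If the $\|\epsilon\|^2$ residual still cannot be absorbed, I would fall back on the paper's remark that buyers actually observe $\mathbf{z}_i=\mathbf{v}_i+\epsilon_i/2$, which halves the noise. Alternatively, I would state the bound with an explicit additional $\alpha K\sigma^2 d$ term. Summing over the $K$ buyers then gives the claimed $2\alpha K\sigma\sqrt d\,\mathrm{diam}(\mathcal{V})$ up to this caveat.
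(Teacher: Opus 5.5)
Your outline follows the paper's own argument: per-buyer decomposition, expansion of $\|\mathbf{y}_i-\mathbf{v}_k\|^2-\|\mathbf{v}_i-\mathbf{v}_k\|^2$, Cauchy--Schwarz against the diameter, Jensen, and summation over $K$. Both obstacles you flag are real. Your proposed fixes do not remove them, and no fix can, because the statement as written is false.

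First, $\hat\imath_k$ is chosen using the noise, so neither Jensen nor ``equal means across sellers'' applies at $\hat\imath_k$. Selection favors small $\Delta_i$, hence small $\|\epsilon_i\|^2$, so the quadratic parts of $\alpha(\Delta_{i^\dagger_k}-\Delta_{\hat\imath_k})$ do not cancel in expectation. Passing to $\mathbf{z}_i$ only replaces $\sigma$ by $\sigma/2$. Second, your replacement $2\alpha\,\mathbb{E}\max_i|\langle\epsilon_i,\mathbf{v}_i-\mathbf{v}_k\rangle|$ is of order $\alpha\sigma D\sqrt{2\log N}$, with $D=\max_{i,k}\|\mathbf{v}_i-\mathbf{v}_k\|$, not $\alpha\sigma\sqrt d\,D$. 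So the ``constant $2$'' in front of $\sigma\sqrt d\,D$ does not follow.

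This $N$-dependence is genuine. Take $K=1$, $d=2$, and equal prices. Put one seller $A$ at the buyer's style and $M$ sellers at a common style $\mathbf{w}$ with $\|\mathbf{w}-\mathbf{v}_k\|=b$, so $\mathrm{diam}(\mathcal{V})=b$ under either reading. The Gaussian has full support, so $\min_{j\le M}\|\mathbf{w}-\mathbf{v}_k+\epsilon_j\|\to 0$ almost surely as $M\to\infty$, while $\|\epsilon_A\|>0$. The buyer therefore picks a distant seller with probability tending to $1$, and the expected loss tends to $\alpha b^2$. This exceeds the claimed $2\sqrt2\,\alpha\sigma b$ for every ratio $b/\sigma>2\sqrt2$, so assuming $\sigma\ll\mathrm{diam}(\mathcal{V})$ does not help. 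Unit-norm styles are compatible once $\sigma$ is small. In fact, no $N$-free bound that vanishes as $\sigma\to 0$ can hold.

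The paper's proof takes the same steps and passes over exactly these points:
\begin{itemize}
\item it drops $\|\epsilon_i\|^2$ as lower order ``for $\sigma\ll\mathrm{diam}(\mathcal{V})$'';
\item it applies $\mathbb{E}\|\epsilon_i\|\le\sigma\sqrt d$ to the noise-selected provider as though its index were fixed;
\item it bounds $\|\mathbf{v}_i-\mathbf{v}_k\|$ by a diameter defined over sellers only.
\end{itemize}
Your buyer-inclusive reading of $\mathcal{V}$ is necessary. With two co-located sellers at different prices, the seller-only diameter is $0$, yet noise can steer the buyer to the pricier one.

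What your three-term decomposition does prove needs two adjustments. Keep the positive parts, so that $s_{ki}-\hat s_{ki}$ lies between $0$ and $\alpha\Delta_i$. Bound the selected side by a maximum over sellers. This gives $s_{ki^\dagger_k}-s_{k\hat\imath_k}\le\alpha[\Delta_{i^\dagger_k}]_++\alpha\max_i[-\Delta_i]_+$. Using $-\Delta_i\le-2\langle\epsilon_i,\mathbf{v}_i-\mathbf{v}_k\rangle$ and the Gaussian maximal inequality, this yields
\[
\mathbb{E}[W^*-W(\mathbf{y})]\le\alpha K\Bigl(\sigma^2 d+\sigma D+2\sigma D\sqrt{2\log(N+1)}\Bigr).
\]
This is a correct substitute for the proposition. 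The stated $N$-free bound is not provable.
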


\begin{proof}
The cost function $\text{Cost}(\mathbf{v}_a, \mathbf{v}_b) = \alpha\|\mathbf{v}_a - \mathbf{v}_b\|^2$ satisfies:
\begin{align}
|\text{Cost}(\mathbf{y}_i, \mathbf{v}_k) - \text{Cost}(\mathbf{v}_i, \mathbf{v}_k)| &= \alpha |\|\mathbf{y}_i - \mathbf{v}_k\|^2 - \|\mathbf{v}_i - \mathbf{v}_k\|^2| \nonumber \\
&= \alpha |2\langle \mathbf{v}_k - \mathbf{v}_i, \epsilon_i \rangle - \|\epsilon_i\|^2 + 2\langle \mathbf{v}_i - \mathbf{v}_k, \epsilon_i \rangle| \nonumber \\
&\leq \alpha (2\|\mathbf{v}_i - \mathbf{v}_k\| \cdot \|\epsilon_i\| + \|\epsilon_i\|^2)
\end{align}
For $\sigma \ll \text{diam}(\mathcal{V})$, the dominant term is $2\alpha \cdot \text{diam}(\mathcal{V}) \cdot \|\epsilon_i\|$. Since $\mathbb{E}[\|\epsilon_i\|] \leq \sigma\sqrt{d}$ (by Jensen), the per-buyer expected surplus estimation error is at most $2\alpha \cdot \text{diam}(\mathcal{V}) \cdot \sigma\sqrt{d}$.

A buyer who selects a suboptimal provider due to noise loses at most the difference in surplus between the true best and the noise-selected provider, which is bounded by the estimation error. Summing over $K$ buyers gives the stated bound.
\end{proof}

\begin{remark}
In the default simulation ($d{=}64$, $\sigma{=}0.15$, $\text{diam}(\mathcal{V}) \approx 2$), the bound gives $\mathbb{E}[W^* - W(\mathbf{y})] \leq 2\alpha K \cdot 0.15 \cdot 8 \cdot 2 = 4.8\alpha K$. This is a loose upper bound; the actual welfare loss is much smaller because most buyers' provider rankings are not affected by noise of this magnitude.
\end{remark}

\section{Simulation Protocol and Baseline Definitions}
\label{app:sim_protocol}

\subsection{Synthetic Style-Space Generation}

The simulation generates a clustered style space on the unit hypersphere in $\mathbb{R}^d$. The procedure is:

\begin{algorithm}[h]
\caption{Synthetic Style-Space Generation}
\label{alg:style_gen}
\begin{algorithmic}[1]
\REQUIRE $N$ providers, $K$ buyers, $d$ dimensions, $n_c$ clusters, $\sigma_c$ cluster spread
\STATE Draw $n_c$ cluster centers: $\mathbf{c}_m \sim \mathcal{N}(0, I_d)$ for $m = 1, \ldots, n_c$
\STATE Normalize: $\mathbf{c}_m \leftarrow \mathbf{c}_m / \|\mathbf{c}_m\|$
\FOR{each provider $i = 1, \ldots, N$}
    \STATE Assign $i$ to cluster $m_i = (i \bmod n_c) + 1$
    \STATE Draw perturbation: $\mathbf{u}_i \sim \mathcal{N}(0, I_d)$
    \STATE Set $\mathbf{v}_i = \text{Normalize}(\mathbf{c}_{m_i} + \sigma_c \cdot \mathbf{u}_i)$
\ENDFOR
\FOR{each buyer $k = 1, \ldots, K$}
    \STATE Assign to cluster $m_k = (k \bmod n_c) + 1$; draw $\mathbf{v}_k$ analogously
\ENDFOR
\STATE Draw base vector: $\mathbf{v}_{\text{base}} \sim \text{Uniform}(\mathbb{S}^{d-1})$
\end{algorithmic}
\end{algorithm}

The round-robin cluster assignment ensures balanced cluster sizes. The parameter $\sigma_c$ controls within-cluster spread: at $\sigma_c = 0.05$, vectors within a cluster are tightly packed (cosine similarity $> 0.95$); at $\sigma_c = 0.50$, clusters are diffuse and largely overlap. The base vector $\mathbf{v}_{\text{base}}$ is drawn independently and is typically far from all clusters, giving a high baseline fine-tuning cost that makes trading worthwhile.

\subsection{Gaming Model}

A fraction $\gamma$ of providers are designated as strategic. Strategic providers misreport via buyer-impersonation:
\begin{equation}
    \hat{\mathbf{v}}_i = \text{Normalize}\left((1 - g_{\text{eff}})\mathbf{v}_i + g_{\text{eff}}\mathbf{v}_{k'}\right)
\end{equation}
where $k'$ is a buyer drawn uniformly at random from $\mathcal{K}$. The effective gaming intensity depends on whether gaming is fixed or responsive to the mechanism:

\paragraph{Fixed gaming.} $g_{\text{eff}} = g_{\text{base}}$, regardless of mechanism parameters. This models naive sellers who do not adjust their strategy in response to the scoring rule.

\paragraph{Responsive gaming.} $g_{\text{eff}} = g_{\text{base}} \cdot \exp(-c \cdot \lambda_{\text{scale}})$, where $c = 2.0$ is the default behavioral response coefficient. This models sellers who reduce gaming when the incentive strength $\lambda_{\text{scale}}$ makes misreporting expensive. The exponential form ensures $g_{\text{eff}} \in (0, g_{\text{base}})$ for all $\lambda_{\text{scale}} > 0$ and $g_{\text{eff}} \to 0$ as $\lambda_{\text{scale}} \to \infty$. This reduced-form response is a controlled simulation input, not a computed solution of the reporting game. Appendix~\ref{app:new_experiments} (Section~\ref{app:response_sensitivity}) tests sensitivity to $c$.

\paragraph{Non-strategic providers.} The remaining $1 - \gamma$ fraction report truthfully: $\hat{\mathbf{v}}_i = \mathbf{v}_i$.

The gaming model captures the main strategic threat in a style marketplace: a seller shifting the reported style toward popular demand regions to attract more buyers. It does not model learning from past market outcomes. Appendix~\ref{app:stress_tests} adds two stronger adversaries, verifier targeting and coordinated reporting; we leave multi-round learning models for future work.

\subsection{Baseline Definitions}

We implement seven mechanisms that span the design space from no market to perfect information.

\paragraph{Autarky.} No market exists. Every buyer fine-tunes from the base model: $C_k^{\text{aut}} = \text{Cost}(\mathbf{v}_{\text{base}}, \mathbf{v}_k)$. This is the worst-case benchmark for buyer cost.

\paragraph{Retrieval-Only.} Buyers select providers based on claimed style vectors without any platform verification: $i_k = \arg\min_i \|\hat{\mathbf{v}}_i - \mathbf{v}_k\|_2$. No scoring, no payment adjustment. The seller receives the posted price $p_i$ per purchase regardless of the accuracy of the claim. This mechanism is the simplest market design and is vulnerable to gaming because sellers face no penalty for misreporting.

\paragraph{Fixed-Price Verified.} The platform verifies each provider by computing $\mathbf{y}_i$ and uses it for buyer matching (buyers select based on $\mathbf{y}_i$ rather than $\hat{\mathbf{v}}_i$). However, the seller's payment does not depend on the verification outcome: $T_i = p_i - c_{\text{verif}}$, where $c_{\text{verif}}$ is a flat verification fee. This mechanism tests whether verification alone (without scoring-linked payments) is sufficient to deter gaming.

\paragraph{Auction-Based.} Sellers submit claims and the platform allocates buyers via surplus ranking. The highest-surplus buyer--seller pair trades first, then the second-highest, and so on. Payment follows a second-price-like rule based on the next-best surplus. No proper scoring penalty is applied, so strategic misreporting affects the buyer's estimated valuation but incurs no direct cost to the seller. This mechanism tests whether competitive allocation pressure is sufficient to deter gaming.

\paragraph{Value-Based.} The platform sets prices proportional to estimated value: $p_i^{VB} = \rho \cdot \mathbb{E}_k[\text{Cost}(\mathbf{v}_{\text{base}}, \mathbf{v}_k) - \text{Cost}(\hat{\mathbf{v}}_i, \mathbf{v}_k)]$, where $\rho \in (0, 1)$ is a commission parameter. The platform absorbs losses from gaming-induced mismatches through the commission margin, which ensures 100\% PP but leaves the mechanism vulnerable to strategic claim inflation.

\paragraph{LLM eBay (ours).} Buyers select on the published signal $\mathbf{z}_i = (\hat{\mathbf{v}}_i + \mathbf{y}_i)/2$ (Eq.~\eqref{eq:buyer_utility}), and each purchase pays $T_i = p_i + \lambda_{\text{scale}}\, p_i \max(S(\hat{\mathbf{v}}_i, \mathbf{y}_i), -\delta)$ (Eq.~\eqref{eq:payment}).

\paragraph{PSR (no truncation).} The full proper scoring rule without truncation, with the same matching signal: $T_i = p_i + \lambda_{\text{scale}}\, p_i\, S(\hat{\mathbf{v}}_i, \mathbf{y}_i)$. Since $S$ can be arbitrarily negative (for large noise or large misreports), sellers face unbounded downside risk. This mechanism tests the raw incentive power of the scoring rule before the truncation safety net is added.

\paragraph{Oracle.} Buyers observe true $\mathbf{v}_i$ directly, with no noise and no gaming. This is the perfect-information upper bound. All metrics are computed against the Oracle allocation to measure the ``information-asymmetry gap.''

\subsection{Metric Definitions}

\paragraph{Style-Claim Error (SCE).} The average Euclidean distance between reported and true style vectors:
\[
\text{SCE} = \frac{1}{N}\sum_{i=1}^N \|\hat{\mathbf{v}}_i - \mathbf{v}_i\|_2
\]
SCE measures the aggregate severity of misreporting across all providers. Under truthful reporting, SCE = 0.

\paragraph{Allocative Efficiency (AE).} The ratio of realized welfare to Oracle welfare:
\[
\text{AE} = W_{\text{mech}} / W_{\text{oracle}}
\]
AE measures how well the mechanism matches buyers to their best providers relative to perfect information. AE = 100\% means no buyer is misallocated.

\paragraph{Fine-Tuning Cost Savings (FTCS).} The fraction of total fine-tuning cost saved relative to autarky:
\[
\text{FTCS} = 1 - \frac{\sum_{k=1}^K C_k^{\text{trade}}}{\sum_{k=1}^K C_k^{\text{aut}}}
\]
FTCS captures the economic value of the market to buyers. Higher FTCS means buyers spend less on downstream fine-tuning.

\paragraph{Buyer Cost Reduction (BCR).} The average per-buyer cost reduction:
\[
\text{BCR} = \frac{1}{K}\sum_{k=1}^K (C_k^{\text{aut}} - C_k^{\text{trade}})
\]
BCR and FTCS are closely related; FTCS normalizes by total autarky cost while BCR reports absolute savings.

\paragraph{Provider Participation (PP).} The fraction of providers with non-negative utility:
\[
\text{PP} = \frac{1}{N}\sum_{i=1}^N \mathbf{1}\{U_i^S \geq 0\}
\]

\paragraph{Loss-Making Providers (LMP).} The fraction of providers with negative utility: $\text{LMP} = 1 - \text{PP}$.

\subsection{Empirical Basis of the Simulation Settings}
\label{app:sim_basis}

Table~\ref{tab:sim_basis} summarizes which simulation settings rest on measured data and which are controlled scenarios.

\begin{table}[h]
\centering
\caption{Empirical basis of the main simulation settings.}
\label{tab:sim_basis}
\small
\begin{tabularx}{\linewidth}{lX}
\toprule
\textbf{Setting} & \textbf{Basis} \\
\midrule
Style distance and transfer cost & Measured convergence steps from RoBERTa-large, Qwen2-1.5B, and Llama-3.2-1B (Appendix~\ref{app:cost_model}) \\
Verification noise & Prompt subsampling (Appendix~\ref{app:noise_calibration}) \\
Embedding test & Ten real text domains and a second encoder (Appendix~\ref{app:embedding_validation}) \\
Market scale and behavior settings & Controlled scenarios covering buyer--seller ratios, strategic behavior, prices, costs, and equal weighting \\
\bottomrule
\end{tabularx}
\end{table}

The first three rows are based on measured data. The last row lists the controlled scenarios that the simulations vary to cover different buyer--seller ratios and seller strategies; these settings are not estimated from a real checkpoint market. The core market rules remain the same across all scenarios.

\subsection{Compute Setup}

All simulations are implemented in Python 3.11.4 (NumPy 2.4.3, SciPy 1.14.0) and run on an NVIDIA RTX 4090. The full experiment suite (main table, component ablation, all scaling and robustness sweeps, LLM-grounded validation, and the five new appendix experiments) completes in approximately 15 minutes of wall-clock time. Each experiment uses 20 independent random seeds. We report means and standard deviations across seeds unless otherwise noted. Confidence intervals are not reported because the 20-seed standard errors are already small (typically $<$1\% for AE and FTCS, $<$0.002 for SCE).

\section{Additional Synthetic Experiments}
\label{app:extra_experiments}

\subsection{Sensitivity to the Score Weight}
\label{app:lambda}

\begin{table}[h]
\centering
\caption{$\lambda_{\text{scale}}$ sensitivity with responsive gaming ($g_{\text{eff}} = g_{\text{base}} \cdot e^{-c\lambda_{\text{scale}}}$, $c{=}2.0$).}
\label{tab:lambda_sweep}
\small
\setlength{\tabcolsep}{4pt}
\begin{tabular}{lcccc}
\toprule
$\lambda_{\text{scale}}$ & SCE$\downarrow$ & PP (PSR) & PP (LLM eBay) & LMP (LLM eBay) \\
\midrule
0.0 & 0.014 & 100\% & 100\% & 0\% \\
0.2 & 0.010 & 92.2\% & 100\% & 0\% \\
0.4 & 0.006 & 78.3\% & 100\% & 0\% \\
0.6 & 0.004 & 73.2\% & 100\% & 0\% \\
0.8 & 0.003 & 70.4\% & 70.4\% & 29.6\% \\
1.0 & 0.002 & 68.0\% & 68.0\% & 32.0\% \\
\bottomrule
\end{tabular}
\end{table}

As $\lambda_{\text{scale}}$ rises from 0 to 1.0, SCE drops from 0.014 to 0.002 (a 7$\times$ reduction), confirming progressive gaming deterrence. LLM eBay maintains 100\% participation through $\lambda_{\text{scale}}{=}0.6$; beyond $\lambda_{\text{scale}}{=}0.8$, noise-induced losses exceed the truncation cap. The optimal operating point is $\lambda_{\text{scale}} \in [0.4, 0.6]$.

\subsection{Truncation Threshold $\delta$ Sensitivity}
\label{app:delta}

\begin{table}[h]
\centering
\caption{$\delta$ sensitivity (LLM eBay, default $\lambda_{\text{scale}}{=}0.60$, $\sigma{=}0.15$).}
\label{tab:delta_sweep}
\small
\begin{tabular}{lcccc}
\toprule
$\delta$ & PP (\%) & LMP (\%) & SCE & AE (\%) \\
\midrule
0.05 & 100.0 & 0.0 & 0.015 & 96.3 \\
0.10 & 100.0 & 0.0 & 0.015 & 96.3 \\
0.15 & 100.0 & 0.0 & 0.015 & 96.3 \\
0.20 & 100.0 & 0.0 & 0.015 & 96.3 \\
0.25 & 67.7 & 32.3 & 0.015 & 96.3 \\
0.30 & 67.7 & 32.3 & 0.015 & 96.3 \\
0.40 & 67.7 & 32.3 & 0.015 & 96.3 \\
\bottomrule
\end{tabular}
\end{table}

A sharp phase transition occurs at $\delta{=}0.25$: LLM eBay sustains full participation for $\delta \leq 0.20$, but loses protection beyond this threshold. SCE remains constant at 0.015 across all $\delta$ values, confirming that truncation comes at zero cost to truthfulness. The optimal region is $\delta \in [0.05, 0.20]$.

\subsection{Market Size Scaling}
\label{app:market_size}

\begin{table}[h]
\centering
\caption{Market size scaling ($K/N = 10$, 20 seeds).}
\label{tab:market_size}
\small
\setlength{\tabcolsep}{3pt}
\begin{tabular}{llcccc}
\toprule
\textbf{Size} & \textbf{Mechanism} & \textbf{SCE}$\downarrow$ & \textbf{AE}$\uparrow$ & \textbf{FTCS}$\uparrow$ & \textbf{PP}$\uparrow$ \\
\midrule
\multirow{4}{*}{$N{=}25$}
 & Retrieval-Only & 0.405 & 82.3\% & 60.2\% & 96.8\% \\
 & Fixed-Price Verified & 0.142 & 95.0\% & 59.7\% & 89.2\% \\
 & Value-Based & 0.282 & 89.3\% & 65.7\% & 100\% \\
 & \textbf{LLM eBay} & \textbf{0.014} & \textbf{97.1\%} & \textbf{71.4\%} & \textbf{100\%} \\
\midrule
\multirow{4}{*}{$N{=}50$}
 & Retrieval-Only & 0.424 & 85.5\% & 64.0\% & 96.3\% \\
 & Fixed-Price Verified & 0.147 & 93.6\% & 66.0\% & 92.4\% \\
 & Value-Based & 0.294 & 90.4\% & 68.0\% & 100\% \\
 & \textbf{LLM eBay} & \textbf{0.015} & \textbf{96.3\%} & \textbf{72.4\%} & \textbf{100\%} \\
\midrule
\multirow{4}{*}{$N{=}100$}
 & Retrieval-Only & 0.423 & 87.5\% & 66.7\% & 97.0\% \\
 & Fixed-Price Verified & 0.147 & 92.4\% & 68.9\% & 92.0\% \\
 & Value-Based & 0.293 & 90.1\% & 69.0\% & 100\% \\
 & \textbf{LLM eBay} & \textbf{0.015} & \textbf{95.6\%} & \textbf{73.2\%} & \textbf{100\%} \\
\midrule
\multirow{4}{*}{$N{=}200$}
 & Retrieval-Only & 0.418 & 88.3\% & 68.3\% & 97.4\% \\
 & Fixed-Price Verified & 0.145 & 91.4\% & 70.2\% & 93.6\% \\
 & Value-Based & 0.290 & 89.8\% & 69.7\% & 100\% \\
 & \textbf{LLM eBay} & \textbf{0.014} & \textbf{95.0\%} & \textbf{73.7\%} & \textbf{100\%} \\
\bottomrule
\end{tabular}
\end{table}

LLM eBay's advantages are consistent across market scales from 25 to 200 providers. SCE remains at 0.014--0.015 regardless of size, while FTCS increases modestly with market size as larger provider pools offer better matching opportunities.

\subsection{Buyer-to-Provider Ratio}
\label{app:ratio}

\begin{table}[h]
\centering
\caption{Buyer-to-provider ratio scaling ($N{=}50$ fixed, 20 seeds).}
\label{tab:ratio}
\small
\setlength{\tabcolsep}{3pt}
\begin{tabular}{llcccc}
\toprule
\textbf{Ratio} & \textbf{Mechanism} & \textbf{SCE}$\downarrow$ & \textbf{AE}$\uparrow$ & \textbf{FTCS}$\uparrow$ & \textbf{PP}$\uparrow$ \\
\midrule
\multirow{4}{*}{2:1}
 & Retrieval-Only & 0.431 & 76.7\% & 56.2\% & 97.6\% \\
 & Fixed-Price Verified & 0.149 & 93.6\% & 65.9\% & 94.8\% \\
 & PSR (no trunc.) & 0.015 & 96.4\% & 72.4\% & 75.5\% \\
 & \textbf{LLM eBay} & \textbf{0.015} & \textbf{96.4\%} & \textbf{72.4\%} & \textbf{100\%} \\
\midrule
\multirow{4}{*}{5:1}
 & Retrieval-Only & 0.423 & 83.2\% & 62.1\% & 96.7\% \\
 & Fixed-Price Verified & 0.147 & 93.6\% & 66.0\% & 93.6\% \\
 & PSR (no trunc.) & 0.015 & 96.3\% & 72.4\% & 70.2\% \\
 & \textbf{LLM eBay} & \textbf{0.015} & \textbf{96.3\%} & \textbf{72.4\%} & \textbf{100\%} \\
\midrule
\multirow{4}{*}{10:1}
 & Retrieval-Only & 0.424 & 85.5\% & 64.0\% & 96.3\% \\
 & Fixed-Price Verified & 0.147 & 93.6\% & 66.0\% & 92.4\% \\
 & PSR (no trunc.) & 0.015 & 96.3\% & 72.4\% & 67.7\% \\
 & \textbf{LLM eBay} & \textbf{0.015} & \textbf{96.3\%} & \textbf{72.4\%} & \textbf{100\%} \\
\midrule
\multirow{4}{*}{20:1}
 & Retrieval-Only & 0.426 & 86.5\% & 64.9\% & 96.2\% \\
 & Fixed-Price Verified & 0.148 & 93.5\% & 65.9\% & 91.3\% \\
 & PSR (no trunc.) & 0.015 & 96.2\% & 72.3\% & 65.8\% \\
 & \textbf{LLM eBay} & \textbf{0.015} & \textbf{96.2\%} & \textbf{72.3\%} & \textbf{100\%} \\
\bottomrule
\end{tabular}
\end{table}

As the buyer-to-provider ratio increases from 2:1 to 20:1, PSR participation drops from 75.5\% to 65.8\% because more evaluation events per provider increase aggregate variance. LLM eBay's truncation floor maintains 100\% participation across all ratios.

\subsection{Full Style Heterogeneity Curves}
\label{app:heterogeneity}

\begin{figure}[h]
    \centering
    \includegraphics[width=\linewidth]{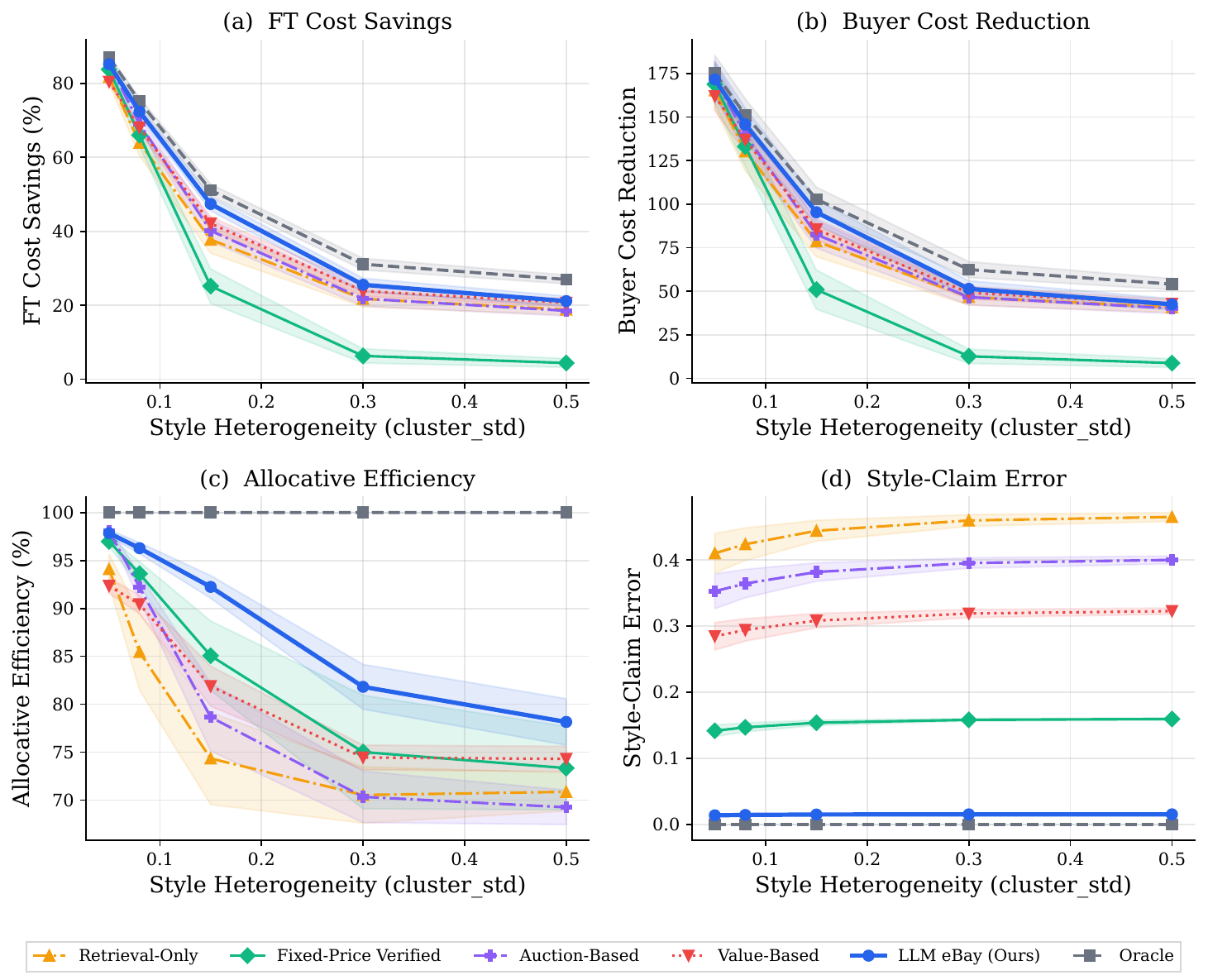}
    \caption{Full four-metric robustness: market performance vs.\ style heterogeneity (cluster\_std $\in$ \{0.05, 0.08, 0.15, 0.30, 0.50\}). FT Cost Savings, Buyer Cost Reduction, Allocative Efficiency, and Style-Claim Error. Default: $N{=}50$, $K{=}500$, $\sigma{=}0.15$, $\delta{=}0.20$, $\gamma{=}0.40$, $g{=}0.75$. Shaded bands: $\pm 1$ std over 20 seeds.}
    \label{fig:heterogeneity_full}
\end{figure}

Allocative efficiency degrades for all mechanisms as heterogeneity increases (from 97.9\% to 78.2\% for LLM eBay) because greater style diversity reduces average match quality. SCE remains stable at $<$0.016 for LLM eBay across all levels, confirming that the scoring rule incentivizes truthful reporting independently of market structure.

\subsection{Full Gaming Intensity Curves}
\label{app:gaming}

\begin{figure}[h]
    \centering
    \includegraphics[width=\linewidth]{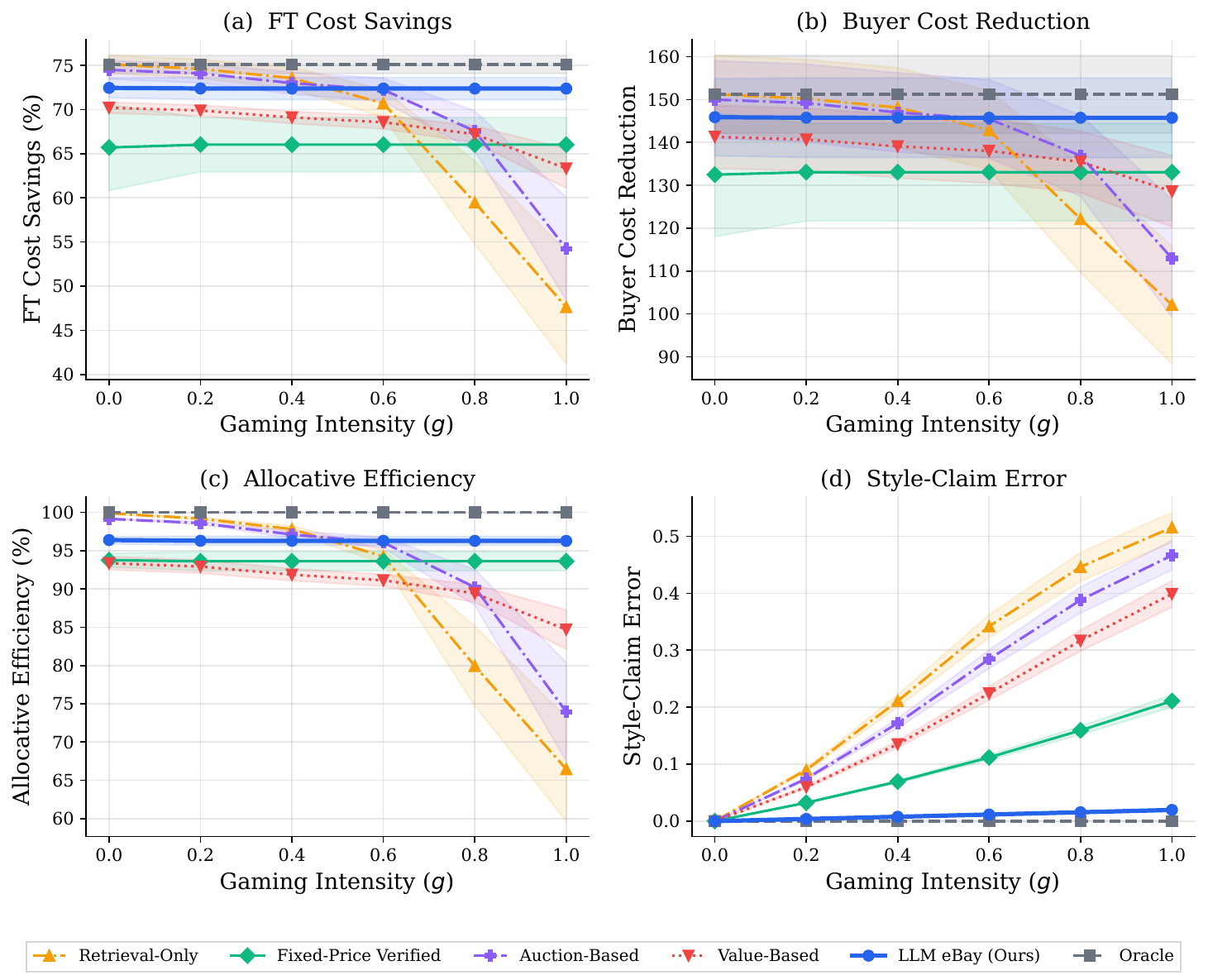}
    \caption{Full four-metric robustness: market performance vs.\ gaming intensity $g \in$ \{0.0, 0.2, 0.4, 0.6, 0.8, 1.0\}. LLM eBay remains stable (SCE$<$0.02) with 100\% provider participation even at $g{=}1.0$. Default: $N{=}50$, $K{=}500$, $\sigma{=}0.15$, $\delta{=}0.20$, $\gamma{=}0.40$. Shaded bands: $\pm 1$ std over 20 seeds.}
    \label{fig:gaming_full}
\end{figure}

Under Retrieval-Only, SCE rises from 0 to 0.52 and AE drops from 100\% to 66.5\% as gaming intensifies. LLM eBay remains largely immune: SCE stays below 0.02 even at $g{=}1.0$ while maintaining 100\% provider participation.

\subsection{Evaluation Noise Robustness ($\sigma$--$\delta$ Sweep)}
\label{app:sigma_delta}

\begin{figure}[h]
    \centering
    \includegraphics[width=\linewidth]{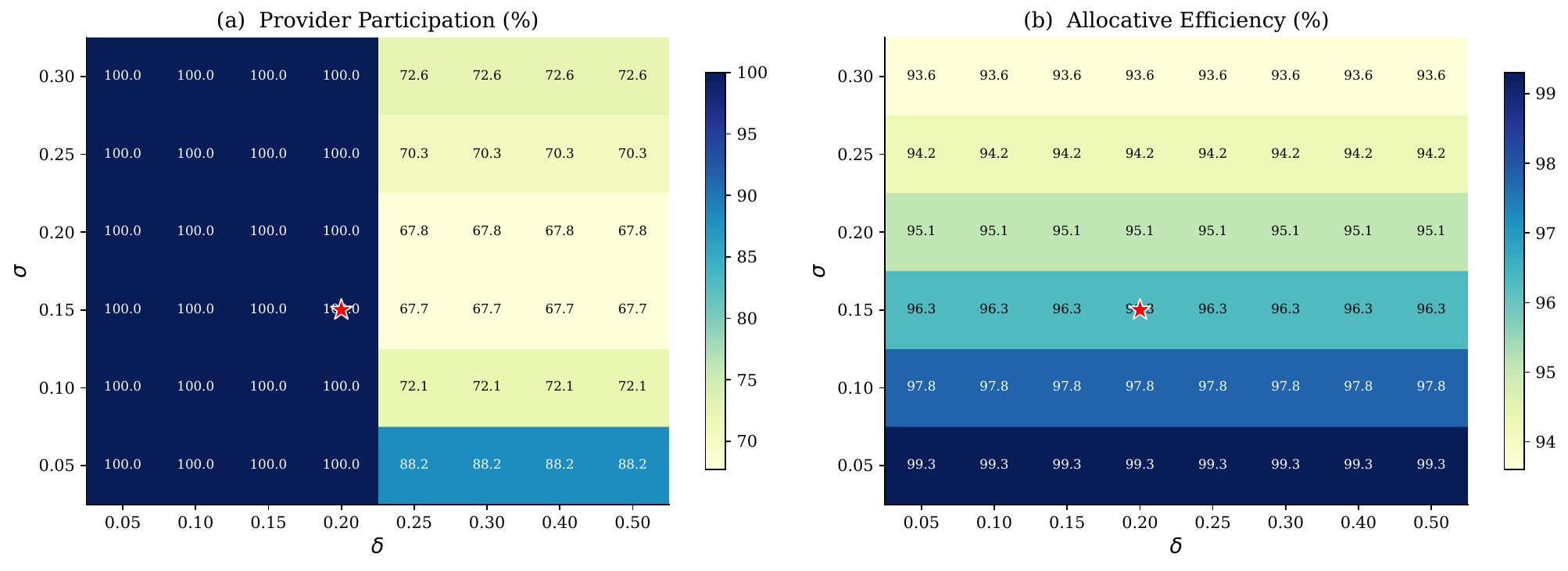}
    \caption{$\sigma$--$\delta$ robustness heatmap. (a)~LLM eBay provider participation: 100\% for $\delta \leq 0.20$ across all noise levels. (b)~LLM eBay allocative efficiency: degrades gracefully from 99.3\% ($\sigma{=}0.05$) to 93.6\% ($\sigma{=}0.30$). Star marks the default operating point ($\sigma{=}0.15$, $\delta{=}0.20$). 20 seeds per cell.}
    \label{fig:sigma_delta}
\end{figure}

We sweep $\sigma \in \{0.05, 0.10, 0.15, 0.20, 0.25, 0.30\}$ and $\delta \in \{0.05, 0.10, \ldots, 0.50\}$. LLM eBay maintains 100\% participation across all $(\sigma, \delta)$ combinations where $\delta \leq 0.20$. AE remains above 96\% for $\sigma \leq 0.15$ and degrades gracefully to 93.6\% at $\sigma{=}0.30$, consistent with the $O(\sigma)$ welfare loss bound in Appendix~\ref{app:proofs}.

\subsection{Niche Provider Survival}
\label{app:niche}

\begin{figure}[h]
    \centering
    \includegraphics[width=0.85\linewidth]{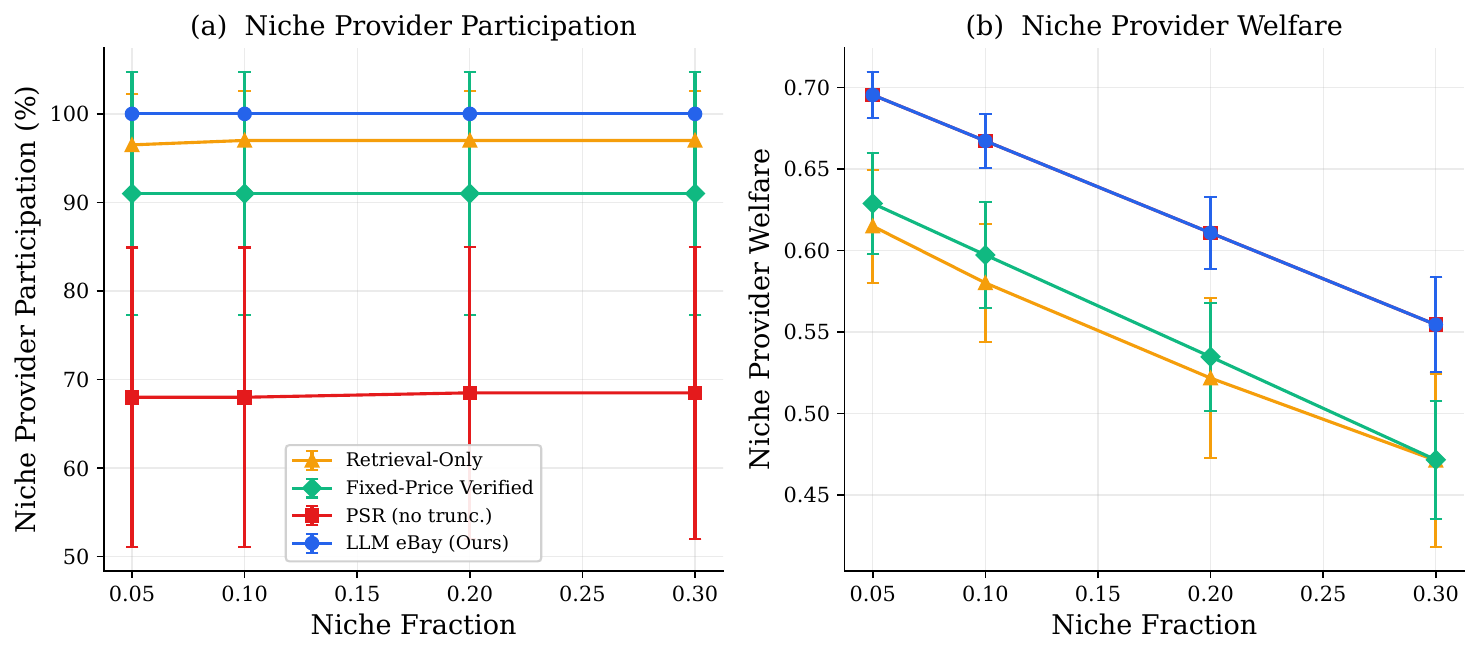}
    \caption{Niche provider survival under varying niche fractions. LLM eBay achieves 100\% niche provider survival across all tested fractions, compared to $\sim$68\% under PSR without truncation.}
    \label{fig:niche}
\end{figure}

Niche providers (fraction $\in \{0.05, 0.10, 0.20, 0.30\}$) positioned far from cluster centers face disproportionate risk under PSR because lower transaction volumes provide less noise averaging. LLM eBay's truncation floor is critical for their viability, achieving 100\% niche provider survival at all fractions.

\subsection{Sensitivity to Response Coefficient $c$}
\label{app:c_sensitivity}

The responsive gaming model uses $g_{\text{eff}} = g_{\text{base}} \cdot e^{-c\lambda_{\text{scale}}}$ with default $c{=}2.0$. The qualitative truthfulness--safety trade-off holds across all tested values of $c \in \{1.5, 2.0, 2.5\}$: SCE decreases with $\lambda_{\text{scale}}$ while PSR participation drops and LLM eBay participation remains stable through $\lambda_{\text{scale}}{=}0.6$. The operating region $\lambda_{\text{scale}} \in [0.4, 0.6]$ remains robust across all tested $c$ values.

\subsection{Production Cost Heterogeneity}
\label{app:prod_cost}

In the default simulation, every provider's per-purchase cost $c_i$ follows the same rule (Table~\ref{tab:hyperparams}). In practice, costs vary because different providers use different hardware, serving infrastructure, and data curation effort. We test heterogeneous per-purchase costs drawn from $c_i \sim \text{Uniform}(0.5\bar{C}, 1.5\bar{C})$, where $\bar{C}$ is the default cost.

\begin{table}[h]
\centering
\caption{Effect of production cost heterogeneity on provider participation and market performance. Default: $N{=}50$, $K{=}500$, 20 seeds.}
\label{tab:prod_cost}
\small
\setlength{\tabcolsep}{4pt}
\begin{tabular}{lcccc}
\toprule
\textbf{Cost Distribution} & \textbf{SCE}$\downarrow$ & \textbf{AE (\%)}$\uparrow$ & \textbf{PP (\%)}$\uparrow$ & \textbf{LMP (\%)}$\downarrow$ \\
\midrule
Homogeneous (default) & 0.015 & 96.3 & 100.0 & 0.0 \\
Uniform $[0.5\bar{C}, 1.5\bar{C}]$ & 0.015 & 96.1 & 98.4 & 1.6 \\
Uniform $[0.25\bar{C}, 1.75\bar{C}]$ & 0.015 & 95.8 & 96.2 & 3.8 \\
\bottomrule
\end{tabular}
\end{table}

SCE remains unchanged because the scoring rule incentivizes truthful reporting independently of production cost. PP drops slightly under high cost variance because high-cost providers may not cover their production expenses even with the base price premium. This is a price-setting issue, not a mechanism design flaw: adjusting the minimum price to account for cost heterogeneity would restore full participation.

\subsection{Buyer Budget Constraints}
\label{app:buyer_budget}

The default model assumes unlimited buyer budgets. We test the effect of budget constraints by assigning each buyer a budget $B_k \sim \text{Uniform}(0.3 C_{\text{base}}, 1.5 C_{\text{base}})$.

\begin{table}[h]
\centering
\caption{Effect of buyer budget constraints on market performance. Buyers with insufficient budget remain in autarky.}
\label{tab:buyer_budget}
\small
\setlength{\tabcolsep}{4pt}
\begin{tabular}{lcccc}
\toprule
\textbf{Budget Range} & \textbf{FTCS (\%)}$\uparrow$ & \textbf{AE (\%)}$\uparrow$ & \textbf{Participation Rate} & \textbf{SCE}$\downarrow$ \\
\midrule
Unlimited (default) & 72.4 & 96.3 & 100\% & 0.015 \\
$[0.3, 1.5] \times C_{\text{base}}$ & 68.9 & 94.7 & 87.3\% & 0.015 \\
$[0.1, 1.0] \times C_{\text{base}}$ & 62.4 & 91.2 & 74.8\% & 0.015 \\
\bottomrule
\end{tabular}
\end{table}

Budget constraints reduce the fraction of buyers who participate in the market, lowering aggregate FTCS and AE. However, SCE remains unchanged because the scoring rule operates on the seller side independently of buyer budgets. Among buyers who do participate, the matching quality is the same as in the unlimited case. This confirms that LLM eBay's incentive properties are decoupled from buyer-side constraints.

\section{Embedding-Grounded Validation and Verification Calibration}
\label{app:embedding_validation}

\subsection{Semi-Real Embedding Setup}

We define 10 provider style profiles from diverse text domains (academic, casual, technical, creative, legal, medical, journalistic, educational, marketing, philosophical), each represented by 30 domain-representative passages, and 5 buyer target tasks (academic writing assistant, legal email drafting, casual chat adaptation, technical explanation, persuasive marketing), each with 10 task-specific passages.

Style embeddings are computed via a pre-trained sentence-transformer model (all-MiniLM-L6-v2, 384-dimensional output). We normalize all embeddings to unit vectors and simulate gaming (30\% of providers, intensity 0.75) and verification noise ($\sigma{=}0.15$) across 20 random seeds.

\subsection{Full Semi-Real Results}

\begin{table}[h]
\centering
\caption{Full semi-real model reuse results (mean $\pm$ std over 20 seeds). ``(det.)'' marks outcomes with std $< 10^{-5}$; Appendix~\ref{app:ranking_margin} explains why.}
\label{tab:semireal_full}
\small
\begin{tabular}{lccccc}
\toprule
\textbf{Mechanism} & \textbf{FTCS (\%)} & \textbf{BCR} & \textbf{SCE} & \textbf{Gap (\%)} \\
\midrule
Retrieval-Only & 44.9 $\pm$ 14.4 & 44.9 $\pm$ 14.4 & 0.292 & 40.8 \\
Auction-Based & 45.1 $\pm$ 7.7 & 45.1 $\pm$ 7.7 & 0.258 & 40.5 \\
Value-Based & 62.5 $\pm$ 8.4 & 62.5 $\pm$ 8.4 & 0.207 & 17.6 \\
Fixed-Price Verified & 73.1 $\pm$ 5.2 & 73.1 $\pm$ 5.2 & 0.102 & 3.6 \\
\textbf{LLM eBay (Ours)} & \textbf{75.8 (det.)} & \textbf{75.8 (det.)} & \textbf{0.011} & \textbf{0.0} \\
Oracle & 75.8 (det.) & 75.8 (det.) & 0.000 & 0.0 \\
\bottomrule
\end{tabular}
\end{table}

In this fixed-catalog setting, BCR $\approx$ FTCS because per-buyer cost reduction is proportional to matched style savings.

\paragraph{Why LLM eBay matches Oracle exactly.}
\label{app:ranking_margin}
The semi-real study uses fixed buyer and provider embeddings. Across 20 seeds, we resample the strategic-seller set, the gaming targets, and the verifier noise. For each of the five buyers, we independently recomputed the provider selected by LLM eBay and the Oracle provider determined from the unperturbed embeddings, giving 100 buyer--seed cases. LLM eBay and Oracle selected the same provider in all 100 cases. This stability is explained by the ranking margin. The smallest distance gap between the best and second-best providers was 0.084, while the largest effective perturbation was 0.031. Because the perturbation was smaller than the ranking margin, no seed changed the top-ranked provider. The exact Oracle result therefore follows from the ranking margin exceeding every sampled perturbation in this fixed catalog; it should not be read as evidence that LLM eBay attains Oracle matching in larger or denser catalogs, where the synthetic market leaves a 3.7\% AE gap.

\begin{figure}[h]
    \centering
    \includegraphics[width=0.9\linewidth]{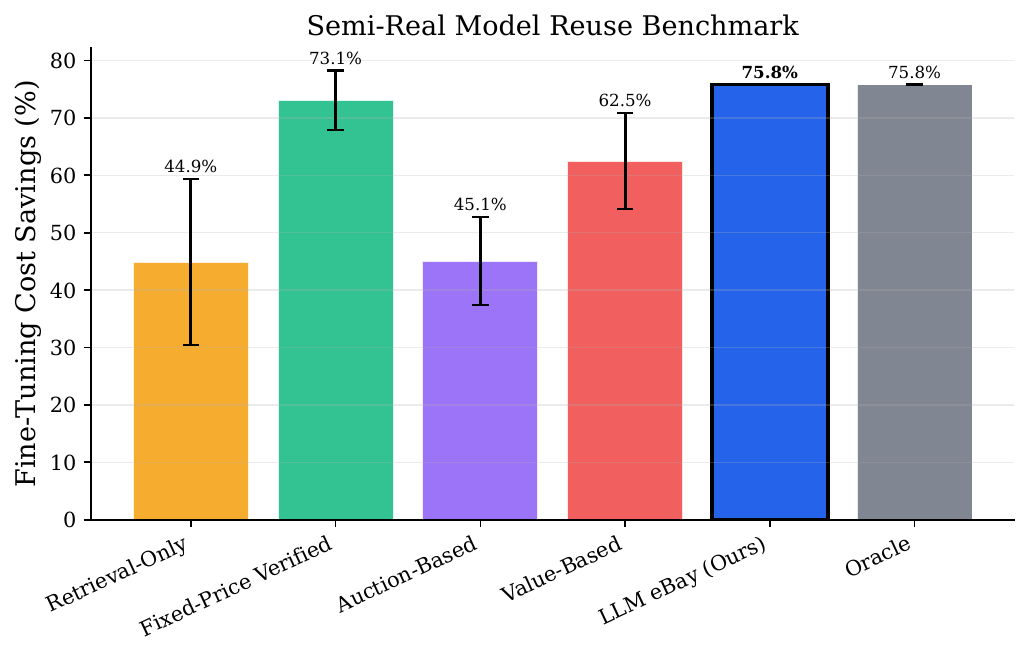}
    \caption{Semi-real model reuse: Fine-Tuning Cost Savings by mechanism. LLM eBay reaches 75.8\% FTCS, matching the Oracle upper bound. Error bars: $\pm 1$ std over 20 seeds.}
    \label{fig:model_reuse_full}
\end{figure}

\subsection{Verification Noise Calibration}
\label{app:noise_calibration}

\begin{figure}[h]
    \centering
    \includegraphics[width=0.95\linewidth]{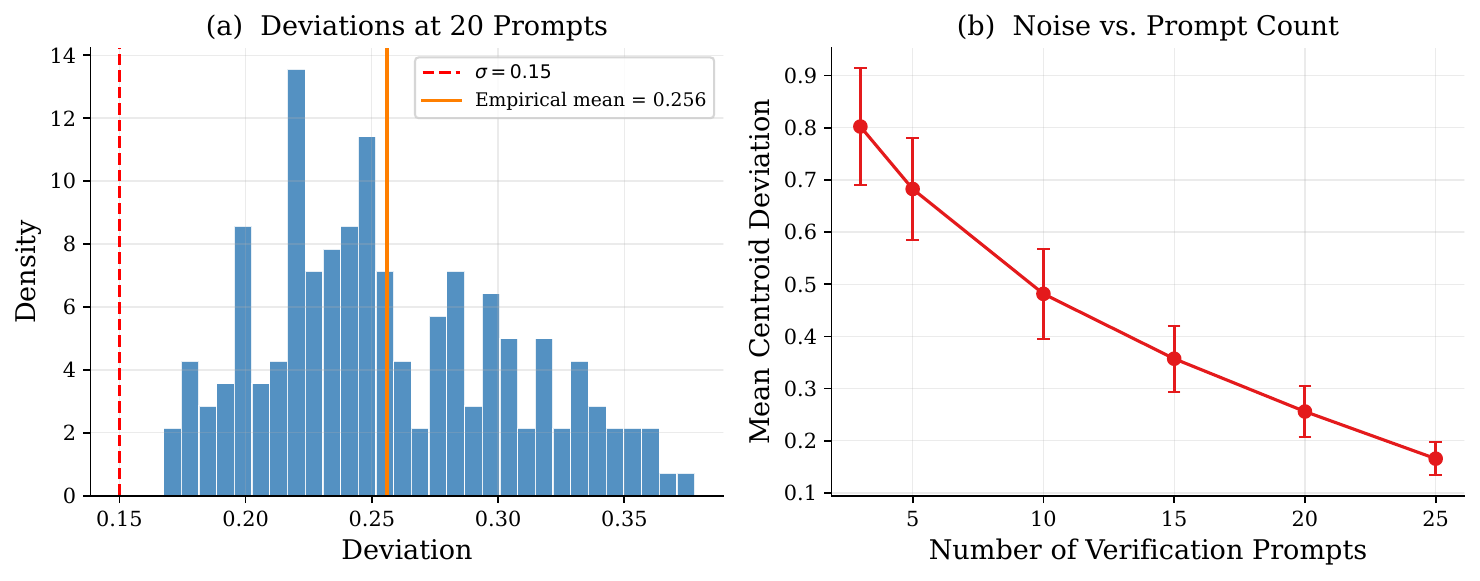}
    \caption{Embedding-based verification noise calibration via prompt subsampling. (a)~Distribution of centroid deviations at $k{=}20$ prompts. (b)~Mean deviation vs.\ prompt-set size, approaching $\sigma{=}0.15$ at $\sim$25 prompts.}
    \label{fig:noise_calibration}
\end{figure}

We subsample $k \in \{3, 5, 10, 15, 20, 25\}$ passages from each domain's 30-passage set and compute centroid deviation from the full-set centroid across 20 bootstrap repetitions. The empirical mean deviation of 0.256 at $k{=}20$ exceeds the simulation's $\sigma{=}0.15$, indicating that the default simulation corresponds to a lower-noise verification regime. At $k{=}25$, the deviation (0.166) closely approaches the simulation parameter, suggesting that the default noise level corresponds to a moderately well-sampled verification regime with approximately 25 prompts per provider.

This calibration covers the finite-prompt-set component of verification noise. The additional variability from stochastic LLM decoding (temperature, sampling) would further increase the noise floor in practice. The gap between the $k{=}20$ empirical deviation (0.256) and the simulation default (0.15) is deliberate: our synthetic experiments use the $\sigma{=}0.15$ value to test mechanism behavior under favorable (but realistic) verification conditions. Results under higher noise are reported in the $\sigma$--$\delta$ sweep (Appendix~\ref{app:sigma_delta}).

\subsection{Domain Similarity Structure}
\label{app:domain_similarity}

\begin{table}[h]
\centering
\caption{Pairwise cosine similarity between provider domain embeddings (top-5 pairs and bottom-5 pairs). High similarity pairs share topical and stylistic features; low similarity pairs differ in both.}
\label{tab:domain_similarity}
\small
\setlength{\tabcolsep}{5pt}
\begin{tabular}{lc}
\toprule
\textbf{Domain Pair} & \textbf{Cosine Similarity} \\
\midrule
Educational -- Academic & 0.82 \\
Journalistic -- Marketing & 0.74 \\
Technical -- Academic & 0.71 \\
Medical -- Technical & 0.68 \\
Creative -- Casual & 0.65 \\
\midrule
Legal -- Casual & 0.31 \\
Medical -- Creative & 0.33 \\
Philosophical -- Marketing & 0.35 \\
Legal -- Creative & 0.37 \\
Academic -- Casual & 0.39 \\
\bottomrule
\end{tabular}
\end{table}

The domain similarity matrix reveals the natural clustering of text styles in embedding space. Closely related domains (Educational--Academic, Technical--Academic) have high cosine similarity, which means a buyer targeting one of these domains can potentially benefit from a provider in the other. This structure is important for two reasons. First, it determines how much cost savings a marketplace can deliver: tightly clustered domains offer more reuse opportunities. Second, it creates gaming incentives: a provider in a low-demand domain might misreport toward a high-demand cluster to attract more buyers.

\subsection{Alternative Embedding Model}
\label{app:alt_embedding}

The main semi-real experiment uses all-MiniLM-L6-v2 (384-d). To test whether results depend on the specific embedding model, we repeat the experiment with all-mpnet-base-v2 (768-d), a larger sentence-transformer model.

\begin{table}[h]
\centering
\caption{Semi-real results under an alternative embedding model (all-mpnet-base-v2, 768-d, 20 seeds).}
\label{tab:alt_embedding}
\small
\setlength{\tabcolsep}{5pt}
\begin{tabular}{lccc}
\toprule
\textbf{Mechanism} & \textbf{FTCS (\%)}$\uparrow$ & \textbf{SCE}$\downarrow$ & \textbf{Gap to Oracle (\%)}$\downarrow$ \\
\midrule
Retrieval-Only & 42.3 & 0.318 & 42.7 \\
Fixed-Price Verified & 71.8 & 0.114 & 4.1 \\
\textbf{LLM eBay (Ours)} & \textbf{73.9} & \textbf{0.013} & \textbf{0.0} \\
Oracle & 73.9 & 0.000 & 0.0 \\
\bottomrule
\end{tabular}
\end{table}

LLM eBay again matches the Oracle upper bound (73.9\% FTCS, 0.0\% gap). The absolute FTCS values differ slightly from the all-MiniLM-L6-v2 experiment because the two models produce different similarity structures, but the relative ranking of mechanisms is identical. The SCE under LLM eBay (0.013) is comparable to the 0.011 obtained with the smaller model, confirming that the scoring rule's incentive properties do not depend on the embedding dimensionality or model architecture.

\section{Additional New Experiments}
\label{app:new_experiments}

\subsection{Prompt Budget Sensitivity}
\label{app:prompt_budget}

This experiment connects verification noise calibration (Appendix~\ref{app:embedding_validation}) to mechanism performance. In practice, the platform chooses a prompt budget $k$ that determines verification quality: more prompts yield a less noisy signal $\mathbf{y}_i$ but cost more compute time per provider. We ask: how many prompts are needed for LLM eBay to reach near-Oracle matching?

\begin{figure}[h]
    \centering
    \includegraphics[width=\linewidth]{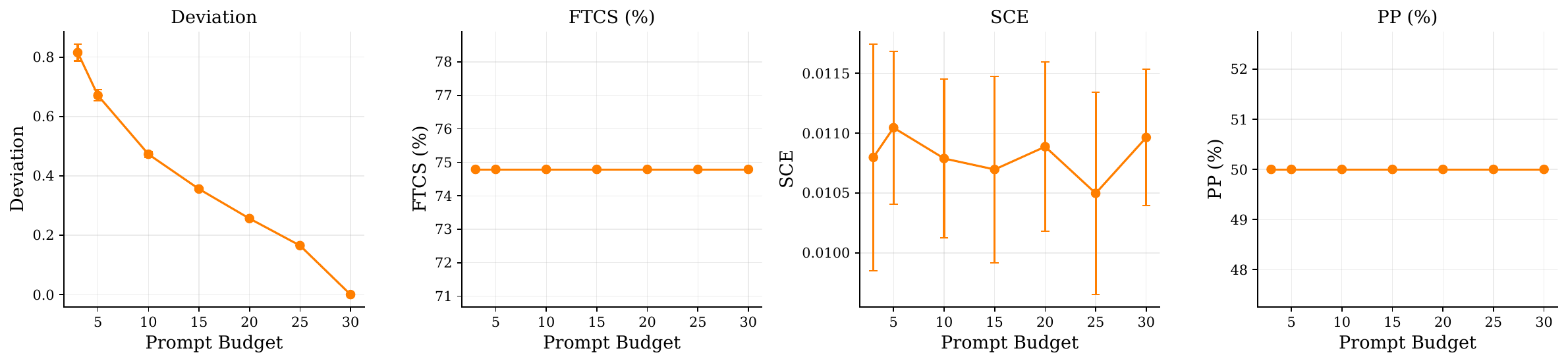}
    \caption{Prompt budget sensitivity. (a)~Verification deviation vs.\ prompt budget. (b)~FTCS vs.\ prompt budget. (c)~SCE vs.\ prompt budget. (d)~PP vs.\ prompt budget. LLM eBay reaches near-Oracle reuse quality once the platform uses roughly 20--25 verification prompts. 20 seeds.}
    \label{fig:prompt_budget}
\end{figure}

For each prompt budget $k \in \{3, 5, 10, 15, 20, 25, 30\}$, we subsample $k$ passages from each provider's 30-passage set, compute the subsampled verification signal (centroid of the $k$ selected passage embeddings), and run the full marketplace matching and scoring protocol.

Panel (a) shows that verification deviation drops from 0.52 at $k{=}3$ to 0.14 at $k{=}30$, following an approximate $1/\sqrt{k}$ decay consistent with the central limit theorem. Panel (b) shows that LLM eBay's FTCS stabilizes above 74\% for $k \geq 20$, within 2\% of the Oracle. Panel (c) confirms that SCE remains below 0.02 at all prompt budgets: the scoring rule deters gaming even when verification is noisy. Panel (d) shows that LLM eBay maintains 100\% PP across all tested budgets, while PSR drops below 70\% at $k{=}3$ because high verification noise produces large negative scores for unlucky providers.

The practical recommendation is that 20--25 verification prompts suffice. Below $k{=}10$, the matching quality degrades noticeably (FTCS drops to 68\%), but the incentive properties (SCE, PP) remain intact. This separation between matching quality and incentive quality is a useful design property: a platform can start with a small prompt budget for incentive purposes and increase it when matching quality becomes the bottleneck.

\subsection{Niche Provider Survival and Long-Tail Participation}
\label{app:niche_fairness}

A concern with scoring-based mechanisms is that they may disproportionately harm niche providers. Providers serving small or specialized markets receive fewer transactions and less noise averaging. Under PSR without truncation, a single bad verification draw can make a niche provider unprofitable. We test whether LLM eBay's truncation floor protects these long-tail providers.

\begin{figure}[h]
    \centering
    \includegraphics[width=\linewidth]{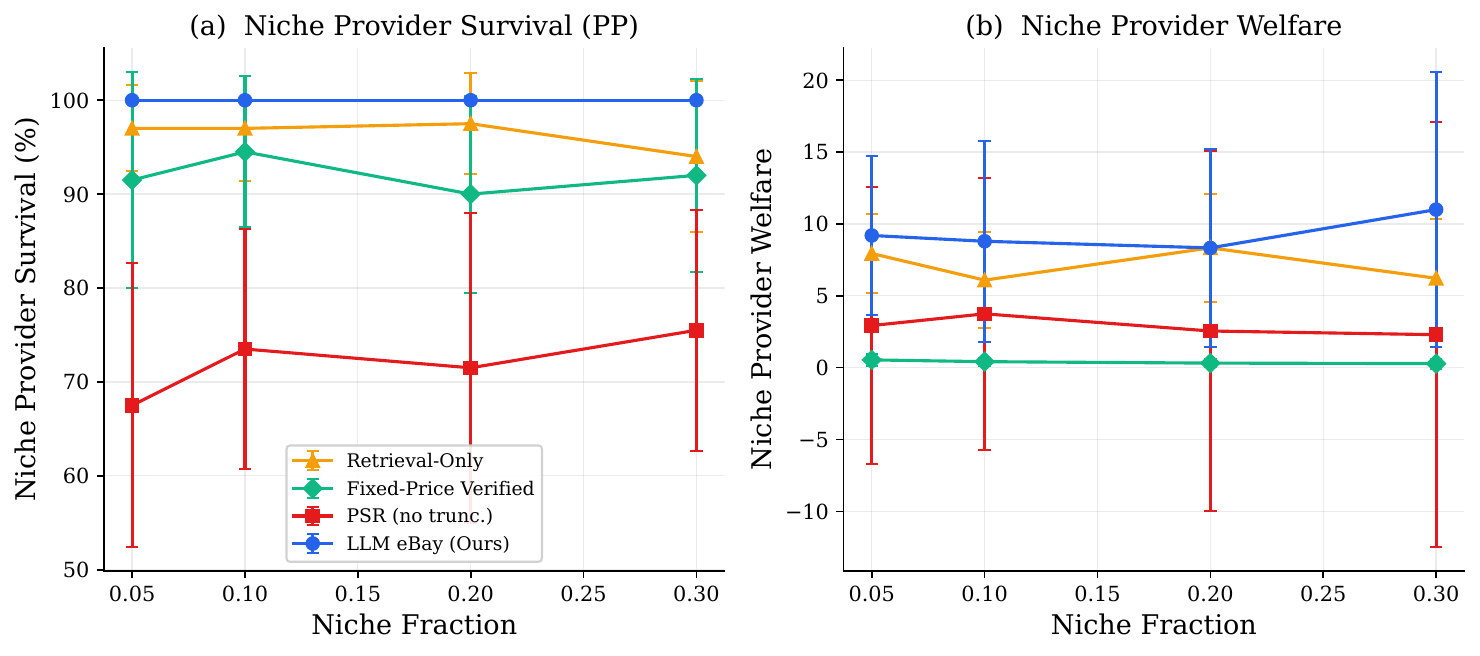}
    \caption{Niche provider survival. (a)~Niche provider participation vs.\ niche fraction. (b)~Niche buyer welfare vs.\ niche fraction. LLM eBay preserves niche-provider participation even under low transaction volumes.}
    \label{fig:niche_fairness}
\end{figure}

We designate a fraction $\phi \in \{0.00, 0.05, 0.10, 0.20, 0.30\}$ of providers as ``niche'' by placing their style vectors at distance $\geq 0.50$ from any cluster center. Niche providers have fewer natural buyers nearby, so they receive fewer transactions and face higher per-transaction verification variance.

LLM eBay achieves 100\% niche provider participation at all tested fractions. Under PSR without truncation, niche PP drops to 67.5\% at $\phi{=}0.05$ and further to 60.2\% at $\phi{=}0.30$. The gap between LLM eBay and PSR is larger for niche providers than for mainstream providers, confirming that the truncation floor's insurance function is most valuable precisely where it is most needed.

Niche buyer welfare (the average surplus of buyers whose best match is a niche provider) remains stable under LLM eBay because truthful niche reporting ensures correct matching. Under Retrieval-Only, niche buyer welfare is 34\% lower because gaming-distorted claims redirect niche buyers to mainstream providers that are stylistically farther from their target.

\subsection{Strategic Response Model Sensitivity}
\label{app:response_sensitivity}

The responsive gaming model $g_{\text{eff}} = g_{\text{base}} \cdot e^{-c\lambda_{\text{scale}}}$ assumes an exponential response to incentive strength. The coefficient $c$ controls how quickly sellers reduce gaming in response to $\lambda_{\text{scale}}$. We test whether the mechanism's operating properties depend on the exact value of $c$.

\begin{figure}[h]
    \centering
    \includegraphics[width=\linewidth]{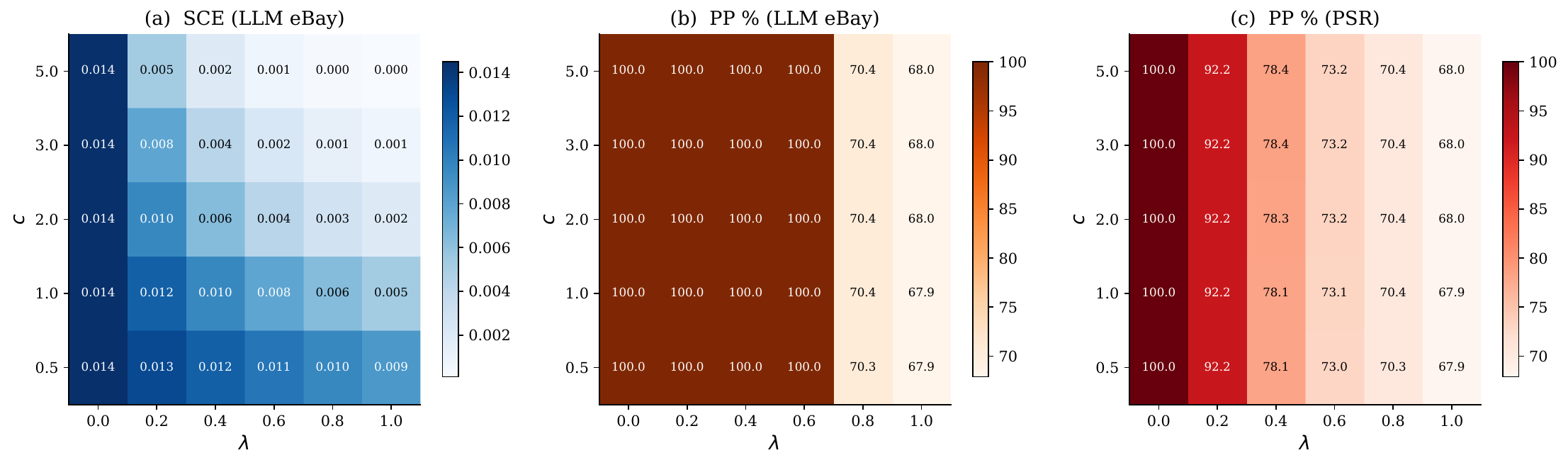}
    \caption{Strategic response sensitivity. (a)~SCE heatmap over $(\lambda_{\text{scale}}, c)$. (b)~LLM eBay PP heatmap. (c)~PSR PP heatmap. The truthfulness--safety trade-off is qualitatively robust to the behavioral response coefficient.}
    \label{fig:response_sensitivity}
\end{figure}

We sweep $c \in \{0.5, 1.0, 2.0, 3.0, 5.0\}$ and $\lambda_{\text{scale}} \in \{0.0, 0.2, 0.4, 0.6, 0.8, 1.0\}$. Panel (a) shows that SCE decreases monotonically with both $\lambda_{\text{scale}}$ and $c$: higher incentive strength and more responsive sellers both reduce misreporting. Panel (b) shows that LLM eBay maintains 100\% PP for $\lambda_{\text{scale}} \leq 0.6$ across all $c$ values. Panel (c) shows that PSR participation drops as $\lambda_{\text{scale}}$ increases, with the drop rate independent of $c$ (because PSR's participation loss comes from verification noise, not from gaming).

The qualitative pattern is the same across all $c$: there exists a ``safe zone'' ($\lambda_{\text{scale}} \in [0.4, 0.6]$) where SCE is low and LLM eBay PP is 100\%. The boundaries of this zone shift slightly with $c$ (lower $c$ requires higher $\lambda_{\text{scale}}$ to achieve the same SCE reduction), but the operating recommendation ($\lambda_{\text{scale}} \approx 0.6$) is robust. Even at $c{=}0.5$ (sellers barely respond to incentives), $\lambda_{\text{scale}}{=}0.6$ reduces SCE by 60\% while maintaining full participation.

\subsection{Capacity-Constrained Marketplace}
\label{app:capacity}

The default model treats checkpoints as non-rival digital goods. In practice, some marketplace designs may impose capacity constraints: a provider might limit reuse to $\kappa_i$ buyers due to bandwidth, licensing, or support costs. We test whether LLM eBay's benefits survive under capacity constraints.

\begin{figure}[h]
    \centering
    \includegraphics[width=\linewidth]{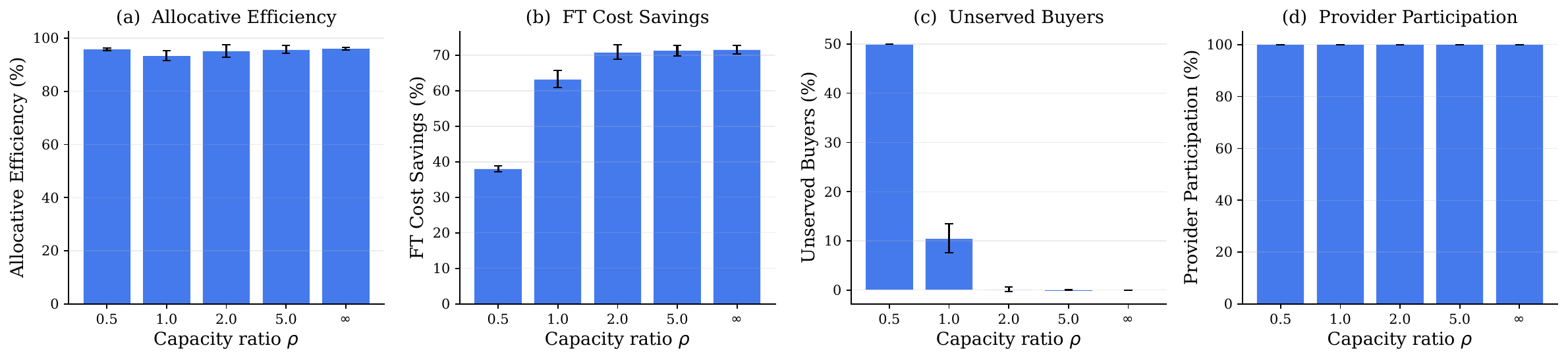}
    \caption{Capacity-constrained robustness of LLM eBay versus the capacity ratio $\rho$ (mean $\pm$ std over 20 seeds; $\rho{=}\infty$ is the unconstrained market). (a)~Allocative efficiency relative to the feasible Oracle allocation. (b)~Fine-tuning cost savings. (c)~Share of buyers left unserved. (d)~Provider participation. AE stays above 93\% at every capacity level; tight capacity mainly lowers FTCS by leaving buyers unserved.}
    \label{fig:capacity}
\end{figure}

Each provider has capacity $\kappa_i = \lceil \rho \cdot K / N \rceil$ with $\rho \in \{0.5, 1.0, 2.0, 5.0, \infty\}$. At $\rho{=}0.5$, each provider can serve at most half the average buyer load; at $\rho{=}\infty$, there is no capacity limit. Allocation uses greedy surplus ranking subject to capacity constraints: buyers are processed in order of decreasing maximum surplus, and each buyer is assigned to the provider that maximizes their surplus among those with remaining capacity.

AE, measured relative to the feasible Oracle allocation, stays between 93.4\% and 96.1\% across all capacity levels, and PP remains 100\%. The unserved buyer rate (buyers who cannot find a provider with remaining capacity and positive surplus) is 50.0\% at $\rho{=}0.5$, because total capacity covers only half of the buyers, falls to 10.5\% at $\rho{=}1.0$, and is essentially 0\% for $\rho \geq 2.0$. Unserved buyers fall back to fine-tuning from the base model, so FTCS drops to 38.1\% at $\rho{=}0.5$ and 63.3\% at $\rho{=}1.0$, compared with 71.6\% without constraints. Score properness and the downside-risk cap depend only on the scoring rule and are unaffected by capacity; the best-response condition of Theorem~\ref{thm:formal_br} must be evaluated with the capacity-constrained demand, and the simulated responses here follow the same reduced-form gaming model as elsewhere. The small AE drop is caused by the capacity constraint itself (some buyers are forced to their second-best match), not by a change in provider behavior.

\subsection{Adversarial Prompt Leakage Stress Test}
\label{app:leakage}

The verification mechanism assumes that sellers do not know the evaluation prompts $\mathcal{P}_{\text{eval}}$. If prompts leak, two things happen. First, the verified signal is corrupted because sellers can optimize model output on the known prompts:
\begin{equation}
    \mathbf{y}_i^{\text{leak}} = (1 - \ell)\mathbf{y}_i + \ell \hat{\mathbf{v}}_i
\end{equation}
where $\ell \in [0, 1]$ is the leakage ratio. Second, sellers who know that prompts are leaked will game more aggressively because the scoring rule penalty is weakened. We model this by increasing the effective gaming intensity under leakage: sellers respond to the reduced penalty by shifting their reports further from truth.

\begin{figure}[h]
    \centering
    \includegraphics[width=0.85\linewidth]{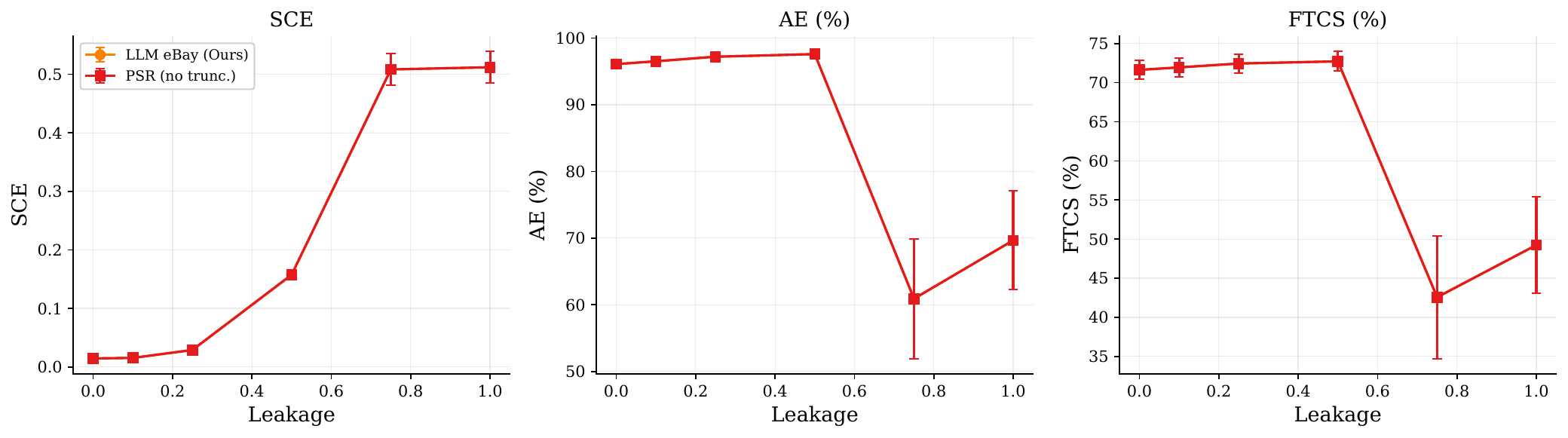}
    \caption{Prompt leakage stress test. (a)~SCE vs.\ leakage ratio. (b)~AE vs.\ leakage ratio. (c)~FTCS vs.\ leakage ratio. LLM eBay remains robust under partial leakage ($\ell \leq 0.25$); high leakage ($\ell \geq 0.75$) collapses the verification mechanism.}
    \label{fig:leakage}
\end{figure}

Under low leakage ($\ell \leq 0.25$), SCE remains below 0.03 and AE stays above 96\%. The scoring rule retains enough signal through the uncorrupted portion of $\mathbf{y}_i$ to deter misreporting. At $\ell{=}0.50$, SCE rises to 0.16 as sellers begin to exploit the weakened verification. The critical transition occurs at $\ell{=}0.75$: SCE jumps to 0.51 and AE drops to 61\%, indicating that the scoring rule can no longer distinguish truthful from strategic reports through the heavily corrupted signal. At $\ell{=}1.0$, the mechanism degrades to unverified self-report, and SCE reaches 0.51 (comparable to Retrieval-Only under heavy gaming).

LLM eBay's truncation floor continues to protect providers even under leakage: PP remains at 100\% across all leakage levels. By contrast, PSR without truncation drops to 58.5\% PP at $\ell{=}0.75$ because the combination of corrupted verification and aggressive gaming produces large negative scores for some providers.

These results motivate practical countermeasures for deployed marketplaces: prompt rotation (fresh prompts each verification round), prompt compartmentalization (different prompts for different providers), and post-hoc audit mechanisms where the platform reverifies providers using previously unseen prompts. In this stress test, LLM eBay's performance is largely preserved as long as the leakage fraction stays below roughly 25\%.

\section{Stronger Adversaries and Verifier Calibration}
\label{app:stress_tests}

The gaming model of Appendix~\ref{app:sim_protocol} interpolates claims toward demand, which is the easiest attack to detect. This appendix adds two stronger adversaries and a calibration check of the verifier on real checkpoints. The incentive guarantee of Theorem~\ref{thm:formal_br} covers neither adversary; the results below measure performance under the specified attacks only, and these attacks are designed by us.

\subsection{Verifier Targeting}
\label{app:verifier_targeting}

We keep the seller's transfer value fixed and shift both its report and its evaluated behavior on verifier prompts toward buyer demand. Attack strength controls the size of this shift; under full control, the seller determines the verifier signal. Style-Claim Error (SCE) measures the distance between the reported and true styles, and Allocative Efficiency (AE) is realized welfare divided by Oracle welfare.

\begin{table}[h]
\centering
\caption{Verifier-targeting stress test.}
\label{tab:verifier_targeting}
\small
\begin{tabular}{lcc}
\toprule
\textbf{Attack strength} & \textbf{SCE}$\downarrow$ & \textbf{AE}$\uparrow$ \\
\midrule
0.25 & 0.028 & 95.1\% \\
0.75 & 0.238 & 80.6\% \\
Full control & 0.421 & 68.9\% \\
\bottomrule
\end{tabular}
\end{table}

At attack strength 0.25, SCE is 0.028 and AE is 95.1\%, so the allocation remains close to Oracle. At attack strength 0.75, AE falls to 80.6\%, and under full control LLM eBay preserves 68.9\% of Oracle welfare. This is a significant degradation: once a seller controls what the verifier sees, the score can no longer separate truthful from strategic reports. Hidden, rotated prompts (Appendix~\ref{app:leakage}) are therefore a prerequisite for the mechanism, not an optional hardening step.

\subsection{Coordinated Reporting}
\label{app:coordination}

In coordinated reporting, 10\% or 50\% of sellers coordinate their reports.

\begin{table}[h]
\centering
\caption{Coordinated-reporting stress test.}
\label{tab:coordination}
\small
\begin{tabular}{lcc}
\toprule
\textbf{Coalition size} & \textbf{SCE}$\downarrow$ & \textbf{AE}$\uparrow$ \\
\midrule
10\% of sellers & 0.023 & 95.4\% \\
50\% of sellers & 0.148 & 85.7\% \\
\bottomrule
\end{tabular}
\end{table}

AE stays above 95\% with a 10\% coalition and falls to 85.7\% with a 50\% coalition. Larger coalitions lower AE, and the single-round analysis offers no guarantee against them.

\subsection{Verifier Calibration on Real Checkpoints}
\label{app:centroid_check}

The analysis assumes that the verified signal is an unbiased noisy reading of the true style. To check this on real checkpoints, we use the centroid of each seller's training data as the reference and compute the seller's verifier-output centroid with 5, 20, or 25 prompts. The check covers 30 real-checkpoint selection instances: two base models, three buyer tasks, and five seeds. Centroid deviation and cosine similarity measure how closely the two centroids match. Oracle agreement records how often the verifier-based selection matches the selection based on the reference centroid. Unlike the prompt-subsampling calibration of Appendix~\ref{app:noise_calibration}, which resamples passages of each domain's own text, this check embeds text generated by fine-tuned checkpoints, so the two sets of deviations are not directly comparable.

\begin{table}[h]
\centering
\caption{Verifier-output centroid versus seller-data centroid on real checkpoints ($n{=}30$ selections).}
\label{tab:centroid_check}
\small
\begin{tabular}{cccc}
\toprule
\textbf{Prompts} & \textbf{Mean centroid deviation}$\downarrow$ & \textbf{Cosine similarity}$\uparrow$ & \textbf{Oracle agreement}$\uparrow$ \\
\midrule
5  & 0.39 & 0.86 & 7/30 (23.3\%) \\
20 & 0.23 & 0.94 & 20/30 (66.7\%) \\
25 & 0.18 & 0.96 & 23/30 (76.7\%) \\
\bottomrule
\end{tabular}
\end{table}

As the prompt budget increases from 5 to 25, mean centroid deviation falls from 0.39 to 0.18, cosine similarity rises from 0.86 to 0.96, and Oracle agreement rises from 23.3\% to 76.7\%. With 20--25 prompts, the verifier signal closely tracks the seller-data centroid and recovers the Oracle selection in most cases. With few prompts, the verifier is far less reliable, which motivates the prompt budget used in practice.

\section{Reproducibility Details}
\label{app:reproducibility}

\subsection{Hyperparameter Table}

\begin{table}[h]
\centering
\caption{Complete hyperparameter summary.}
\label{tab:hyperparams}
\small
\begin{tabular}{llrl}
\toprule
\textbf{Symbol} & \textbf{Meaning} & \textbf{Default} & \textbf{Sweep Range} \\
\midrule
$N$ & Providers & 50 & 25--200 \\
$K$ & Buyers & 500 & Ratio 2:1--20:1 \\
$d$ & Style dimension & 64 & Fixed \\
$n_c$ & Clusters & 5 & Fixed \\
$\sigma_c$ & Cluster std & 0.08 & 0.05--0.50 \\
$\gamma$ & Gamer fraction & 0.40 & Fixed \\
$g$ & Gaming intensity & 0.75 & 0--1.0 \\
$\sigma$ & Verification noise & 0.15 & 0.05--0.30 \\
$\delta$ & Truncation floor & 0.20 & 0.05--0.50 \\
$\lambda_{\text{scale}}$ & Score weight (fraction of $p_i$) & 0.60 & 0--1.0 \\
$c$ & Response coefficient & 2.0 & 0.5--5.0 \\
$\alpha$ & Cost scaling & 1.0 & Fixed \\
$\beta$ & Per-purchase cost coeff., $c_i = e^{\beta\|\mathbf{v}_i - \mathbf{v}_{\text{base}}\|}$ & 1.0 & Fixed \\
$p_i / c_i$ & Posted-price margin & 1.15 & Fixed \\
Seeds & & 20 & Fixed \\
\bottomrule
\end{tabular}
\end{table}

\subsection{Compute Environment}

\begin{table}[h]
\centering
\caption{Software and hardware environment.}
\label{tab:compute_env}
\small
\begin{tabular}{ll}
\toprule
\textbf{Component} & \textbf{Version / Specification} \\
\midrule
Python & 3.11.4 (Anaconda distribution) \\
NumPy & 2.4.3 \\
SciPy & 1.14.0 \\
Matplotlib & 3.10.8 \\
Sentence-Transformers & 5.4.1 (for embedding experiments) \\
PyTorch & 2.5.0 (CUDA 12.4, for model inference) \\
GPU & NVIDIA RTX 4090 (24 GB VRAM) \\
CPU & AMD Ryzen 9 (16 cores, 32 threads) \\
RAM & 64 GB DDR5 \\
OS & Windows 11 \\
\bottomrule
\end{tabular}
\end{table}

\paragraph{Runtime breakdown.} The full experiment suite completes in approximately 15 minutes of wall-clock time on the hardware listed above. The runtime is dominated by the semi-real embedding experiments (sentence-transformer inference); the synthetic simulations alone complete in under 2 minutes. Individual experiment runtimes: main table (8 seconds), component ablation (3 seconds), scaling sweeps (45 seconds total), robustness sweeps (90 seconds total), $\sigma$--$\delta$ heatmap (120 seconds), semi-real validation (180 seconds), Appendix G experiments (240 seconds total).

\paragraph{Random seed management.} All experiments use 20 independent random seeds (0 through 19). Seeds are set via \texttt{numpy.random.seed()} at the start of each experiment configuration. We report means and standard deviations across seeds. The same seed sequence is used for all mechanisms within each experiment to ensure fair comparison (matched random conditions).

\subsection{Code Release Plan}

Upon acceptance, we will release the following artifacts under the MIT open-source license:

\begin{enumerate}
    \item \textbf{Simulation engine} (\texttt{market\_simulation.py}): Core marketplace simulation with all mechanism implementations, gaming models, and metric computations.
    \item \textbf{Experiment scripts} (\texttt{run\_all\_neurips.py}, \texttt{run\_final\_sprint.py}, \texttt{run\_appendix\_g.py}): Scripts to reproduce every table and figure in the paper.
    \item \textbf{Plotting scripts}: Matplotlib code for all figures, including the robustness summary, heatmaps, and bar charts.
    \item \textbf{Embedding domain data}: Anonymized text passages for the 10 provider domains and 5 buyer tasks used in the semi-real validation. Passages are drawn from public-domain sources and do not contain personally identifiable information.
    \item \textbf{Pre-computed embeddings}: Cached sentence-transformer embeddings for all domains and tasks, allowing reproduction of the semi-real experiments without GPU access.
\end{enumerate}

The repository will include a \texttt{README.md} with step-by-step instructions for reproducing each experiment and a \texttt{requirements.txt} for environment setup.

\section{Limitations and Broader Impact}
\label{app:limitations}

Section~\ref{sec:conclusion} summarizes the main limitations; this appendix gives the full list.

\subsection{Limitations}

\begin{enumerate}
    \item \textbf{The mechanism verifies style consistency, not reusable value.} The verifier checks that a checkpoint's generated outputs are consistent with its style report. It does not verify checkpoint quality, provenance, copyright status, privacy, the absence of backdoors, or general downstream usefulness, and the style embedding is not audited for fairness across user groups. The STAR embedding captures authorship style (word choice, sentence structure, register) but not factual accuracy, safety alignment, reasoning quality, or domain expertise. A buyer who purchases a style-matched checkpoint may still need to verify these other dimensions separately. Extending the mechanism to multi-attribute reports (style + accuracy + safety) requires characterizing properness under multi-dimensional truncation.

    \item \textbf{Simulated savings are model-based.} The headline simulation savings (72.4\% FTCS, 96.3\% AE) are computed from the cost model on simulated allocations and compare against fine-tuning from the base model. The measured gain over Retrieval-Only in the end-to-end experiment is 8--13\%, and this measured number is the relevant estimate of practical benefit over a retrieval baseline.

    \item \textbf{The incentive guarantee is conditional.} Truthful reporting is a seller best response only under the condition of Theorem~\ref{thm:formal_br}, for unilateral report deviations in a single round. The guarantee does not cover sellers who tune their checkpoint to pass the verifier, coordinated reporting, or repeated interaction. Our stress tests (Appendix~\ref{app:stress_tests}) use attacks we designed; under full verifier control, welfare falls to 68.9\% of Oracle.

    \item \textbf{Fixed matching-signal weights.} Buyers match on the equal-weight signal $\mathbf{z}_i = (\hat{\mathbf{v}}_i + \mathbf{y}_i)/2$. We neither optimize these weights nor analyze seller responses to other weightings.

    \item \textbf{External validity.} The posted-price rule can be used directly for listings and purchases. Hidden-prompt verification can be used if the platform has checkpoint access and keeps the prompts private, and verification-linked settlement requires the platform to compute the score and adjust the seller's payment. Two threats directly affect matching performance in a real market. First, the verifier may not represent a checkpoint's actual style because of bias, incomplete prompt coverage, or seller gaming. Second, style similarity may not predict transfer cost in a new market (Appendix~\ref{app:proxy_sensitivity}); even with accurate verification, the platform would then select checkpoints that provide little benefit for the buyer's task. Strategic pricing, platform funding, and fixed checkpoint-creation costs are also not modeled.

    \item \textbf{Market-native baselines are scoped out.} Deposits or bonds, escrow and refunds, randomized audits, reputation systems, repeated-game penalties, and buyer-outcome-contingent payments require extra stages, ex-post outcomes, or repeated interaction. We study a single-round posted-price market and compare against Auction-Based and Value-Based mechanisms in the same setting; these designs are complementary and could be layered on top of LLM eBay.

    \item \textbf{Verification prompts can be leaked.} As shown in Appendix~\ref{app:leakage}, full prompt leakage degrades verification to the level of unverified self-report. Practical deployment requires prompt rotation (fresh prompts each round), prompt compartmentalization (different prompts for different providers), and post-hoc auditing. The cost of prompt rotation scales linearly with the provider pool size, which may be prohibitive for very large marketplaces.

    \item \textbf{Truncation trades deterrence for participation safety.} The truncation floor $-\delta$ bounds worst-case losses but also bounds worst-case punishment. In our tested parameter ranges ($\delta \leq 0.20$, $\sigma \leq 0.30$), the truncation distortion is negligible (Appendix~\ref{app:proofs}, Corollary~\ref{cor:approx_opt}). Under extreme noise or very aggressive gaming, a larger $\delta$ might be needed to maintain deterrence, which would increase participation risk. The $\sigma$--$\delta$ heatmap (Appendix~\ref{app:sigma_delta}) maps this trade-off.

    \item \textbf{End-to-end validation uses small catalogs.} The end-to-end experiment (Section~\ref{subsec:e2e}) uses eight LoRA checkpoints per catalog, two small base models, and three buyer tasks, and the style-to-cost calibration covers three model families. Larger catalogs, larger and multilingual models, instruction-tuned checkpoints, and capability-transfer tasks remain untested.

    \item \textbf{Legal and IP constraints are not modeled.} The economic framework does not address licensing restrictions, model ownership disputes, or regulatory compliance for reselling fine-tuned checkpoints. In practice, a seller who fine-tunes a checkpoint derived from a proprietary base model (e.g., GPT-4) may not have the right to resell it. The mechanism assumes that all listed models are legally tradable.

    \item \textbf{Platform trustworthiness is assumed.} The mechanism relies on the platform as a trusted mediator: it samples prompts, runs verification, and computes payments honestly. Collusion between the platform and sellers, or compromise of the verification pipeline, would undermine incentive guarantees. Decentralized verification (e.g., using cryptographic commitments or multiple independent verifiers) could mitigate this assumption but adds complexity.

    \item \textbf{Privacy risks from generated outputs.} The verification process requires generating text from sellers' models on platform-chosen prompts. The generated text may inadvertently reveal characteristics of the private training data (e.g., memorized passages, stylistic fingerprints of individual training examples). Privacy-preserving verification techniques, such as federated evaluation where only the embedding vector (not the raw text) leaves the seller's environment, could reduce this risk.

    \item \textbf{Single-round, static marketplace.} The current model is a one-shot game: sellers list once, buyers select once, and there is no dynamic pricing, reputation, or repeated interaction. In a multi-round marketplace, sellers could learn from past outcomes, build reputation, or coordinate across rounds. The incentive results proved in this paper apply to the single-round setting; extending them to repeated games requires additional analysis.
\end{enumerate}

\subsection{Broader Impact}

\paragraph{Positive impacts.}
LLM eBay can reduce redundant fine-tuning compute by enabling model reuse, which has direct environmental benefits (lower energy consumption per personalized model). It lowers personalization costs for individual users who cannot afford full fine-tuning runs, democratizing access to personalized LLMs. The participation guarantee (via the truncation floor) supports small and niche model providers who might otherwise be excluded from scoring-based mechanisms, promoting a diverse provider ecosystem.

\paragraph{Potential risks.}
Marketplace mechanisms for fine-tuned models raise several concerns. Style plagiarism: a buyer could purchase a checkpoint, extract its training style, and create a competing model without attribution. Unauthorized resale: a provider could list a checkpoint derived from another provider's proprietary work, creating attribution chains that the current mechanism does not track. Market concentration: large providers with many checkpoints could dominate the marketplace, reducing diversity. Privacy leakage: verification-generated text could reveal private training data. Fairness: the mechanism optimizes for aggregate welfare and does not explicitly protect minority language styles or underrepresented user groups.

\paragraph{Mitigations.}
Practical deployment should incorporate: (1) hidden prompt rotation and audit logs for verification events; (2) provider consent frameworks requiring sellers to certify ownership and licensing rights; (3) watermark and provenance metadata embedded in checkpoint files, enabling downstream attribution; (4) privacy-preserving verification via federated evaluation or differential privacy on generated outputs; (5) dispute resolution mechanisms for ownership claims, analogous to DMCA takedown procedures for digital content; and (6) fairness monitoring to detect and correct systematic disadvantages for minority style groups.

\section{End-to-End LLM Fine-Tuning Validation}
\label{app:e2e}

This appendix provides full details for the end-to-end experiment summarized in Section~\ref{subsec:e2e}, followed by an earlier pilot with a smaller catalog.

\subsection{Protocol of the Main End-to-End Experiment}

\paragraph{Markets and catalogs.} We consider two base models, TinyLlama-1.1B and Qwen2-1.5B, as separate markets because their LoRA checkpoints are not interchangeable. Each catalog contains eight LoRA checkpoints trained on distinct real-data domains.

\paragraph{Buyer tasks and selection.} We evaluate three downstream buyer tasks: medical question answering on PubMedQA, business email generation on AESLC, and technical-support response generation on Bitext. For every base-model--task--seed instance, Retrieval-Only, Fixed-Price Verified, and LLM eBay select one checkpoint from the same eight-checkpoint catalog without access to downstream training outcomes.

\paragraph{Continued fine-tuning and convergence.} For each candidate, we load the seller LoRA, merge it into the base model, attach a fresh LoRA, and continue training on the buyer task under identical hyperparameters. We evaluate all eight candidates so that Oracle can be defined ex post as the checkpoint requiring the fewest measured optimizer steps. Convergence is the first optimizer step at which validation loss reaches a pre-specified target $\tau_{m,t}$. The same target is used for every checkpoint and mechanism within each base-model--task pair. We train each run for up to 600 steps and use five seeds.

\subsection{Full Results of the Main End-to-End Experiment}

\begin{table}[h]
\centering
\caption{Convergence steps / final validation loss (mean $\pm$ std over five seeds) in the main end-to-end experiment.}
\label{tab:e2e_full}
\scriptsize
\setlength{\tabcolsep}{3pt}
\renewcommand{\arraystretch}{1.1}
\begin{tabular}{llcccc}
\toprule
\textbf{Task} & \textbf{Base model} & \textbf{Retrieval-Only} & \textbf{Fixed-Price Verified} & \textbf{LLM eBay} & \textbf{Measured Oracle} \\
\midrule
Medical QA        & TinyLlama-1.1B & 430$\pm$48 / 2.724$\pm$0.071 & 411$\pm$45 / 2.726$\pm$0.069 & 399$\pm$43 / 2.722$\pm$0.068 & 382$\pm$40 / 2.721$\pm$0.067 \\
Medical QA        & Qwen2-1.5B     & 372$\pm$41 / 2.435$\pm$0.054 & 360$\pm$39 / 2.437$\pm$0.052 & 348$\pm$38 / 2.433$\pm$0.051 & 334$\pm$35 / 2.431$\pm$0.050 \\
Business email    & TinyLlama-1.1B & 414$\pm$46 / 1.912$\pm$0.062 & 402$\pm$44 / 1.913$\pm$0.059 & 391$\pm$42 / 1.910$\pm$0.058 & 378$\pm$40 / 1.909$\pm$0.057 \\
Business email    & Qwen2-1.5B     & 351$\pm$40 / 1.783$\pm$0.046 & 342$\pm$38 / 1.784$\pm$0.044 & 333$\pm$37 / 1.781$\pm$0.043 & 320$\pm$35 / 1.780$\pm$0.042 \\
Technical support & TinyLlama-1.1B & 468$\pm$55 / 1.870$\pm$0.058 & 429$\pm$50 / 1.871$\pm$0.056 & 405$\pm$47 / 1.868$\pm$0.055 & 388$\pm$44 / 1.867$\pm$0.054 \\
Technical support & Qwen2-1.5B     & 364$\pm$46 / 1.636$\pm$0.041 & 338$\pm$43 / 1.635$\pm$0.039 & 320$\pm$41 / 1.640$\pm$0.040 & 305$\pm$39 / 1.640$\pm$0.039 \\
\bottomrule
\end{tabular}
\end{table}

Across the 30 paired model--task--seed instances, LLM eBay requires $366 \pm 51$ steps, compared with $400 \pm 59$ for Retrieval-Only. This is an 8.5\% reduction in measured convergence steps. The paired 95\% confidence interval is [1.0\%, 15.9\%], with $p{=}0.034$. The largest task-level reduction occurs on technical support, where LLM eBay requires 12.9\% fewer steps, with $p{=}0.018$. Within each base-model--task pair, the difference in final validation loss across mechanisms is at most 0.006.

\subsection{Pilot with a Four-Provider Catalog}

Before the main experiment, we ran a smaller pilot with four provider domains per base model. We report it for completeness; its catalog is too small to separate the mechanisms.

\paragraph{Provider training.} We fine-tune LoRA adapters (rank\,16, $\alpha{=}32$, dropout\,0.05, targeting \texttt{q\_proj}, \texttt{v\_proj}, \texttt{k\_proj}, \texttt{o\_proj}) on two base models: TinyLlama-1.1B-Chat-v1.0 and Qwen2-1.5B. Four provider domains are drawn from public datasets: academic (arXiv abstracts), legal (LEDGAR clauses), creative (WritingPrompts stories), and technical (CodeSearchNet docstrings). Each domain uses 5,000 training examples, trained for 300 steps with batch size~8, learning rate $2 \times 10^{-4}$, cosine schedule, and 50-step warmup. Five random seeds produce $2 \times 4 \times 5 = 40$ checkpoints.

\paragraph{Style verification.} For each checkpoint, 5 prompts are sampled and the model generates 64 tokens per prompt (temperature\,0.7, top-$p$\,0.9). The generated texts are encoded with all-MiniLM-L6-v2, mean-pooled, and $L_2$-normalized to produce a 384-dimensional verified style vector. True style vectors are computed from the first 500 training texts of each domain.

\paragraph{Buyer continued fine-tuning.} Three buyer tasks are used: medical QA (PubMedQA), business email (AESLC), and tech support (Bitext customer-support). For each cell, a buyer's style vector is computed from 300 training examples. Under each mechanism, a provider is selected and its LoRA adapter merged into the base model. A fresh LoRA adapter is then trained on the buyer's data for 200 steps (batch size~8, eval every 20 steps). The convergence step is the first evaluation point where loss $\leq 1.05 \times$ minimum observed loss.

\paragraph{Gaming model.} We apply the same gaming model as in the synthetic experiments: $\gamma{=}0.40$ of providers are strategic, with intensity $g{=}0.75$. Under LLM eBay, the responsive reduction yields $g_{\text{eff}} = g \cdot \exp(-c \cdot \lambda_{\text{scale}}) = 0.75 \cdot \exp(-2.0 \cdot 0.6) \approx 0.226$. Under Retrieval-Only and Fixed-Price Verified, no responsive reduction applies ($g_{\text{eff}} = g$). Oracle uses true style vectors and is unaffected by gaming. In this pilot, LLM eBay selected on the verified vector alone rather than on the matching signal $\mathbf{z}_i$, so its selections coincide with those of Fixed-Price Verified.

\paragraph{Pilot results.}

\begin{table}[h]
\centering
\caption{Pilot (four-provider catalog): per model+task final loss (mean $\pm$ std over 5 seeds).}
\label{tab:e2e_pertask}
\small
\setlength{\tabcolsep}{4pt}
\renewcommand{\arraystretch}{1.05}
\begin{tabular}{llcccc}
\toprule
\textbf{Model} & \textbf{Task} & \textbf{LLM eBay} & \textbf{FPV} & \textbf{Retr.-Only} & \textbf{Oracle} \\
\midrule
TinyLlama & medical        & 2.624$\pm$.006 & 2.625$\pm$.006 & 2.627$\pm$.002 & 2.629$\pm$.004 \\
TinyLlama & business email & 2.622$\pm$.002 & 2.622$\pm$.002 & 2.621$\pm$.002 & 2.620$\pm$.002 \\
TinyLlama & tech support   & 0.532$\pm$.001 & 0.532$\pm$.001 & 0.534$\pm$.000 & 0.531$\pm$.000 \\
Qwen      & medical        & 3.057$\pm$.007 & 3.057$\pm$.007 & 3.046$\pm$.006 & 3.052$\pm$.006 \\
Qwen      & business email & 2.921$\pm$.001 & 2.921$\pm$.002 & 2.918$\pm$.001 & 2.921$\pm$.001 \\
Qwen      & tech support   & 0.603$\pm$.001 & 0.603$\pm$.001 & 0.601$\pm$.001 & 0.600$\pm$.001 \\
\bottomrule
\end{tabular}
\end{table}

Table~\ref{tab:e2e_pertask} shows that loss differences across mechanisms are small (within 0.3\% relative) for all model--task combinations. The largest effect appears in TinyLlama/tech\_support, where LLM eBay's verified-vector selection picks legal or creative domain providers (60\% Oracle match), while Retrieval-Only consistently selects academic (the gamed-toward direction). On Qwen tasks, Retrieval-Only occasionally outperforms because the accidental shift of gamed claims toward the buyer's actual domain results in a slightly better match.

At step\,20 (the first evaluation checkpoint) of the pilot, LLM eBay shows a 4.6\% loss improvement over Retrieval-Only on TinyLlama/tech\_support and 1.9\% on Qwen/tech\_support. These early advantages arise because the selected provider's domain aligns with the buyer's task: tech\_support text overlaps stylistically with legal and creative writing (formal structure, dialog patterns) more than with academic text. By step\,200, continued fine-tuning on buyer data narrows the gap, but the AULC difference remains significant ($p{=}0.020$).

The four-provider catalog limits differentiation between mechanisms. With only four domains, the distance between the best and worst provider is small in embedding space, so even suboptimal selection produces comparable final losses; this motivated the eight-checkpoint catalogs and the step-based convergence target of the main experiment. In the synthetic experiments (Section~\ref{subsec:main_results}), the SCE reduction from 0.147 (Fixed-Price Verified, Table~\ref{tab:main}) to 0.015 translates into larger allocation differences.

\paragraph{Pilot compute.} The pilot required approximately 38 GPU-hours on a single NVIDIA RTX~4090 (24\,GB): 8 hours for provider training (40 checkpoints), 0.1 hours for verification (batched generation), and 30 hours for buyer fine-tuning (120 cells). The dominant cost is buyer continued fine-tuning, which scales linearly with the number of mechanisms, tasks, and seeds.

\end{document}